\newcommand{\etal}{et al.\ }
\newcommand{\ignore}[1]{}
\definecolor{corlinks}{RGB}{64,128,128}
\definecolor{cormenu}{RGB}{0,37,94}
\definecolor{corurl}{RGB}{0,46,91}
\definecolor{darkgreen}{rgb}{0,0.5,0}
\newcommand{\E} {\mathbb{E}}
\newcommand{\B}{\mathbb{B}}
\newcommand{\C}{\mathbb{C}}
\newcommand{\poly}{\mathsf{poly}}
\newcommand{\F}{\mathbb{F}}
\renewcommand{\cal}[1]{\mathcal{#1}}
\newtheorem{fact}{Fact}[section]
\newtheorem{definition}[fact]{Definition}
\newtheorem{theorem}[fact]{Theorem}
\newtheorem{lemma}[fact]{Lemma}
\newtheorem{lem}[fact]{Lemma}
\newtheorem{proposition}[fact]{Proposition}
\newtheorem{problem}[fact]{Problem}
\newtheorem{claim}[fact]{Claim}
\newtheorem{property}[fact]{Property}
\newcommand{\Inote}[1]{}%\authnote{Ilan}{#1}{red}{blue}}
\newcommand{\Pnote}[1]{}%\authnote{Pravesh}{#1}{green}{red}}
\newcommand{\Mnote}[1]{}%\authnote{Madhu}{#1}{green}{red}}
\newcommand{\mmod}{~\mathrm{mod}~}
\newcommand{\bdd}[1]{{\sc RS-BDD}${(#1)}$}
\newcommand{\mss}[1]{{\sc MSS}${(#1)}$}
\newcommand{\symssum}[1]{{\sc SSS}${(#1)}$}
\newcommand{\atomicsolverof}[2]{\textsc{AtomicSolver}{($#1$,$#2$)}\xspace}
\newcommand{\atomicsolverproc}[2]{\textsc{AtomicSolver}{($#1$,$#2$, $R_{#1, #2}$)}\xspace}
\newcommand{\atomicsolver}{\textsc{AtomicSolver}\xspace}
\newcommand{\generator}{\textsc{AuxiliaryVariableGenerator}\xspace}
\newcommand{\ra}{\rightarrow}
\title{NP-Hardness of Reed-Solomon Decoding,\\ and the Prouhet-Tarry-Escott Problem}
 \author{
Venkata Gandikota \thanks{Purdue University. {\tt vgandiko@purdue.edu}. Supported in part by a grant from the Purdue Research Foundation and by NSF CCF-1649515.}
\and
Badih Ghazi \thanks{Computer Science and Artificial Intelligence Laboratory, Massachusetts Institute of Technology, Cambridge MA 02139. Supported in part by NSF STC Award CCF 0939370 and NSF Awards CCF-1217423, CCF-1420956, CCF-1420692 and CCF-1217423.
  {\tt badih@mit.edu}.}
\and
Elena Grigorescu \thanks{Purdue University. {\tt elena-g@purdue.edu}. Supported in part by NSF CCF-1649515.}
}
\date{}
\begin{document}
\maketitle
\begin{abstract}

Establishing the complexity  of {\em Bounded Distance Decoding} for Reed-Solomon codes is a fundamental open problem in coding theory, explicitly asked by Guruswami and Vardy (IEEE Trans. Inf. Theory,  2005). The problem is motivated by the large current gap between the regime when it is NP-hard, and the regime when it is efficiently solvable (i.e., the Johnson radius).

We show the first  NP-hardness results for asymptotically smaller decoding radii than the maximum likelihood decoding radius of Guruswami and Vardy. Specifically, for Reed-Solomon codes of length $N$ and dimension $K=O(N)$, we show that it is NP-hard to decode more than $ N-K- c\frac{\log N}{\log\log N}$  errors  (with $c>0$ an absolute constant).  Moreover, we show that the problem is NP-hard under quasipolynomial-time reductions for an error amount $> N-K- c\log{N}$ (with $c>0$ an absolute constant). 

An alternative natural reformulation of the Bounded Distance Decoding problem for Reed-Solomon codes is as a {\em Polynomial Reconstruction} problem. In this view, our results show that it is NP-hard to decide whether there exists a degree $K$ polynomial passing through $K+  c\frac{\log N}{\log\log N}$ points from a given set of points  $(a_1, b_1), (a_2, b_2)\ldots, (a_N, b_N)$. Furthermore, it is NP-hard under quasipolynomial-time reductions to decide whether there is a degree $K$ polynomial passing through $K+c\log{N}$ many points.

These results follow from the NP-hardness of a generalization of the classical Subset Sum problem to higher moments, called {\em  Moments Subset Sum}, which has been a known open problem, and which may be of independent interest.

We further reveal a strong connection with the well-studied  Prouhet-Tarry-Escott problem in  Number Theory, which turns out to capture a main barrier in extending our techniques. We believe the  Prouhet-Tarry-Escott problem deserves further study in the theoretical computer science community.

\end{abstract}

\thispagestyle{empty}

\newpage
\tableofcontents
\newpage

\pagenumbering{arabic}

\section{Introduction}

Despite being a classical problem in the study of error-correcting codes, the computational complexity of decoding Reed-Solomon codes \cite{reed1960polynomial} in the presence of  large amounts of error is not fully understood. In the Bounded Distance Decoding problem, the goal is to recover a message corrupted by a bounded amount of error. Motivated by the large gap between the current efficient decoding regime, and the NP-hard regime for Reed-Solomon codes, we study the NP-hardness of Bounded Distance Decoding for asymptotically smaller error radii than  previously known. In this process, we unravel a strong connection with the Prouhet-Tarry-Escott, a famous  problem from number theory that has been studied for more than two centuries.

A  Reed-Solomon (RS) code of length $N$, dimension $K$, defined over a finite field $\F$, is the set of vectors (called {\em codewords}) corresponding to evaluations of low-degree univariate polynomials on a given set of evaluation points ${\cal D}=\{\alpha_1, \alpha_2, \dots, \alpha_N\}\subseteq \F$. Formally,  $RS_{{\cal D},K}=\{ \langle p(\alpha_1), \dots, p(\alpha_N)\rangle:  p\in \F[x] \mbox{ is a univariate polynomial of degree}   < K \}.$
The Hamming distance between $x, y \in \F^N$ is $\Delta(x, y):=|\{i \in [N]: ~ x_i \neq y_i\}|$.
In the {\em Bounded Distance Decoding (BDD) problem}, given a target vector $y\in \F^N$ and a distance parameter $\lambda$, the goal is to output $c\in {\cal C}$ such that $\Delta(c, y)\leq \lambda$. 

It is well-known that if the number of errors is $\lambda \le (N-K)/2$, there is a unique codeword within distance $\lambda$ from  the message, which can be found efficiently \cite{Peterson60, BerlekampWelch}. Further,  Sudan \cite{Sudan} and Guruswami and Sudan \cite{GuruSudan} show efficient decoding up to  $\lambda=N-\sqrt{NK}$ errors (the ``Johnson radius''), a setting in which the algorithm may output a small list of possible candidate messages. At the other extreme,  if the number of errors is at least  $ N-K$ (the covering radius),  finding one close codeword becomes trivial, amounting to interpolating a degree $K-1$ polynomial through $\leq K$  points. However, just below that radius, namely at  $N-K-1$ errors, the problem becomes NP-hard,  a celebrated result of Guruswami and Vardy \cite{GuruVardy}. The proof approach of \cite{GuruVardy} is only applicable to the Maximum Likelihood Decoding setting of $N-K-1$ errors, prompting the fundamental  problem  of understanding the complexity of BDD in the wide remaining range between $N-\sqrt{NK}$ and $N-K-1$:

\begin{quote}\cite{GuruVardy}{ {\em  ``It is an extremely interesting problem to show hardness of bounded distance decoding of Reed-Solomon codes for smaller decoding radius.''}}
\end{quote}

Some partial progress on improving the NP-hardness regime was shown in a  recent result by the same authors \cite{gandikota2015np} for  $N-K-2$ and $N-K-3$ errors.  The only other work addressing the hardness of decoding RS codes are due to Cheng and Wan \cite{ChengW07, WanCheng10} who show randomized reductions from the Discrete Log problem over finite fields, which is not believed to be NP-hard. 
 
In this work, we study the complexity of the decision version of  BDD, where the number of errors is parametrized by  $d\geq 0$, as formalized next:

\begin{itemize}
\item[]  {\bf Problem}  {\em Bounded Distance Decoding of Reed-Solomon codes with parameter $d$ (\bdd{d})}
\item[] {\bf Input} ${\cal D}=\{\alpha_1, \alpha_2, \dots, \alpha_N\}\subseteq \F$, where $\alpha_i \ne \alpha_j$ for all $i\ne j$,  target $y=(y_1, y_2, \dots, y_N)$, and integer $K<N$ 
\item[] {\bf Goal} Decide if there exists $p\in RS_{{\cal D}, K}$ such that $\Delta(y, p)\leq (N-K)-d$
\end{itemize}

We emphasize that the BDD problem above is in fact the basic and natural  Polynomial Reconstruction problem, where the input is a set of points ${\cal D}=\{(\alpha_1, y_1), (\alpha_2, y_2), \dots, (\alpha_N, y_N)\}\subseteq \F\times \F$, and the goal is to decide  if there exists a polynomial $p$ of degree $<K$ that passes through at least $K+d$ points in ${\cal D}$.

We state our main result in both forms.

\subsection{Contributions}

Our main technical contribution is the first NP-hardness result for BDD of RS codes, for a number of errors that is asymptotically smaller than  $N-K$, and its alternative view in terms of polynomial reconstruction.

\begin{theorem}
\label{thm-main}
There exists $c>0$, such that for every $1\leq d\leq c \cdot \frac{\log N}{\log\log N}$, the \bdd{d} problem for Reed-Solomon codes of length $N$, dimension $K=N/2-d+1$ and field size $|\mathbb{F}| = 2^{\poly(N)}$ is NP-hard. 
Furthermore, there exists $c>0$, such that for every $1\leq d\leq c \cdot \log N$,  \bdd{d} over fields of size $|\mathbb{F}| = 2^{N^{O(\log{N})}}$ does not have $N^{O(\log{N})}$-time algorithms unless NP has quasi-polynomial time algorithms. 

Equivalently, there exists $c>0$, such that for every $1\leq d\leq c \cdot \frac{\log N}{\log\log N}$, it is NP-hard to decide whether there exists a polynomial of degree $<K=N/2-d+1$ passing through $K+d$ many points from a given set   ${\cal D}=\{(\alpha_1, y_1), (\alpha_2, y_2), \dots, (\alpha_N, y_N)\}\subseteq \F\times \F$, with $|\mathbb{F}| = 2^{\poly(N)}$. Furthermore, there exists $c>0$, such that for every $1\leq d\leq c \cdot \log N$,  the same interpolation problem over fields of size $|\mathbb{F}| = 2^{N^{O(\log{N})}}$ does not have $N^{O(\log{N})}$-time algorithms unless NP has quasi-polynomial time algorithms. 
\end{theorem}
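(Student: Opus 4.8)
\quad
The ``equivalently'' clause needs nothing new: as observed in the remark just before the theorem, \bdd{d} \emph{is} the problem of deciding whether some polynomial of degree $<K$ passes through at least $K+d$ of the given points, so it suffices to establish the coding-theoretic assertions. The plan is to route everything through the \emph{Moments Subset Sum} problem with $k$ moments, \mss{k}: given positive integers $a_1,\dots,a_m$, a target cardinality $\ell$, and target moments $b_1,\dots,b_k$, decide whether there is $S\subseteq[m]$ with $|S|=\ell$ and $\sum_{i\in S}a_i^{\,j}=b_j$ for every $j\in[k]$. There are two parts: (i) \mss{k} is NP-hard when $k=O(\log m/\log\log m)$ and the integers have $\poly(m)$ bit length, and NP-hard under quasipolynomial-time reductions when $k=O(\log m)$ and the integers have quasipolynomial bit length; (ii) \mss{k} reduces to \bdd{d} with $d\approx k$, for Reed--Solomon codes of length $N=\poly(m)$, dimension $K=N/2-d+1$, over a field of size $2^{\poly(N)}$ (resp.\ $2^{N^{O(\log N)}}$). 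Since $\log N=\Theta(\log m)$, parts (i) and (ii) together give the theorem in the stated form $d\le c\log N/\log\log N$ (resp.\ $d\le c\log N$).

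For (ii) I would fix a prime power $q$ of the prescribed size, large enough that all integers occurring survive reduction mod $q$, embed $a_1,\dots,a_m$ into $\F_q$, and append $\Theta(m)$ auxiliary evaluation points to form $\cal{D}$. A degree-$<K$ polynomial $p$ agreeing with the target $y$ on a coordinate set $T$ of size $K+d$ is exactly an error pattern $e=y-p$ supported off $T$ with $y-e\in RS_{\cal{D},K}$; dualizing, $y-e$ is a codeword iff $\sum_i(y_i-e_i)v_i\alpha_i^{\,j}=0$ for $j=0,\dots,N-K-1$, where $v_i$ are the standard generalized Reed--Solomon dual weights. Setting $\tilde e_i=e_iv_i$, this reads: $\tilde e$ is supported on $N-K-d$ coordinates and has prescribed power sums $\sum_i\tilde e_i\alpha_i^{\,j}=s_j$, the $j$-th syndrome of $y$, for $j=0,\dots,N-K-1$. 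I would choose $y$, the $b$'s, and the auxiliary points so that, after subtracting the auxiliary contribution, the top $\approx d$ of these equations become $\sum_{i\in S}a_i^{\,j}=b_j$ over the non-auxiliary agreement set $S$ (with its cardinality pinned by the lowest of them), while the remaining $N-K-d=N/2-1$ equations become unconditionally solvable by spending the auxiliary coordinates, of which there are enough, with enough freedom in their $\tilde e_i$, to interpolate an arbitrary right-hand side. A final scaling gadget forces each relevant $\tilde e_i$ to a fixed value, so that choosing $S$ is a genuine subset choice rather than a weighted one. Everything here is elementary linear algebra over $\F_q$ once the bookkeeping is arranged.

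Part (i) carries the real content, and is where the Prouhet--Tarry--Escott problem enters. I would reduce from ordinary Subset Sum (equivalently Partition): given $a_1,\dots,a_m$ and target $t$, I must lift the single equation $\sum_{i\in S}a_i=t$ (the size constraint playing the role of the $0$-th moment) to a system of $k$ power-sum equations $\sum_{i\in S}c_i^{\,j}=b_j$, $j\in[k]$, over transformed integers $c_1,\dots,c_n$, so that the $k-1$ extra equations impose nothing beyond the original instance yet cannot be satisfied spuriously. The device is a moment-neutralizing gadget: blocks of integers drawn from a Prouhet--Thue--Morse-type partition of an interval into classes with identical power sums up to degree $k$, concatenated and scaled so that every subset of the designated cardinality and first moment realizes a single predetermined value of each higher power sum --- equivalently, the gadget is a large Prouhet--Tarry--Escott configuration, and soundness amounts to ruling out any \emph{unintended} such coincidence among the chosen integers. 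This is exactly where the ceiling on $d$ comes from: a degree-$k$ gadget needs on the order of $2^{k}$ integers, and making $\Theta(m)$ of them act independently (so that an arbitrary Subset Sum instance can be encoded) costs a further factor $k^{\Theta(k)}=2^{\Theta(k\log k)}$ in the gadget size; this count must stay below $N$, which forces $k\log k=O(\log N)$, i.e.\ $k=O(\log N/\log\log N)$, with polynomial-size fields, and below $N^{O(\log N)}$, which is what buys $k=O(\log N)$ under a quasipolynomial-time reduction.

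The step I expect to be the genuine obstacle is precisely the construction and analysis of this moment-neutralizing gadget, and with it the impossibility of pushing $d$ past (poly)logarithmic: any real improvement would require Prouhet--Tarry--Escott solutions of degree $k$ far smaller --- in both number of terms and magnitude of entries --- than anything currently known, and than the standard conjectures in number theory appear to permit. The coding-theoretic and linear-algebraic ingredients (the BDD/reconstruction equivalence and the syndrome reduction of part (ii)) are essentially free; all the difficulty, both in proving the theorem and in explaining why more cannot be proved this way, is concentrated in the combinatorial number theory of equal power sums.
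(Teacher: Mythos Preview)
Your high-level decomposition --- hardness of \mss{d} plus a reduction \mss{d}$\to$\bdd{d} --- matches the paper, but both parts diverge in substance.

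For part (ii), the paper's route is cleaner than your syndrome-and-auxiliary-points sketch. It passes from \mss{d} to \emph{Symmetric} Subset Sum via Newton's identities (this is where large characteristic is used), then reduces \symssum{d} to \bdd{d} directly: the evaluation points are $\{a_i^{-1}\}\cup\{0\}$, the target vector encodes a fixed degree-$d$ polynomial built from the $E_j$'s, and a degree-$(k-d)$ interpolant through $k+1$ of the points corresponds exactly to a size-$k$ subset with the prescribed elementary symmetric sums. No scaling gadget, no ``unconditionally solvable'' residual system. Your version may be workable, but as written it is underspecified --- in particular, forcing each $\tilde e_i$ to a fixed nonzero value is not free when error symbols range over all of $\F$.

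The genuine gap is in part (i). Your gadget claim --- ``every subset of the designated cardinality and first moment realizes a single predetermined value of each higher power sum'' --- is not what the paper builds, and it is unclear any such object exists. The paper instead attaches to each SAT variable $t$ auxiliary sets $X_t,Y_t$ solving the \emph{inhomogeneous} PTE system $\sum_{x\in X_t}x^j-\sum_{y\in Y_t}y^j=b_t^j-a_t^j$ for $2\le j\le d$, with $\sum_{X_t}x=\sum_{Y_t}y=0$; this alone gives completeness. Soundness uses only the first-moment equation, via a \emph{bimodality} property: each auxiliary number is a huge power of $10$ plus a tiny perturbation, so any subset of auxiliaries either sums to something enormous or something tiny, and hence in any feasible first-moment solution the auxiliary contribution must vanish, reducing back to ordinary Subset Sum. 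The higher-moment targets are set so that only the intended solution needs to meet them. Your parameter accounting is also misplaced: the gadget produces only $n\cdot 2^{d+1}$ numbers, which by itself would allow $d=O(\log N)$; the extra $\log\log$ comes from the \emph{bit lengths}, which are $\poly(n,d!)$ because the large components are $10^{(i-1)!\cdot\nu_t}$ and the denominator bounds involve $(d!)^2$. Requiring $\poly(n,d!)=\poly(N)$ is what forces $d=O(\log N/\log\log N)$.
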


Our results significantly extend \cite{GuruVardy, gandikota2015np}, which only show NP-hardness for $d\in \{1, 2, 3\}$. 
As in \cite{GuruVardy, gandikota2015np}, we require the field size to be exponential in $N$.

The bulk of the proof of Theorem \ref{thm-main} is showing the NP-hardness of a natural generalization of the classic Subset Sum problem to higher moments,  that may be of independent interest. 

\begin{itemize}
\item[]  {\bf Problem}  {\em Moments Subset Sum with parameter $d$, over a field $\F$ (\mss{d})}
\item[] {\bf Input} Set $A\subseteq \F$ of size $|A|=N$, integer $k$, elements $m_1, m_2, \dots, m_d \in \F$
\item[] {\bf Goal} Decide if there exists $S\subseteq A$ such that $\sum_{s\in S} s^{\ell}=m_{\ell}$, for all $\ell\in[d]$, and $|S|=k$.
\end{itemize}

We note that the reduction from \mss{d} to \bdd{d} uses the equivalence between elementary symmetric polynomials and moments polynomials, when the field is of characteristic larger than $\Omega(d!)$(see Lemma~\ref{le:SMSS_RS_BDD_red} for a formal reduction.)

We point out that the Moments Subset Sum problem has natural analogs over continuous domains in the form of generalized moment problems and truncated moments problems,  which arise frequently in economics, operations research, statistics and probability \cite{lasserre2009moments}.

In this work, we prove NP-hardness of the Moments Subset Sum problem for large degrees.

\begin{theorem}\label{thm:mss}
There exists $c>0$, such that for every $1\leq d\leq c \cdot \frac{\log N}{\log\log N}$, the Moments Subset Sum problem \mss{d}  over prime fields of size $|\F|=2^{\poly(N)}$ is NP-hard. Furthermore, there exists $c>0$, such that for every $1\leq d\leq c \cdot \log N$, the Moments Subset Sum problem \mss{d} over fields of size $|\mathbb{F}| = 2^{N^{O(\log{N})}}$ does not have $N^{O(\log{N})}$-time algorithms unless NP has quasi-polynomial time algorithms. 
\end{theorem}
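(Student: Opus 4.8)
\medskip
\noindent\textbf{Proof plan.}
The base case $d=1$ needs no work: \mss{1} asks whether some $k$-element subset of $A$ sums to $m_1$, which is exactly Subset Sum with a cardinality constraint, and is NP-hard already over $\F=\mathbb{Z}/p\mathbb{Z}$ for a large prime $p$. So the whole content of the theorem is to bootstrap from one moment to $d$ of them. I would give a many-one reduction from a cardinality-constrained Subset Sum instance $(a_1,\dots,a_n;t)$ that replaces the single linear constraint by a $d$-dimensional moment constraint, using one combinatorial ``variable gadget'' per $a_i$ together with a block of auxiliary elements whose only job is to supply algebraic slack.

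Concretely, I would output a set $A=A_{\mathrm{aux}}\cup\bigcup_{i=1}^{n}A_i\subseteq\F$, a size $k$, and a target $(m_1,\dots,m_d)$, arranged so that every candidate solution $S$ meets each block $A_i$ in one of exactly two equal-size sets $S_i^{\mathrm{in}},S_i^{\mathrm{out}}$ with
\[
\sum_{s\in S_i^{\mathrm{in}}}s^{\ell}\;-\;\sum_{s\in S_i^{\mathrm{out}}}s^{\ell}\;=\;\begin{cases}a_i,&\ell=1,\\0,&2\le\ell\le d.\end{cases}
\]
Thus toggling block $i$ moves only the first moment, and by exactly its Subset Sum weight. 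The auxiliary block $A_{\mathrm{aux}}$ and the constant parts of the $m_\ell$'s are then chosen so that the higher targets $m_2,\dots,m_d$ can always be met regardless of the block states, while $m_1$ forces $\sum_{i:\ \text{block }i\ \mathrm{in}}a_i=t$, and the cardinality $k$ is tuned so that any size-$k$ subset hitting all $d$ moments is forced to be one state per block plus a fixed auxiliary part. Proving the equivalence then splits into an easy ``if'' direction and a structural ``only if'' claim; for the latter I would invoke Newton's identities (valid since $\operatorname{char}\F$ is huge), which turn ``$S$ has moments $m_1,\dots,m_d$'' into ``$\prod_{s\in S}(x-s)$ has prescribed coefficients of $x^{k-1},\dots,x^{k-d}$'', making the combinatorial rigidity tractable to verify.

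The crux, and the step I expect to be the real obstacle, is building the blocks: I need pairs of equal-size (multi)sets over $\F$ with prescribed power-sum differences up to degree $d$ — that is, to solve small systems $\sum_j x_j^{\ell}-\sum_j y_j^{\ell}=c_\ell$, $\ell=1,\dots,d$, with all the $x_j,y_j$ distinct and disjoint from everything used so far. Realizing the ``$a_i$ in degree $1$, zero in degrees $2,\dots,d$'' signature above is exactly a (perturbed) instance of the Prouhet--Tarry--Escott problem of degree $d$, and the number of elements a block must carry is governed by the size of the smallest \emph{constructible} PTE-type solution of that degree. Feeding in the Prouhet--Thue--Morse construction (or known near-ideal solutions) yields blocks of size about $d^{\Theta(d)}$; the actual field elements inside a block are produced by an explicit routine that introduces the auxiliary coordinates one at a time and solves the resulting Vandermonde/Newton system, running in time exponential in $d$ but polynomial in the (enormous) field size. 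Tracking these sizes, a polynomial-time reduction is possible only while $d\log d=O(\log N)$, i.e.\ $d=O(\log N/\log\log N)$; allowing an $N^{O(\log N)}$-time reduction over a correspondingly larger field loosens this to $d=O(\log N)$, giving the two regimes in the statement. An unconditional shortcut here would require genuinely ideal PTE solutions (block size $\mathrm{poly}(d)$) for all $d$, which is the long-open number-theoretic question the paper pinpoints as the barrier.
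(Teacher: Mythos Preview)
Your high-level architecture is close to the paper's --- per-item gadgets built from PTE-type identities so that switching a block's state perturbs only the first moment --- and you correctly identify the PTE connection as the bottleneck governing the $d=O(\log N/\log\log N)$ versus $d=O(\log N)$ regimes. But the soundness step, as you describe it, is a genuine gap. Invoking Newton's identities only converts ``$S$ has moments $m_1,\dots,m_d$'' into ``the top $d$ coefficients of $\prod_{s\in S}(x-s)$ are prescribed''; it does not by itself force $S$ to decompose as one canonical state per block plus a fixed auxiliary part. Nothing in your sketch rules out a solution that mixes elements across blocks or takes a non-canonical slice of a single block while still hitting all $d$ targets. The paper handles this with a completely different mechanism it calls \emph{bimodality}: each auxiliary element is engineered to be a $\pm\tfrac12$-combination of one enormous term ($\approx 10^{n^4}$) and several tiny terms ($<10^{\nu}$), so that \emph{any} subset of auxiliary elements has first-moment contribution either $<10^{\nu}$ or $>10^{m+2n+\nu}$. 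Since the target $B_1$ lives strictly between these scales, the auxiliary variables in any feasible $S$ must sum to exactly $0$, and then soundness reduces to the classical Subset Sum soundness on the non-auxiliary elements alone --- the higher-moment equations play no role in the ``only if'' direction. Your plan has no analogue of this magnitude-separation argument.

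Two smaller points. First, your gadget signature (difference $(a_i,0,\dots,0)$ across the two states) is not what the paper builds: they start from \emph{two} literal elements $a_t,b_t$ coming from the $1$-in-$3$-SAT/Subset Sum reduction, and construct auxiliary sets $X_t,Y_t$ with $\sum_{X_t}x=\sum_{Y_t}y=0$ and $\sum_{X_t}x^k-\sum_{Y_t}y^k=b_t^k-a_t^k$ for $k\ge 2$; this is an inhomogeneous PTE system with a specific right-hand side, solved by a recursive procedure (their \textsc{AtomicSolver}) that outputs $O(2^d)$ elements per block. Second, your accounting is off: the block size is $O(2^d)$, not $d^{\Theta(d)}$. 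The $d!$ that drives the $\log N/\log\log N$ threshold comes from the \emph{bit-lengths} of the constructed numbers (the large terms are $10^{(i-1)!\,\nu_t}$), which in turn are what make bimodality work. So the parameter regime is governed by bit complexity, not gadget cardinality.
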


Furthermore, we reveal a connection with the famous  Prouhet-Tarry-Escott (PTE) problem in Diophantine Analysis, which is the main barrier for extending Theorem~\ref{thm:mss} and Theorem~\ref{thm-main} to  $d=\omega(\log N)$, as explained shortly.

The PTE problem \cite{prouhet1851memoire,dickson2013history,wright-prouhetssol} first appeared in letters between Euler and Goldbach in 1750-1751, and it is an important topic of study in classical  number theory (see, e.g., the textbooks of Hardy and Wright \cite{hardy-wright-book} and Hua \cite{hua-book}). It is also related to other classical problems in number theory, such as variants of the Waring problem and problems about minimizing the norm of cyclotomic polynomials, considered by Erd\"{o}s and Szekeres \cite{erdos1959product, BorweinIngalls}. 

In the Prouhet-Tarry-Escott  problem, given $k\geq 1$, the goal is to find disjoint sets of integers 
$\{ x_1, x_2, \dots, x_t\}$ and $\{y_1, y_2, \dots, y_t\}$ satisfying the system:
\begin{eqnarray*}
x_1+x_2+\dots+x_t &=& y_1+y_2+\dots+ y_t\\
x_1^2+x_2^2+\dots+x_t^2 &=& y_1^2+y_2^2+\dots+ y_t^2\\
&\dots&\\
x_1^k+x_2^k+\dots+x_t^k &=& y_1^k+y_2^k+\dots+ y_t^k.
\end{eqnarray*}
We call $t$ the size of the PTE solution. It turns out that the completeness proof of our reduction in Theorem~\ref{thm:mss} relies on \emph{explicit} solutions to this system for degree $k=d$ and of size $t = 2^k$. As explained next, despite significant efforts that have been devoted to constructing PTE solutions during the last 100 years, no explicit solutions of size $t = o(2^k)$ are known. This constitutes the main barrier to extending our Theorem~\ref{thm:mss} and Theorem~\ref{thm-main} to  $d=\omega(\log N)$.

The main open problem that has been tackled in the PTE literature is constructing solutions of small size $t$ compared to the degree $k$. It is relatively easy to show that $t\geq k+1$, and straightforward (yet non-constructive!) pigeon-hole counting arguments show the existence of solutions with $t=O(k^2)$. If we further impose the constraint that the system is not satisfied for degree $k+1$ (which is a necessary constraint for our purposes), then solutions of size $t=O(k^2 \log k)$ are known to exist \cite{hua-book}. However, these results are non-constructive, and the only general explicit solutions have size $t=O(2^k)$ (e.g., \cite{wright-prouhetssol,BorweinIngalls}). A special  class of solutions studied in the literature is for  $t=k+1$ (of minimum possible size). Currently there are known explicit parametric constructions of infinitely many minimum-size solutions for $k \leq 12$ (e.g., \cite{BorweinIngalls, BorweinLP03}), and finding such solutions often involves numerical simulations and extensive computer-aided searches \cite{BorweinLP03}.

From a computational point of view, an important open problem is to understand whether PTE solutions of size $O(k^2)$ (which are known to exist) can be \emph{efficiently constructed}, i.e., in time $\poly(k)$.

We identify the following generalization of the PTE problem as a current barrier to extending our results:

\begin{problem}\label{main-question} 
Given a field $\F$, integer $d$, and  $a, b\in \F$,  efficiently construct $x_1, \dots, x_t, y_1, \dots, y_t\in \F$, with $t=o(2^d)$, satisfying:
\begin{align*}
  x_1 + x_2 + \dots + x_t &= y_1 + y_2 + \dots + y_t \nonumber \\
  a^i + \displaystyle\sum\limits_{j=1}^t x_j^i &= b^i +\displaystyle\sum\limits_{j=1}^t y_j^i ~~~ \forall i \in \{2,\dots,d\}
\end{align*}
\end{problem}

We believe that this question is worth further study in the theoretical computer science community. In this work, we prove the following theorem, which is at the core of the completeness of our reduction.

\begin{theorem}
There is an explicit construction of solutions for \Cref{main-question} with $t= O(2^{d})$, and which can be computed in time $\poly(t)$.
\end{theorem}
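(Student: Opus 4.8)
The plan is to combine the classical Prouhet--Thue--Morse construction with an affine substitution that absorbs $a$ and $b$, and then to repair the single equation this does not handle.

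\emph{The Prouhet--Thue--Morse solution.} Recall the identity $\prod_{j=0}^{d}(1-z^{2^{j}})=\sum_{n=0}^{2^{d+1}-1}(-1)^{s_2(n)}z^{n}$, where $s_2(n)$ is the binary digit sum of $n$. Since the left side is divisible by $(1-z)^{d+1}$, the sets $E=\{\,n<2^{d+1}: s_2(n)\text{ even}\,\}$ and $O=\{\,n<2^{d+1}: s_2(n)\text{ odd}\,\}$, each of size $2^{d}$, satisfy $\sum_{e\in E}e^{i}=\sum_{o\in O}o^{i}$ for all $0\le i\le d$ (read in $\F$; the construction is unaffected if the integers $0,\dots,2^{d+1}-1$ fail to be distinct in $\F$). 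Affine maps preserve this: for any affine $\psi$, the multisets $\psi(E)$ and $\psi(O)$ still have equal power sums of all orders $0,\dots,d$. Likewise, the degree-$d$ pair $(W^{+},W^{-})$ obtained from $\prod_{j=0}^{d}(1-z^{2^{j}})/(1-z)^{d+1}=\prod_{j=0}^{d}(1+\dots+z^{2^{j}-1})\big|_{z=1}\neq 0$ matches power sums at orders $0,\dots,d$ but has a \emph{nonzero} order-$(d+1)$ discrepancy.

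\emph{Encoding $a$ and $b$.} Choose $\psi(z)=(b-a)z+a$, so $\psi(0)=a$ and $\psi(1)=b$, hence $a\in\psi(E)$ and $b\in\psi(O)$. Put $X_{0}=\psi(E)\setminus\{a\}$ and $Y_{0}=\psi(O)\setminus\{b\}$, each of size $2^{d}-1$. Then $a^{i}+\sum_{x\in X_{0}}x^{i}=b^{i}+\sum_{y\in Y_{0}}y^{i}$ for \emph{every} $0\le i\le d$, so all equations of \Cref{main-question} hold for $i=2,\dots,d$; only the first equation fails, and in a controlled way: $\sum_{x\in X_{0}}x-\sum_{y\in Y_{0}}y=b-a$.

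\emph{The degree-one corrector (the crux).} It remains to append equal-size multisets $G,H$, of size $O(2^{d})$, with matching power sums at orders $\{0,2,3,\dots,d\}$ and $\sum_{g\in G}g-\sum_{h\in H}h=a-b$; then $X_{0}\cup G,\ Y_{0}\cup H$ is a solution with $t=2^{d}-1+|G|=O(2^{d})$. Since nonzero field elements are invertible, it suffices to build one ``universal'' pair matching at $\{0,2,\dots,d\}$ with \emph{some} nonzero degree-one discrepancy $c$, and then rescale all its nodes by $(a-b)/c$ (take $G=H=\emptyset$ if $a=b$). I construct the universal pair recursively on $d$: for $d=2$ take $G=\{3,4\},\ H=\{5,0\}$ (here $3^{2}+4^{2}=5^{2}+0^{2}$ but $3+4\neq 5+0$). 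Given $G,H$ matching at $\{0,2,\dots,d\}$ with discrepancy $c\neq 0$ at order $1$, they typically differ at order $d+1$, say by $e$; let $(W^{+},W^{-})$ be the degree-$d$ Prouhet pair above, with nonzero order-$(d+1)$ discrepancy $f$. For scalars $\rho_{1},\dots,\rho_{r}\in\F$, adjoining $\bigcup_{\ell}\rho_{\ell}W^{+}$ to $G$ and $\bigcup_{\ell}\rho_{\ell}W^{-}$ to $H$ leaves every order-$\le d$ discrepancy unchanged (so the matching at $\{0,2,\dots,d\}$ and the value $c$ survive) and replaces $e$ by $e+f\sum_{\ell}\rho_{\ell}^{\,d+1}$. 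Thus we only need $-e/f$ to be a short sum of $(d+1)$-th powers in $\F$; over the large finite fields of our reductions (indeed whenever $|\F|$ exceeds a fixed polynomial in $d$) Waring's theorem for finite fields provides this with $r=O(1)$ and with the $\rho_{\ell}$ efficiently computable. Since $|W^{+}|=|W^{-}|=2^{d}$, the size grows by $O(2^{d})$ per step, and the sizes telescope to $O(2^{d})$.

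\emph{Where the difficulty lies.} All the real work is the corrector. The Prouhet machinery, and more generally any multiset pair built from PTE solutions by affine maps and unions, has its \emph{equal} power sums occupying an \emph{initial} segment of orders $0,1,\dots,k$, whereas we need agreement at $\{0,2,\dots,d\}$ with a deliberate mismatch at order $1$ --- a ``skip-one'' pattern the standard tools do not produce directly. The recursion above gets around this because at each step the only genuinely new task is to kill one extra discrepancy without touching the lower ones, and a degree-$d$ PTE solution paired with a sum-of-$(d+1)$-th-powers identity does exactly that; the ambient field intervenes only in that Waring-type step, which is why the clean bound $t=O(2^{d})$ requires a sufficiently large field. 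Finally, every ingredient --- the Prouhet sets, the map $\psi$, the scalars $\rho_{\ell}$, the recursive corrector --- is explicit and computable in time $\poly(2^{d})=\poly(t)$, yielding the algorithmic claim.
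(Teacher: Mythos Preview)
Your approach is genuinely different from the paper's, and the Thue--Morse/affine encoding of $a,b$ is a nice idea --- but the Waring step in your corrector is a real gap. At each stage $k\to k+1$ you need to write an arbitrary field element $-e/f$ as $\sum_{\ell=1}^{r}\rho_\ell^{\,k+1}$ with $r=O(1)$ and with the $\rho_\ell$ explicitly and efficiently computable. Over $\mathbb{Q}$ no such uniform bound is known: explicit Waring-type bounds for rational $k$-th powers (even allowing both signs) grow with $k$, and if $r_k$ grows then $\sum_k r_k\,2^{k}$ is no longer $O(2^{d})$. You yourself restrict to ``the large finite fields of our reductions,'' but the paper's construction --- and the downstream reduction, which depends on the bimodality Property~(3) and hence on integer/rational magnitudes --- is carried out over $\mathbb{Q}$ first and only then specialized to finite fields. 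So your argument does not prove the theorem in the generality the paper needs. (Even over large finite fields, the claim that the $\rho_\ell$ are \emph{deterministically} computable in $\poly(t)$ time deserves a word; the existence part is standard, but the algorithmic part is not automatic.)

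The paper sidesteps Waring entirely. Its \atomicsolver produces, for each $i$, a pair of multisets of size $2^{i-1}$ whose power sums agree at orders $1,\dots,i-1$ and whose order-$i$ discrepancy is exactly $i!\,\alpha_1\cdots\alpha_i$ for freely chosen parameters $\alpha_r$ (Claim~\ref{claim:structural-p1}). Because the top discrepancy is a \emph{product} of free parameters rather than a sum of $k$-th powers, hitting any prescribed residual $R_i$ over $\mathbb{Q}$ is one division, no number theory required. Running this for $i=2,\dots,d$ with each $R_i$ absorbing the contributions of the earlier stages gives $\sum_{i=2}^{d}2^{i}=O(2^{d})$ auxiliary numbers. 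This product formula is also what lets the paper choose one $\alpha_{t,i,1}$ huge and the rest tiny to get the bimodality needed for soundness --- structure your Thue--Morse plus Waring construction does not provide.
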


In the next section, we outline the proof of Theorem~\ref{thm:mss}, and in the process, we explain how PTE solutions of degree $d$ naturally arise when studying the computational complexity of \mss{d}.

\subsection{Proof Overview} \label{subsec:pf_overv_tec}

To prove Theorem \ref{thm:mss}, we begin with the classical reduction from $1$-in-$3$-SAT to Subset-Sum, in which  one needs to construct a set of integers such that there is a subset whose sum equals a given target $m_1$, if and only if there is an assignment that satisfies exactly one literal of each clause of the $3$-SAT formula (we refer the reader to Section \ref{sec:reduction} for more details about this standard reduction). Extending this reduction so that the $2$nd moment also hits target $m_2$ raises immediate technical hurdles, since we have very little handle on the extra moment. In \cite{gandikota2015np}, the authors manage to handle a reduction for $2$nd and $3$rd moments via ad-hoc arguments and identities tailored to the degree-$2$ and degree-$3$ cases. The problem becomes much more complex as we need to ensure both completeness and soundness for a large number of moments.  In this work, we achieve such a reduction where the completeness will rely on explicit solutions to ``inhomogeneous PTE instances'' and the soundness will rely on a delicate balancing of the magnitudes of these explicit solutions. We now describe the details of this reduction.

For each $1$-in-$3$-SAT variable, we  create a collection of {\em explicit} auxiliary numbers which ``stabilize'' the contribution of this variable to all $i$-th moment equations with $2 \le i \le d$, 
 while having no net effect on the $1$st moment equation. Concretely, if $a$ and $b$ are the numbers corresponding to the two literals of the given variable, then we need to find numbers $x_1, \dots, x_t, y_1, \dots, y_t$ satisfying:
\begin{subequations}
\begin{empheq}{align}
  x_1 + x_2 + \dots + x_t &= y_1 + y_2 + \dots + y_t \nonumber \\
  a^i + \displaystyle\sum\limits_{j=1}^t x_j^i &= b^i +\displaystyle\sum\limits_{j=1}^t y_j^i ~~~ \forall i \in \{2,\dots,d\}\tag{\dag}\label{eq:mom_match}
\end{empheq}
\end{subequations}

Note that in order for the overall reduction to run in polynomial-time, the above auxiliary variables should be \emph{efficiently constructible}. Moreover, we observe that (\ref{eq:mom_match}) is an inhomogeneous PTE instance: for $a = b$, it reduces to a PTE instance of degree $d$. Of course, in our case $a$ and $b$ will not be equal, and (\ref{eq:mom_match}) is a more general system (and is hence harder to solve) than PTE instances. Nevertheless, as we will see shortly, solving (\ref{eq:mom_match}) can be essentially reduced to finding explicit PTE solutions of degrees $k \le d$.

In addition, we need to ensure that the added auxiliary numbers satisfy some ``bimodality''  property regarding their magnitudes, which would allow the recovery of a satisfying $1$-in-$3$-SAT assignment from any solution to the \mss{d} instance:

\begin{property}[Bimodality (informal)]\label{property:niceness_infor}
Every subset $S$ of the auxiliary variables is such that either $|\sum_{s\in S} s|$ is tiny, or  $|\sum_{s\in S} s|$ is huge.
\end{property}

We note that the existence of explicit and efficiently constructible solutions of small size $t=O(d)$ to system (\ref{eq:mom_match}) (and hence to a PTE system too)  would at least ensure the completeness of a reduction with $d=O(N)$. If soundness can also be ensured for such solutions, then our techniques would extend to radii closer to the Johnson Bound radius.

\paragraph{Overview of procedure for solving system (\ref{eq:mom_match})}

We build the variables $x_i$ and $y_i$ recursively, by reducing the construction for degree  $i$ to a solution to degree $i-1$. 
Towards this goal, we design a sub-procedure, called \atomicsolver, that takes as inputs  
an integer $i \in \{2,3, \dots ,d\}$, and a number $R_i$, and outputs $2^i$ rational\footnote{In our case, we can afford having \emph{rational} solutions to Equations~(\ref{eq:low_def_canc}) and~(\ref{eq:deg_i_match}). Note that this system is still a generalization of the PTE problem since we can always scale the rational solutions by their least common denominator to get a PTE solution of degree $i-1$.} numbers $\{ x_{i,j}, y_{i,j} \}_{j \in [2^{i-1}]}$ that satisfy a PTE  system of degree $i-1$, along with a non-homogeneous equation of degree $i$:

\begin{subequations}
\begin{empheq}{align}
  \displaystyle\sum\limits_{\ell=1}^{2^{i-1}} (x_{i,\ell}^j -  y_{i,\ell}^j) &= 0 ~~~ \forall ~2 \le j < i, \label{eq:low_def_canc}\\
  \displaystyle\sum\limits_{\ell=1}^{2^{i-1}} (x_{i,\ell}^i -  y_{i,\ell}^i) &= R_i\label{eq:deg_i_match}.
\end{empheq}
\end{subequations}

We can then run \atomicsolver sequentially on inputs $i \in \{2,\dots,d\}$ with the $R_i$ input corresponding to a ``residual'' term that accounts for the contributions to the degree-$i$ equation of the outputs of \atomicsolverof{j}{R_j} for all $2 \le j < i$, namely,

\begin{equation}
R_i = b^i - a^i + \displaystyle\sum\limits_{2 \le j <i} \displaystyle\sum\limits_{\ell=1}^{2^{j-1}} ( y_{j,\ell}^i -  x_{j,\ell}^i). \end{equation}

Note that the aim of the \atomicsolverof{i}{R_i} procedure is to satisfy the degree-$i$ equation (\ref{eq:deg_i_match}) without affecting the lower-degree equations (\ref{eq:low_def_canc}).

We then argue that the union $\cup_{2 \le i \le d}\{ x_{i,j}, y_{i,j} \}_{j \in [2^{i-1}]}$ of all output variables satisfies the polynomial constraints in (\ref{eq:mom_match}) with $t = \exp(d)$.

\paragraph{Specifics of the {\atomicsolver}}
 We next illustrate the \atomicsolver procedure by describing its operation in the particular case where $i=d=4$. In what follows, we drop ``$i=4$ subscripts'' and denote $R = R_4$, $x_{\ell} = x_{4,\ell}$ and $y_{\ell} = y_{4,\ell}$ for all $1 \le \ell \le 8$. Then, Equation~(\ref{eq:deg_i_match}) above that we need to satisfy becomes
\begin{equation}\label{eq:simp_deg_4}
\displaystyle\sum\limits_{\ell=1}^8 (x_{\ell}^4 - y_{\ell}^4) = R.
\end{equation}
First, we let $\alpha$ be a constant parameter (to be specified later on) and we set
\begin{subequations}
\begin{empheq}{align}
  x_1 - y_1 &= \alpha \label{eq:first_eq_alpha} \\ 
  y_2 - x_2 &= \alpha \label{eq:sec_eq_alpha}
\end{empheq}
\end{subequations}
Namely, in Equations~(\ref{eq:first_eq_alpha}) and (\ref{eq:sec_eq_alpha}), we ``\emph{couple}'' the ordered pairs $(x_1,y_1)$ and $(y_2,x_2)$ in the same way. Then, using Equations~(\ref{eq:first_eq_alpha}) and (\ref{eq:sec_eq_alpha}), we substitute $y_1 = x_1-\alpha$ and $x_2 = y_2-\alpha$, and the sum of the $\ell=1$ and $\ell=2$ terms in Equation~(\ref{eq:simp_deg_4}) can be written as
\begin{equation}\label{eq:cubic}
(x_1^4 - y_1^4) - (y_2^4-x_2^4) = p_{\alpha}(x_1) - p_{\alpha}(y_2)
\end{equation}
where $p_{\alpha}$ is a \emph{cubic} polynomial. If we set $x_1-y_2 = \beta$, then (\ref{eq:cubic}) further simplifies to
\begin{equation}\label{eq:p_to_q}
p_{\alpha}(x_1) - p_{\alpha}(y_2) = q_{\alpha,\beta}(x_1)
\end{equation}
where $q_{\alpha,\beta}$ is a \emph{quadratic} polynomial\footnote{Intuitively, we can think the LHS of (\ref{eq:p_to_q}) (along with the setting $x_1-y_2 = \beta$) as being a \emph{``derivative operator''}. This explains the fact that we are starting from a cubic polynomial $p_{\alpha}(\cdot)$ and getting a quadratic polynomial $q_{\alpha,\beta}(\cdot)$. This intuition was also used (twice) in (\ref{eq:cubic}), and will be again used in (\ref{eq:q_to_w}) and (\ref{eq:linear}) in order to reduce the degree further.}.

In the next step, we couple the ordered tuple $(y_3,x_3,y_4,x_4)$ in the same way that we have so far coupled the tuple $(x_1,y_1,x_2,y_2)$. The sum of the first four terms in the LHS of (\ref{eq:simp_deg_4}) then becomes
\begin{equation}\label{eq:quadratic}
\begin{split}
\displaystyle\sum\limits_{\ell=1}^4 (x_{\ell}^4 - y_{\ell}^4) &= (x_1^4 - y_1^4 + x_2^4 - y_2^4)-( y_3^4  - x_3^4 + y_4^4 - x_4^4)\\ 
&= q_{\alpha,\beta}(x_1) - q_{\alpha,\beta}(y_3).
\end{split}
\end{equation}
As before, we set $x_1 - y_3 = \gamma$ and (\ref{eq:quadratic}) further simplifies to
\begin{equation}\label{eq:q_to_w}
q_{\alpha,\beta}(x_1) - q_{\alpha,\beta}(y_3) = w_{\alpha,\beta,\gamma}(x_1)
\end{equation}
where $w_{\alpha,\beta,\gamma}(x_1)$ is a \emph{linear} polynomial in $x_1$. Finally, we couple the ordered tuple $(y_5,$ $x_5,$ $y_6,$ $ x_6, $ $y_7,x_7,y_8,x_8)$ in the same way that we have so far coupled the tuple $(x_1,y_1,x_2,y_2,x_3,y_3,x_4,y_4)$, and we obtain that the following equation is equivalent to Equation~(\ref{eq:simp_deg_4}) above:
\begin{equation}\label{eq:linear}
w_{\alpha,\beta,\gamma}(x_1) - w_{\alpha,\beta,\gamma}(y_5) = R. 
\end{equation}
Setting $x_1 - y_5 = \theta$, Equation~(\ref{eq:linear}) further simplifies to
\begin{equation}\label{eq:theta_h}
\theta \cdot h_{\alpha,\beta,\gamma} = R,
\end{equation}
where $h_{\alpha,\beta,\gamma}$ is the coefficient of $x_1$ in the linear polynomial $w_{\alpha,\beta,\gamma}(x_1)$. We conclude that to satisfy (\ref{eq:simp_deg_4}), it suffices to choose $\alpha$, $\beta, \gamma$ such that $h_{\alpha,\beta,\gamma} \neq 0$, and to then set $ \theta = R/h_{\alpha,\beta,\gamma}$.

It is easy to see that  there exist $\alpha, \beta, \gamma$ such that $h_{\gamma,\beta,\alpha} \neq 0$, and that  the above recursive coupling of the variables guarantees that (\ref{eq:low_def_canc}) is satisfied. The more difficult part will be to choose $\alpha, \beta, \gamma$ in a way that ensures the soundness of the reduction. This is briefly described next.

\paragraph{Bimodality of solutions}
In the above description of the particular case where $i = d = 4$, it can be seen that the produced solutions are $\{0, \pm 1\}$-linear combinations of $\{\alpha, \beta, \gamma, \theta\}$, which are required to satisfy (\ref{eq:theta_h}). It turns out that in this case $h_{\alpha,\beta,\gamma} = 24 \cdot \alpha \cdot \beta \cdot \gamma$, and so (\ref{eq:theta_h}) becomes
\begin{equation}\label{eq:deg_4_exp}
\theta \cdot \alpha \cdot \beta \cdot \gamma = \frac{R}{24}.
\end{equation}
So assuming we can upper bound $|R|,$\footnote{which we will do by inductively upper bounding $|R_i|$.} we would be able to set $\theta$ to a sufficiently large power of $10$ while letting $\alpha$, $\beta$ and $\gamma$ to have tiny absolute values and satisfy (\ref{eq:deg_4_exp}). Using the fact that the auxiliary $x_i$ and $y_i$ variables are set to $\{0, \pm 1\}$-linear combinations of $\{\alpha, \beta, \gamma, \theta\}$, this implies that the bimodality property is satisfied. In Section~\ref{sec:reduction}, we show that the bimodality property ensures that in any feasible solution to \mss{d}, the auxiliary variables should have no net contribution to the degree-$1$ moment equation (Proposition~\ref{prop:soundness}), which then implies the soundness of the reduction.

\paragraph{General finite fields} We remark that as described above, our solution works over the rational numbers, and, by scaling appropriately, over the integers. By taking the integer solution modulo a large prime $p$ (i.e., $p=2^{\poly(N)}$) the same arguments extend to $\F_p$. Moving to general finite fields $\F=\F_{p^{\ell}}$, we first observe that  system (\ref{eq:mom_match}) (and thus a PTE system too)  has non-constructive  solutions of size $O(d)$, which follows from the Weil bound (see Section \ref{subsec:existence}). Our reduction in the proof of Theorem~\ref{thm:mss} also extends to general fields $\F=\F_{p^{\ell}}$, where $p$ is a prime $p=\Omega(d!)$, and $\ell=\poly(N, d!)$.
The reduction now uses a representation of field elements in a polynomial basis  $\{1, \gamma, \gamma^2, \ldots, \gamma^{\ell-1}\} \subseteq \F$ , instead of decimal representations. 
See Section~\ref{subsec:general-fin-fields-reduction} for the changes that need to be made to the proof over the integers.

 \subsection{Related Work}

 A number of fundamental works address the polynomial reconstruction problem in various settings. In particular,  Goldreich \etal\  \cite{GoldreichRS00} show that that the polynomial reconstruction problem is NP-complete for univariate polynomials $p$ over large fields.  H{\aa}stad's celebrated results \cite{Hastad01}  imply NP-hardness for linear multivariate polynomials over finite fields. Gopalan \etal\  \cite{GopalanKS10} show NP-hardness for multivariate polynomials of larger degree, over the field $\F_2$.
   
We note that in general, the polynomial reconstruction problem does not require that the evaluation points are all distinct (i.e.,   $x_i\not=x_j$ whenever $i\not =j$). This  distinction is crucial to the previous results on polynomial reconstruction  (eg.  \cite{GoldreichRS00, GopalanKS10}).  It is this distinction that prevents those results from extending to the setting of Reed-Solomon codes, and to their multivariate generalization,  Reed-Muller codes.
   
On the algorithmic side, efficient algorithms for decoding of Reed-Solomon codes  and their variants are well-studied. As previously mentioned,  \cite{Sudan, GuruSudan} gave the first efficient algorithms in the list-decoding regime. Parvaresh and Vardy \cite{ParvareshV05} and Guruswami and Rudra \cite{GuruswamiR08} construct capacity achieving codes based on variants of RS codes. Koetter and Vardy \cite{KoetterV03a} propose soft decision decoders for RS codes. More recently,  Rudra and Wooters \cite{RudraWootters} prove polynomial list-bounds for random RS codes.

A related line of work is the study of BDD and of Maximum Likelihood Decoding in general codes, possibly under randomized reductions, and when an unlimited amount of preprocessing of the code is allowed. These problems have been extensively studied under diverse settings, e.g., \cite{Vardy97, AroraBSS97, DinurKRS03, DumerMS03, FeigeM04, Regev04, GuruVardy, Cheng08a}.

\section{Preliminaries}\label{sec:prelim}

We start by recalling the formal definition of the \mss{d} problem.
\begin{definition}[Moments Subset-Sum: \mss{d}]\label{le:defn_mss}
Given a set $A=\{a_1, \ldots, a_N\}$, $a_i \in \F$,   integer $k$, and $m_1, \ldots, m_d\in \F$, decide if there exists a subset $S\subseteq A$ of size $k$, satisfying $\sum_{a\in S} a^i=m_i$ for all $i\in [d]$. We call $k$ the size of the \mss{d} instance.
\end{definition}
\noindent We next recall the reduction from \mss{d} to RS-BDD$(d)$. 
\begin{lem}[\cite{gandikota2015np}]\label{le:SMSS_RS_BDD_red}
\mss{d} is polynomial-time reducible to RS-BDD$(d)$. Moreover, the reduction maps instances of \mss{d} on $N$ numbers and of size $k$ to Reed-Solomon codes of block length $N+1$ and of dimension $k-d+1$. The reduction holds over finite fields $\F$ of large characteristic. 

\end{lem}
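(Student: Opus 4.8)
The plan is to reduce \mss{d} to the Polynomial Reconstruction view of \bdd{d}, using the classical dictionary between power sums and elementary symmetric functions; this dictionary is precisely where the large-characteristic hypothesis enters.

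First, given an \mss{d} instance $(A,k,m_1,\dots,m_d)$ over a field $\F$ of characteristic $>d$, I would use Newton's identities $p_n = \sum_{i=1}^{n-1}(-1)^{i-1}e_i\,p_{n-i} + (-1)^{n-1}n\,e_n$ (with $p_i:=m_i$) to compute recursively the unique values $e_1,\dots,e_d\in\F$; this costs $\poly(d)$ field operations and works precisely because $1,2,\dots,d$ are invertible in $\F$. The standard consequence is: for any size-$k$ set $S$ (and since $A$ has distinct elements, so does every $S\subseteq A$), one has $\sum_{s\in S}s^i=m_i$ for all $i\in[d]$ if and only if the $i$-th elementary symmetric function $e_i(S)$ equals the computed value $e_i$ for all $i\in[d]$.

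Next comes the encoding. To a $k$-subset $S\subseteq A$ associate the monic polynomial $g_S(x)=\prod_{s\in S}(x-s)$ of degree $k$, whose coefficients of $x^k,x^{k-1},\dots,x^{k-d}$ are $1,-e_1(S),\dots,(-1)^d e_d(S)$. Thus the \mss{d} instance is a yes-instance iff there is a monic degree-$k$ polynomial whose coefficient of $x^{k-j}$ is the prescribed value $(-1)^j e_j$ for $j=0,\dots,d$ and which has $k$ roots among the elements of $A$. Writing such a polynomial as $M(x)-v(x)$, with the fixed ``pattern polynomial'' $M(x)=\sum_{j=0}^{d}(-1)^j e_j\,x^{k-j}$ and $v$ an unknown low-order correction, the coefficient condition becomes exactly $\deg v<k-d$ and the ``$k$ roots in $A$'' condition becomes exactly ``$v(a)=M(a)$ for at least $k$ elements $a\in A$'' --- which can never hold for more than $k$ elements, since $M-v$ is then monic of degree $k$. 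This is already a Polynomial Reconstruction instance: take the evaluation points to be the elements of $A$ together with one auxiliary point $\alpha^*\notin A$ used for the parameter bookkeeping, the target to be the corresponding values of $M$ suitably padded at $\alpha^*$, and the code dimension $K=k-d+1$, so that a degree-$<K$ polynomial agreeing with the target on at least $K+d=k+1$ of the $N+1$ coordinates exists iff the original \mss{d} instance is satisfiable. Since $(N+1)-(K+d)=N-k$, this agreement threshold is exactly the \bdd{d} decoding radius $(N+1)-K-d$, and every step (Newton's identities, forming $M$, evaluating it at $N+1$ points) runs in $\poly(N)$ time over a field whose elements have $\poly(N)$-bit representations.

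The conceptual crux --- and the reason the field must have large characteristic --- is the first step: equating $d$ power sums with $d$ elementary symmetric functions needs $1,\dots,d$ to be units, and it is exactly this that converts the $d$ essentially uncontrollable moment constraints into $d$ leading-coefficient constraints on a polynomial, which are imposed for free by a degree bound. The part needing real care is the bookkeeping in the last step: one must choose the auxiliary evaluation point and its target value so that (i) every honest $k$-subset of $A$ still produces an agreeing degree-$<K$ polynomial (completeness), and (ii) no degree-$<K$ polynomial can agree with the target on $k+1$ coordinates without arising from an honest $k$-subset of $A$ --- in particular, ruling out ``spurious'' solutions that trade an $A$-coordinate for the auxiliary coordinate --- and so that the arithmetic lands on the parameter $d$ exactly (reducing \mss{d} to \bdd{d}, not to \bdd{d'} for some $d'\neq d$). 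Both of these reduce to the rigidity of monic polynomials of bounded degree exploited above.
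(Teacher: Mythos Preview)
Your high-level strategy is exactly the paper's: pass through Newton's identities (this is where characteristic $>d$ is used) to replace the $d$ moment constraints by $d$ elementary-symmetric constraints, and then encode a candidate $k$-subset $S$ by the monic polynomial $g_S(x)=\prod_{s\in S}(x-s)$ so that the constraints pin down its $d+1$ top coefficients. The paper even names the intermediate problem \symssum{d}.

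Where your write-up diverges from the paper is in the \emph{encoding step}, and this is also where the real gap in your proposal lies. In your direct encoding (evaluation at the points $a\in A$, target values $M(a)$), the polynomial you are searching for is $v=M-g_S$, whose coefficients are the \emph{low-order} coefficients of $g_S$; none of these is determined by the instance. Consequently there is no value you can place at an auxiliary point $\alpha^\ast$ that every valid $v$ will hit, so you cannot manufacture the extra agreement needed to reach $K+d=k+1$ with $K=k-d+1$. Worse, with $K=k-d+1$ you now admit candidate polynomials $w$ of degree exactly $k-d$; for such $w$, $M-w$ is still monic of degree $k$ with $k$ roots in $A$, but its $x^{k-d}$ coefficient is $(-1)^d e_d$ minus the leading coefficient of $w$, so the recovered subset need not have the correct $e_d$. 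Your final paragraph flags this spurious-solution issue but does not resolve it; ``suitably padded'' is precisely the step that needs an idea.

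The paper's idea is a reciprocal trick: it takes the evaluation points to be $a_i^{-1}$ together with the auxiliary point $0$, and sets the target there to $(-1)^d E_d$. After the change of variable $x\mapsto 1/x$ and clearing denominators, the ``polynomial to find'' becomes $p(x)=c_0x^{k-d}+\cdots+c_{k-d}$, where $c_j$ is the coefficient of $x^j$ in $g_S$. Now $p(0)=c_{k-d}=(-1)^d E_d$ is determined by the instance, so the auxiliary point gives a free $(k{+}1)$st agreement for every honest $S$, and conversely any degree-$\le k-d$ polynomial agreeing on $k+1$ of the $N+1$ points must pass through $(0,(-1)^dE_d)$ (else it would yield $k+1$ roots of a degree-$k$ polynomial). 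This is exactly the ``rigidity'' you invoke, but it only bites once the reciprocal has moved the one constrained coefficient into the constant term.

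Two remarks. First, if you drop the auxiliary point entirely, your direct encoding already gives a perfectly clean reduction to \bdd{d} with block length $N$ and dimension $K=k-d$ (your $v$ has $\deg v<k-d$ and agrees on exactly $k$ points; soundness is immediate from the root-count argument you gave). That is a slightly different---and arguably simpler---instantiation than the lemma's stated $(N+1,\,k-d+1)$, but it serves equally well downstream. Second, the paper's version needs all $a_i\neq 0$ (to invert them), which is harmless here but worth noting.
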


~
\\

The reduction proceeds via \symssum{d}, a problem which is equivalent to \mss{d} over large fields. 

\begin{definition} [Symmetric Subset-Sum (\symssum{d})]
Given a set of $N$ distinct elements of $\mathbb{F}$,
$A = \{a_1, a_2, \dots, a_N\}$, integer $k$, and $E_1, E_2, \dots E_d \in \mathbb{F}$, 
decide if there exists a subset $S \subseteq A$ of size $k$, such that for every $i\in [d]$ the elementary symmetric sums of the elements of $S=\{s_1,\dots,s_k\} $ satisfy $E_i(S)=\sum_{1\leq j_1<j_2< \dots < j_i\leq k} s_{j_1}\dots s_{j_i}=E_i.$
\end{definition}

Given an instance $\langle A, k, E_1, E_2, \dots, E_d\rangle$ of \symssum{d}, we construct an instance $\langle {\cal D}, y, K \rangle$ of \bdd{d} 
such that there exists a Reed-Solomon codeword $p \in RS_{{\cal D}, K}$ with $\Delta(y, p)\leq (N-K)-d$
if and only if there is a solution to the given instance of \symssum{d}. 
\symssum{d} can be easily seen to be equivalent to \mss{d} over large prime finite fields $\F$ using Newton's identities \cite{Stanley99}, which will complete the proof of Lemma~\ref{le:SMSS_RS_BDD_red}. We note that this connection has been previously made (e.g. \cite{LiWan}).

\begin{lem}\label{le:SSS_RS_BDD_red}
\symssum{d} is polynomial-time reducible to \bdd{d}. 
\end{lem}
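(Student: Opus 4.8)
# Proof Proposal for Lemma~\ref{le:SSS_RS_BDD_red}

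The plan is to encode a \symssum{d} instance $\langle A, k, E_1, \dots, E_d \rangle$ with $A = \{a_1,\dots,a_N\}$ directly as a polynomial reconstruction instance, exploiting the fact that the elementary symmetric functions of a subset $S \subseteq A$ are exactly (up to sign) the lower-order coefficients of the monic polynomial $\prod_{s \in S}(x - s)$. First I would take the evaluation set ${\cal D} = \{\alpha_1, \dots, \alpha_N, \alpha_{N+1}\}$ where $\alpha_i$ is chosen so that the ``$a_i$-slot'' behaves correctly --- concretely I would set the $i$-th data point to be $(\alpha_i, \text{value encoding } a_i)$ together with one extra point $(\alpha_{N+1}, \ast)$ that carries the target symmetric sums, and set the code dimension to $K = k - d + 1$ so that a degree-$<K$ polynomial passing through $K + d = k+1$ points corresponds to a subset of size $k$.

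The heart of the argument is the following correspondence. Given a subset $S \subseteq A$ with $|S| = k$, form $Q_S(x) = \prod_{s \in S}(x-s) = x^k - E_1(S) x^{k-1} + E_2(S) x^{k-2} - \dots$. This is a monic degree-$k$ polynomial; its top $d+1$ coefficients are $1, -E_1(S), E_2(S), \dots, (-1)^d E_d(S)$, and its roots are precisely the elements of $S$. I would arrange the \bdd{d} instance so that (i) a codeword $p$ of degree $<K = k-d+1$ agreeing with the target $y$ on $k+1$ coordinates forces $p$, after multiplying by a fixed public monic degree-$(d)$ polynomial built from the prescribed $E_i$'s, to become a monic degree-$k$ polynomial whose coefficient vector matches $(1, -E_1, \dots, (-1)^d E_d)$ in its top $d+1$ entries; and (ii) the $k+1$ agreement coordinates must be among the $a_i$-slots, so that the $a_i$'s at those coordinates are exactly the roots of that degree-$k$ polynomial, hence form the desired subset $S$ with $E_i(S) = E_i$. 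This is essentially the construction of \cite{gandikota2015np} generalized from $d \in \{1,2,3\}$ to arbitrary $d$; I would cite that construction and verify the coefficient bookkeeping works uniformly in $d$.

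For completeness (a YES \symssum{d} instance gives a YES \bdd{d} instance): given $S$, let $p$ be the unique degree-$<K$ polynomial such that $g(x)\, p(x) = Q_S(x)$ where $g$ is the fixed degree-$d$ "header" polynomial with coefficients determined by $E_1,\dots,E_d$ --- one checks that divisibility holds exactly because $Q_S$'s top coefficients match $g$'s, so the quotient has the right degree --- and then $p$ agrees with $y$ on the $k$ coordinates indexed by $S$ plus the extra coordinate, giving $k+1 = K+d$ agreements. For soundness, given a degree-$<K$ codeword $p$ with $k+1$ agreements: the extra coordinate pins down the leading behavior forcing $g \cdot p$ to be monic of degree exactly $k$ with the prescribed top-$d$ coefficients; the remaining $k$ agreements lie in $a_i$-slots; evaluating $g\cdot p$ there shows those $a_i$'s are roots of the monic degree-$k$ polynomial $g\cdot p$, so they are distinct (the $\alpha_i$ are distinct, and the encoding is injective on the $a_i$) and comprise all $k$ roots, whence their elementary symmetric functions read off the coefficients of $g\cdot p$, which equal $E_1,\dots,E_d$.

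The main obstacle I anticipate is the precise design of the encoding map $a_i \mapsto (\alpha_i, y_i)$ and the header polynomial $g$ so that (a) a short agreement set is forced to avoid the extra coordinate being "skippable" --- i.e., ruling out degenerate codewords that achieve $K+d$ agreements without the structure we want --- and (b) the division $Q_S = g \cdot p$ has the quotient land in degree exactly $<K$ rather than lower, which requires the header coefficients and the dimension $K = k-d+1$ to be matched exactly. This is where the large-characteristic hypothesis and the distinctness of evaluation points $\alpha_i$ are used. Since this is precisely the reduction already carried out for small $d$ in \cite{gandikota2015np}, the bulk of the verification is routine bookkeeping, and I would present it by isolating the two directions above and checking the degree count $K + d = k + 1$ and block length $N+1$ claimed in Lemma~\ref{le:SMSS_RS_BDD_red}.
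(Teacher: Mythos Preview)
Your high-level setup is right --- block length $N+1$, dimension $K=k-d+1$, one extra point carrying the targets, and the correspondence between a size-$k$ subset $S$ and the monic polynomial $Q_S(x)=\prod_{s\in S}(x-s)$ whose top $d+1$ coefficients encode $E_1(S),\dots,E_d(S)$. But the mechanism you propose for passing between $Q_S$ and a degree-$(k-d)$ codeword $p$ is broken. You write ``let $p$ be the unique degree-$<K$ polynomial such that $g(x)\,p(x)=Q_S(x)$ \dots\ divisibility holds exactly because $Q_S$'s top coefficients match $g$'s.'' Matching leading coefficients does \emph{not} imply divisibility: if $g$ is a fixed monic degree-$d$ polynomial, then $g\mid Q_S$ forces the $d$ roots of $g$ to lie in $S$, which is absurd since $g$ depends only on the targets $E_i$ and not on which subset $S$ solves the instance. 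There is no multiplicative factorization $Q_S=g\cdot p$ with $g$ fixed in advance.

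The paper's reduction avoids this by using \emph{polynomial reversal} rather than multiplication. It takes evaluation points $\alpha_i=a_i^{-1}$ together with $0$, sets $f(x)=x^d-E_1x^{d-1}+\dots+(-1)^{d-1}E_{d-1}x$, and defines the target values by $y_i=-f(a_i)$ and $y_{N+1}=(-1)^dE_d$. The point is that reversing $Q_S$ (i.e.\ forming $x^k Q_S(1/x)$) moves the top $d+1$ coefficients --- which are the prescribed symmetric sums --- to the \emph{bottom}, where they can be cancelled additively by the fixed polynomial $x^d f(1/x)$; what remains after dividing out $x^d$ is the codeword $p(x)=(x^kQ_S(1/x)-x^df(1/x))/x^d$ of degree $k-d$. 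Conversely, from any $p$ with $k+1$ agreements one forms $g(x)=x^{k-d}(p(1/x)+f(x))$, shows it is degree-$k$ with too many roots unless $p$ passes through the extra point $(0,(-1)^dE_d)$, and then reads off the subset as its $k$ roots. So the missing idea in your sketch is this reversal-plus-subtraction trick; your ``header times quotient'' picture should be replaced by it.
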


\begin{proof}
Given an instance $\langle A, k,E_1, E_2, \dots, E_d\rangle$ of \symssum{d}, we construct an instance $\langle {\cal D}, y, K \rangle$ of \bdd{d} such that there exists a Reed-Solomon codeword $p \in RS_{{\cal D}, K}$ with $\Delta(y, p)\leq (N-K)-d$ 
if and only if there is a solution to the given instance of \symssum{d}. Here, $A = \{a_1, a_2, \dots, a_N\}$ is a set of distinct, non-zero elements of $\mathbb{F}$ , $E_1, E_2, \dots, E_d \in \mathbb{F}$ and $k \in \mathbb{Z}$. 

Let $K = k-d+1$.
 Define the degree $d$ polynomial $f(x) := x^d - E_1 x^{d-1} + \dots + (-1)^{d-1}E_{d-1}x$. For each $a_i$ of $A$, define an element $ y_i \in \F$ as
$y_i =  -f(a_i) $. Define the target vector $y = (y_1, \cdots, y_N, (-1)^d E_d)$. 
The set ${\cal D}$ is then given by
${\cal D} = \{ a_1^{-1}, \cdots, a_N^{-1}, 0 \}$
Note that $\langle {\cal D}, y, K \rangle$ is an instance of \bdd{d} which can be constructed in 
polynomial time given the instance $\langle A, k, E_1, \dots, E_d \rangle$ of \symssum{d}.
 Let 
$D = \{(a_i^{-1}, y_i) \mbox{ for all } a_i \in A\}
\cup \{(0,(-1)^d E_d)\}$. Note that a Reed-Solomon codeword $p \in RS_{{\cal D}, K}$ at a distance $(N-K)-d$ from $y$ corresponds to a univariate polynomial $p(x)$ of degree at most $K-1$ which agrees with $D$ in $K+d$ points. 

Let $S$ be the solution to \symssum{d}. We now show that there exists a polynomial of degree at most $k-d ~(= K-1)$ which agrees with $D$ in at least $k+1 ~(= K+d)$ points. Define the following degree $k$ polynomial,
\begin{align*}
g(x) := \prod_{a_i \in S} (x - a_i) = c_0 + c_1x + \dots + c_{k-1} x^{k-1} + x^k
\end{align*}
The coefficients of this polynomial are the symmetric sums of the roots of $g(x)$. 
Therefore, $c_{k-d} = (-1)^d E_d, \dots, c_{k-2} = E_2$, and $c_{k-1} = -E_1$.
Now define,  
\begin{eqnarray*}
p(x) &=& (x^{k} g(1/x) - x^d f(1/x))/x^d\\
&=& c_0 x^{k-d} + c_1 x^{k-d-1} + \dots + c_{k-d}
\end{eqnarray*}
and note that $p(x)$ has degree $k-d$. We point out that $g(1/x)$ refers to the rational function obtained by replacing $x$ by $1/x$ in the polynomial $g(x)$. Also, the constant term of this polynomial is $c_{k-d} = (-1)^d E_d$.
Hence, $p(0) = (-1)^dE_d$ and since $g(a_i) = 0$, for all $a_i \in S$, it follows that 
$p(a_i^{-1}) = -f(a_i) = y_i \mbox{ for all } a_i \in S$.
Therefore, $p(x)$ agrees with $k+1$ points in $D$.

Conversely, we now show that if there is a polynomial $p(x)$ of degree at most $K-1~(=k-d)$ which agrees with $K+d~(=k+1)$ points in $D$, then there is a solution to \symssum{d}.
We first observe that if a degree $k-d$ polynomial passes through $k+1$ points of $D$, 
then it has to pass through $(0, (-1)^dE_d)$. To show this, assume $p(x)$ agrees with $k+1$ points of the form
$(a_i^{-1}, y_i) \in D$.
Let $g(x)$ be a degree $k$ polynomial defined as,
\begin{align*}
g(x) = x^{k-d}(p(1/x) + f(x)) 
\end{align*}
Therefore, if $p(x) = c_0 + c_1x + \dots + c_{k-d}x^{k-d}$, $g(x)$ can be written as
\begin{align*}
g(x) = ~& x^{k} + E_1 x^{k-1} + \dots + (-1)^{d-1} E_{d-1}x^{k-d+1} + c_0x^{k-d} + c_1 x^{k-d-1}  + \dots + c_{k-d}
\end{align*}
If  $p(a_i^{-1}) = y_i = -f(a_i)$  
for $k+1$ points, we have by definition that $g(a_i) = 0$ for these $k+1$~ $a_i's$.
This is a contradiction since $g(x)$ has degree at most $k$ and it cannot have $k+1$ roots.
Therefore, $p(0) = c_0 = (-1)^dE_d$. Also, $g(x)$ has $k$ roots which 
have their first $d$ symmetric sums  equal to $E_1, E_2, \dots, E_d$ respectively. Hence, there exists a 
solution to the given instance of \symssum{d}. 
\end{proof}

Given an instance $\langle A, k, B_1, \cdots, B_d\rangle$ of \mss{d}, we can construct an instance $\langle A, k, E_1, \cdots, E_d\rangle$ of \symssum{d} by setting 
\[
 E_{j}= \frac {1}{j!} \begin{vmatrix}B_{1}&1&0&\cdots \\B_{2}&B_{1}&2&0&\cdots \\\vdots &&\ddots &\ddots \\B_{j-1}&B_{j-2}&\cdots &B_{1}&j-1\\B_{j}&B_{j-1}&\cdots &B_{2}&B_{1}\end{vmatrix} ~~\text{ for every $j \in [d]$. } 
\]
The reduction from \mss{d} to \symssum{d} then follows from Newton's identities. Note that this reduction from from \mss{d} to \symssum{d} holds over finite fields $\F$ if $(j!)^{-1} \in \F$.

We will use the $1$-in-$3$-SAT problem in which we are given a $3$-SAT formula $\phi$ on $n$ variables and $m$ clauses and are asked to determine if there exists an assignment $z \in \{0,1\}^n$ satisfying exactly one literal in each clause. It is known that this problem is NP-hard even for $m = O(n)$ \cite{schaefer1978complexity}.
We will use $[n]$ to denote the set $\{1,2, \ldots, n\}$.
 %contains input to ssum-reduction
\section{Reduction from 1-in-3-SAT to \mss{d}}
\label{sec:reduction}

We start proving Theorem \ref{thm:mss} by describing the reduction from from 1-in-3-SAT to \mss{d} and its properties. 
Henceforth, we denote by $1^{\ell}$ the concatenation of $\ell$ ones, and we let $(1^{\ell})_{10}$ denote the positive integer whose decimal representation is $1^{\ell}$.

\paragraph{Subset Sum Reduction}\label{para:ssum}
 We start by recalling the reduction from $1$-in-$3$-SAT to Subset-Sum which will be used in our reduction to \mss{d}. In that reduction, each variable $(z_t, \overline{z_t})$, $t\in [n]$ is mapped to $2$ integers $a_t'$ (corresponding to $z_t$) and $b_t'$ (corresponding to $\overline{z_t}$). 
 The integers $a'_t$ and $b'_t$ and the target $B$  have the following decimal representation of  length-$(n+m)$: 
\begin{itemize}
\item The decimal representations of $a'_t$ and $b'_t$ consist of two parts: a variable region consisting of the leftmost $n$ digits and a clause region consisting of the (remaining) rightmost $m$ digits.
\item In the variable region, $a'_t$ and $b'_t$ have a $1$ at the $t$-th digit and $0$'s at the other digits. Denote that by $(a_t)^{'v}$.
\item In the clause region, for every $j \in [m]$, $a'_t$ (resp. $b'_t$) has a $1$ at the $j$th location if $z_t$ (resp. $\overline{z_t}$) appears in clause $j$, and a $0$ otherwise. We denote the clause part of $a'_t$ by $(a_t)^{'c}$.
\item We define $ a_t' = 10^{m} a_t^{'v} + a_t^{'c}$. We define $b_t'$ similarly. 
\item The target $B$ is set to the integer whose decimal representation is the all $1$'s, i.e., 
we set $B = 10^{m} (1^n)_{10} + (1^m)_{10}$. 
\end{itemize}
See  Figure~\ref{le:orig_red_fig} for an illustration of the decimal representations. This reduction to Subset-Sum is complete and sound.  
Indeed given a satisfying assignment to the 3-SAT formula $\phi(z)$, the subset $S = \{ a'_t \mid t \in [n], z_t = 1 \} \cup \{ b'_t \mid t \in [n], z_t = 0 \}$ is seen to satisfy that $\displaystyle\sum\limits_{s \in S} s = \displaystyle\sum_{\substack{ t \in [n]\\  z_t = 1}} a'_t + \displaystyle\sum_{\substack{ t \in [n]\\  z_t = 0}} b'_t = B$. Conversely, given a subset $S \subseteq \{ a'_t, b'_t \mid t \in [n] \}$ such that $\displaystyle\sum\limits_{s \in S} s = B$, a satisfying assignment to $\phi(z)$ is constructed from it by setting $z_i = 1$ if $a'_t \in S$ and $0$ otherwise. 
\noindent \begin{figure}[h]
\centering
\begin{tikzpicture}[scale=1.5]
\def \n {13}
\def \t {3}
\def \radius {1.15cm}
\def \margin {3} % margin in angles, depends on the radius
\def \margintwo {0.07} % margin in angles, depends on the radius
\def \radiussmall {0.7cm}
\def \radiuslarge {1.6cm}
\draw[<->,color=black] (-2,0) -- (0,0);
\draw[<->,color=black] (0,0) -- (2,0);
\node[fill=none, scale=0.8, blue] (n4) at (-1,-0.25) {variable region};
\node[fill=none, scale=0.8, blue] (n4) at (1,-0.25) {clause region};
\node[fill=none, scale=0.8, color=purple!75!black] (n4) at (-1,0.25) {$n$ digits};
\node[fill=none, scale=0.8, color=purple!75!black] (n4) at (1,0.25) {$m$ digits};
\node[fill=none, scale=0.8, darkgreen] (n4) at (-2.75,-0.75) {Target: $B = $};
\foreach \s in {1,...,2}
{
\node[fill=none, scale=0.8, color=darkgreen] (n4) at (-2.1+0.3*\s,-0.725) {$1$};
\node[fill=none, scale=0.8, color=darkgreen] (n4) at (-2.1+0.3*\s+0.1,-0.725) {$1$};
\node[fill=none, scale=0.8, color=darkgreen] (n4) at (-2.1+0.3*\s+0.2,-0.725) {$1$};
}
\foreach \s in {3,...,11}
{
\node[fill=none, scale=0.8, color=darkgreen] (n4) at (-2.1+0.3*\s,-0.725) {$\cdot$};
\node[fill=none, scale=0.8, color=darkgreen] (n4) at (-2.1+0.3*\s+0.1,-0.725) {$\cdot$};
\node[fill=none, scale=0.8, color=darkgreen] (n4) at (-2.1+0.3*\s+0.2,-0.725) {$\cdot$};
}
\foreach \s in {12,...,\n}
{
\node[fill=none, scale=0.8, color=darkgreen] (n4) at (-2.1+0.3*\s,-0.725) {$1$};
\node[fill=none, scale=0.8, color=darkgreen] (n4) at (-2.1+0.3*\s+0.1,-0.725) {$1$};
\node[fill=none, scale=0.8, color=darkgreen] (n4) at (-2.1+0.3*\s+0.2,-0.725) {$1$};
}
\end{tikzpicture}
\caption{Decimal representations in the original reduction from $1$-in-$3$-SAT to Subset-Sum. }\label{le:orig_red_fig}
\end{figure}
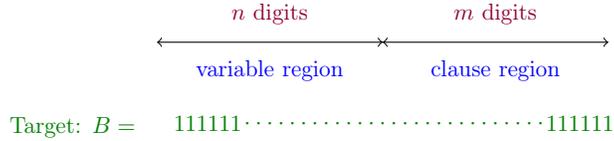

\paragraph{Our Reduction from 1-in-3-SAT to \mss{d}}
An instance of \mss{d} consists of a tuple $\langle A, k, B_1, \dots, B_d \rangle$. In this reduction, each variable $(z_t, \overline{z_t})$ is mapped to $2^{d+1}-2$ distinct rationals: $\{a_{t}\} \cup \{ x_{t,i} \mid i \in [2^{d}-2] \}$ (corresponding to $z_t$) and $\{b_{t}\} \cup \{ y_{t,i} \mid i \in [2^{d}-2] \}$ (corresponding to $\overline{z}_t$). Let $\{a'_t, b'_t: t \in [n]\}$ be the integers produced by the above reduction to Subset-Sum. We denote by $a_t^{'v}$ (resp. $a_t^{'c}$) the variable (resp. clause) region of $a'_t$.  Let $\nu$ be a natural number to be specified later on.  Define:
\begin{equation} \label{eqn:def-vars}
\begin{split}
a_t &:= 10^{\nu}(10^{m} a_t^{'v} + a_t^{'c}) ~\text{ and, } \\
b_t &:= 10^{\nu}(10^{m} b_t^{'v} + b_t^{'c}).
\end{split}
\end{equation} 
For each $t \in [n]$, we will explicitly construct two sets of $2^{d}-2$ {\em auxiliary variables}, $X_t = \{ x_{t, i} \mid i \in [2^{d}-2] \}$ and $Y_t = \{ y_{t, i} \mid i \in  [2^{d}-2] \}$ which satisfy the following properties:
\begin{enumerate}[Property (1):]
\item\label{p1} $\displaystyle\sum\limits_{x \in X_t} x = \displaystyle\sum\limits_{y \in Y_t} y = 0$.\\
\item\label{p2} $\displaystyle\sum\limits_{x \in X_t} x^k - \displaystyle\sum\limits_{y \in Y_t} y^k  = b_t^k - a_t^k \text{ for every } k \in \{2,\ldots,d\}$.
\item\label{p3} For any subset $S \subseteq  \bigcup\limits_{t\in[n]} (X_t \cup Y_t) $, either
$ \left| \displaystyle\sum\limits_{s \in S} s \right| > 10^{m+2n +\nu}$ or $ \left| \displaystyle\sum\limits_{s \in S} s \right| < 10^{\nu}$.
\item\label{p4} Every rational number of $\bigcup\limits_{t\in[n]} (X_t \cup Y_t)$ can be written as a fraction whose numerator and denominator are integers of magnitudes at most $10^{\poly(n,d!)}$. Moreover, $\left|\bigcup\limits_{t\in[n]} (X_t \cup Y_t) \right| = n \cdot (2^{d+1}-4).$
\end{enumerate}
Properties (\ref{p1}) and (\ref{p2}) will be used to ensure  completeness, Property (\ref{p3}) will be used to ensure soundness, and Property (\ref{p4}) will guarantee the polynomial running-time.
Constructing such auxiliary variables forms the crux of the reduction.

Define the set $A = \displaystyle\bigcup\limits_{t\in[n]} (\{ a_t \} \cup \{ b_t \} \cup X_t \cup Y_t)$. 
We will observe that $\lvert A \rvert =n(2^{d+1}-2)$ by showing that all the variables $\{ a_t \}, \{ b_t \}$ and those in $ X_t $ and $Y_t$ for ${t\in[n]}$  are distinct. 

Let $N = \lvert A \rvert =n(2^{d+1}-2), k = \frac N2$. The targets $B_1, \dots, B_d$ are defined as follows:
\begin{equation} \label{eqn:def-targets}
\begin{split}
B_1&:= 10^{\nu}(10^{m} (1^n)_{10} + (1^m)_{10}),\\
B_j &:= \displaystyle\sum\limits_{t = 1}^n a_t^j + \displaystyle\sum\limits_{t = 1}^n\displaystyle\sum\limits_{x \in X_t} x^j \text{   for every } j \in \{2, \dots, d \}.
\end{split}
\end{equation} 
Note that $a_{t}$ (and  $b_{t}$ and $B_1$, respectively) defined above are obtained by inserting $\nu$ zeros to the right of the decimal 
representation of $a_t'$ (resp. $b_t'$ and $B$).  Therefore, $a_t = 10^\nu \cdot a_t'$. Similarly, $b_t =10^\nu \cdot b_t' $ and $B_1 =10^\nu \cdot B$ (see Figure~\ref{le:k_2_red_fig} for a pictorial illustration).
The following fact is immediate from the definitions,  
\begin{fact} \label{claim:bound-at}
For any $x \in \{ a_t, b_t \mid t \in [n] \} \cup B_1$, we have 
\[ 10^\nu < \lvert x \rvert < 10^{m+n+\nu+1} \]
\end{fact}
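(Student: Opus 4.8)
The plan is to read both inequalities directly off the explicit formulas $a_t = 10^\nu a_t'$, $b_t = 10^\nu b_t'$ and $B_1 = 10^\nu B$ recorded just above the statement, together with the digit structure of $a_t', b_t', B$ from the Subset-Sum reduction recalled in the ``Subset Sum Reduction'' paragraph. Recall that $a_t'$, $b_t'$ and $B$ are positive integers whose decimal representations all have length exactly $n+m$, split into an $n$-digit variable region on the left and an $m$-digit clause region on the right, with every digit in $\{0,1\}$.

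For the upper bound I would argue that any integer with at most $n+m$ decimal digits is strictly less than $10^{n+m}$ (in fact at most the repunit $(10^{n+m}-1)/9$), so $a_t', b_t', B < 10^{n+m}$; multiplying by $10^\nu$ gives $|x| < 10^{n+m+\nu} < 10^{m+n+\nu+1}$ in each of the three cases. For the lower bound I would note that the integer being scaled is at least $10$ in each case, so $|x| = 10^\nu \cdot (\ge 10) > 10^\nu$: indeed $a_t'$ carries a $1$ in the $t$-th slot of the variable region, whose place value is $10^{n+m-t} \ge 10^{m} \ge 10$ (using $m \ge 1$, since a $1$-in-$3$-SAT instance has at least one clause), hence $a_t' \ge 10^{m}$; likewise $b_t' \ge 10^{m}$; and $B$ is the all-ones string of length $n+m \ge 2$, so $B \ge 10^{n+m-1} \ge 10$.

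I don't expect any real obstacle — the fact is genuinely immediate. The only points to be slightly careful about are the digit-position bookkeeping (so that the claimed place value $10^{n+m-t}$ for the $t$-th variable-region digit is correct, using $t \le n$), and recording the harmless hypothesis $m \ge 1$, which is precisely what upgrades the lower bound from $\ge 10^\nu$ to the strict inequality $> 10^\nu$ stated.
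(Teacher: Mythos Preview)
Your proposal is correct and matches the paper's treatment: the paper simply declares the fact ``immediate from the definitions'' and gives no further argument, and your digit-counting from the explicit formulas $a_t = 10^\nu a_t'$, $b_t = 10^\nu b_t'$, $B_1 = 10^\nu B$ is exactly the intended one-line verification. Your care with the $m \ge 1$ hypothesis to secure the strict lower bound is appropriate and not something the paper spells out.
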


\begin{figure*}[h]
\centering
\begin{tikzpicture}[scale=1.5]
\def \n {24}
\def \t {3}
\def \radius {1.15cm}
\def \margin {3} % margin in angles, depends on the radius
\def \margintwo {0.07} % margin in angles, depends on the radius
\def \radiussmall {0.7cm}
\def \radiuslarge {1.6cm}

\draw[<->,color=black] (-4,0) -- (-2,0);
\draw[<->,color=black] (-2,0) -- (0,0);
\draw[<->,color=black] (0,0) -- (2,0);
\draw[<->,color=black] (2,0) -- (4,0);
\node[fill=none, scale=0.8, color=purple!75!black] (n4) at (-3,-0.25) {large components region};
\node[fill=none, scale=0.8, color=purple!75!black] (n4) at (-1,-0.25) {variable region};
\node[fill=none, scale=0.8, color=purple!75!black] (n4) at (1,-0.25) {clause region};
\node[fill=none, scale=0.8, color=purple!75!black] (n4) at (3,-0.25) {tiny components region};

\node[fill=none, scale=0.8, blue] (n4) at (-1,0.25) {$n$ digits};
\node[fill=none, scale=0.8, blue] (n4) at (1,0.25) {$m$ digits};
\node[fill=none, scale=0.8, blue] (n4) at (3,0.25) {$\nu$ digits};

\node[fill=none, scale=0.8, color=darkgreen] (n4) at (-4.95,-0.75) {$1$st moment: $B_1 = $};

\foreach \s in {1,...,2}
{
\node[fill=none, scale=0.8, color=darkgreen] (n4) at (-4.25+0.3*\s,-0.725) {$0$};
\node[fill=none, scale=0.8, color=darkgreen] (n4) at (-4.25+0.3*\s+0.1,-0.725) {$0$};
\node[fill=none, scale=0.8, color=darkgreen] (n4) at (-4.25+0.3*\s+0.2,-0.725) {$0$};
}

\foreach \s in {3,...,5}
{
\node[fill=none, scale=0.8, color=darkgreen] (n4) at (-4.25+0.3*\s,-0.725) {$\cdot$};
\node[fill=none, scale=0.8, color=darkgreen] (n4) at (-4.25+0.3*\s+0.1,-0.725) {$\cdot$};
\node[fill=none, scale=0.8, color=darkgreen] (n4) at (-4.25+0.3*\s+0.2,-0.725) {$\cdot$};
}

\node[fill=none, scale=0.8, color=darkgreen] (n4) at (-4.25+0.3*5+0.3,-0.725) {$\cdot$};
\node[fill=none, scale=0.8, color=darkgreen] (n4) at (-4.25+0.3*5+0.4,-0.725) {$\cdot$};

\foreach \s in {6,...,6}
{
\node[fill=none, scale=0.8, color=darkgreen] (n4) at (-4.05+0.3*\s,-0.725) {$0$};
\node[fill=none, scale=0.8, color=darkgreen] (n4) at (-4.05+0.3*\s+0.1,-0.725) {$0$};
\node[fill=none, scale=0.8, color=darkgreen] (n4) at (-4.05+0.3*\s+0.2,-0.725) {$0$};
}

\foreach \s in {7,...,8}
{
\node[fill=none, scale=0.8, color=darkgreen] (n4) at (-3.95+0.3*\s,-0.725) {$1$};
\node[fill=none, scale=0.8, color=darkgreen] (n4) at (-3.95+0.3*\s+0.1,-0.725) {$1$};
\node[fill=none, scale=0.8, color=darkgreen] (n4) at (-3.95+0.3*\s+0.2,-0.725) {$1$};
}

\foreach \s in {9,...,11}
{
\node[fill=none, scale=0.8, color=darkgreen] (n4) at (-3.95+0.3*\s,-0.725) {$\cdot$};
\node[fill=none, scale=0.8, color=darkgreen] (n4) at (-3.95+0.3*\s+0.1,-0.725) {$\cdot$};
\node[fill=none, scale=0.8, color=darkgreen] (n4) at (-3.95+0.3*\s+0.2,-0.725) {$\cdot$};
}

\node[fill=none, scale=0.8, color=darkgreen] (n4) at (-3.95+0.3*11+0.3,-0.725) {$\cdot$};

\foreach \s in {12,...,12}
{
\node[fill=none, scale=0.8, color=darkgreen] (n4) at (-3.85+0.3*\s,-0.725) {$1$};
\node[fill=none, scale=0.8, color=darkgreen] (n4) at (-3.85+0.3*\s+0.1,-0.725) {$1$};
\node[fill=none, scale=0.8, color=darkgreen] (n4) at (-3.85+0.3*\s+0.2,-0.725) {$1$};
}

\foreach \s in {13,...,14}
{
\node[fill=none, scale=0.8, color=darkgreen] (n4) at (-3.75+0.3*\s,-0.725) {$1$};
\node[fill=none, scale=0.8, color=darkgreen] (n4) at (-3.75+0.3*\s+0.1,-0.725) {$1$};
\node[fill=none, scale=0.8, color=darkgreen] (n4) at (-3.75+0.3*\s+0.2,-0.725) {$1$};
}

\foreach \s in {15,...,17}
{
\node[fill=none, scale=0.8, color=darkgreen] (n4) at (-3.75+0.3*\s,-0.725) {$\cdot$};
\node[fill=none, scale=0.8, color=darkgreen] (n4) at (-3.75+0.3*\s+0.1,-0.725) {$\cdot$};
\node[fill=none, scale=0.8, color=darkgreen] (n4) at (-3.75+0.3*\s+0.2,-0.725) {$\cdot$};
}

\node[fill=none, scale=0.8, color=darkgreen] (n4) at (-3.75+0.3*17+0.3,-0.725) {$\cdot$};

\foreach \s in {18,...,18}
{
\node[fill=none, scale=0.8, color=darkgreen] (n4) at (-3.65+0.3*\s,-0.725) {$1$};
\node[fill=none, scale=0.8, color=darkgreen] (n4) at (-3.65+0.3*\s+0.1,-0.725) {$1$};
\node[fill=none, scale=0.8, color=darkgreen] (n4) at (-3.65+0.3*\s+0.2,-0.725) {$1$};
}

\foreach \s in {19,...,20}
{
\node[fill=none, scale=0.8, color=darkgreen] (n4) at (-3.55+0.3*\s,-0.725) {$0$};
\node[fill=none, scale=0.8, color=darkgreen] (n4) at (-3.55+0.3*\s+0.1,-0.725) {$0$};
\node[fill=none, scale=0.8, color=darkgreen] (n4) at (-3.55+0.3*\s+0.2,-0.725) {$0$};
}

\foreach \s in {21,...,23}
{
\node[fill=none, scale=0.8, color=darkgreen] (n4) at (-3.55+0.3*\s,-0.725) {$\cdot$};
\node[fill=none, scale=0.8, color=darkgreen] (n4) at (-3.55+0.3*\s+0.1,-0.725) {$\cdot$};
\node[fill=none, scale=0.8, color=darkgreen] (n4) at (-3.55+0.3*\s+0.2,-0.725) {$\cdot$};
}

\node[fill=none, scale=0.8, color=darkgreen] (n4) at (-3.55+0.3*23+0.1,-0.725) {$\cdot$};

\foreach \s in {24,...,\n}
{
\node[fill=none, scale=0.8, color=darkgreen] (n4) at (-3.45+0.3*\s,-0.725) {$0$};
\node[fill=none, scale=0.8, color=darkgreen] (n4) at (-3.45+0.3*\s+0.1,-0.725) {$0$};
\node[fill=none, scale=0.8, color=darkgreen] (n4) at (-3.45+0.3*\s+0.2,-0.725) {$0$};
}
\end{tikzpicture}
\caption{Decimal representations in the reduction from $1$-in-$3$-SAT to \mss{d}. The ``large components region'' only contains zeros in $\{a_{t},b_{t}: ~ t \in [n]\}$ but contains non-zeros in $\{|x_{t,i}|, |y_{t,i}|: ~ t\in[n], i \in [2^d-2]\}$.}\label{le:k_2_red_fig}
\end{figure*}

The following lemma is proved using Property (\ref{p4}) (and its proof appears in \Cref{sec:props}).

\begin{restatable}{lemma}{LemRuntime}
\label{lemma:runtime}
For any integer $d$, 
the total number of variables in the instance of \mss{d} is $N = n\cdot(2^{d+1}-2)$ and every variable has a $\poly(n, d!)$ digit representation in base $10$. 
\end{restatable}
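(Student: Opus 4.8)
The plan is to derive both assertions — the count $|A| = N = n(2^{d+1}-2)$ and the $\poly(n,d!)$-digit bound on every element of the \mss{d} instance — directly from Property~(\ref{p4}), using Fact~\ref{claim:bound-at} to handle the non-auxiliary numbers and a short triangle-inequality estimate for the high-moment targets. No new construction is needed; the work is bookkeeping together with one magnitude-separation observation.

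For the count, I would first observe that for each $t\in[n]$ the reduction contributes $a_t$, $b_t$, the $2^d-2$ rationals of $X_t$, and the $2^d-2$ rationals of $Y_t$, for a raw total of $n\bigl(2 + 2(2^d-2)\bigr) = n(2^{d+1}-2)$ symbols. To turn this into a count of \emph{distinct} elements of $A$, the second half of Property~(\ref{p4}) already gives $\bigl|\bigcup_{t}(X_t\cup Y_t)\bigr| = n(2^{d+1}-4)$, so it remains to check that the $2n$ numbers $\{a_t,b_t:t\in[n]\}$ are pairwise distinct and disjoint from $\bigcup_t(X_t\cup Y_t)$. Pairwise distinctness is inherited from the standard Subset-Sum reduction ($a_t = 10^\nu a_t'$, $b_t = 10^\nu b_t'$, and the $2n$ integers $\{a_t',b_t'\}$ are distinct by that reduction's setup). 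Disjointness from the auxiliary set is a magnitude separation: applying Property~(\ref{p3}) to a singleton $S=\{x\}$ with $x\in X_t\cup Y_t$ forces $|x|<10^\nu$ or $|x|>10^{m+2n+\nu}$, while Fact~\ref{claim:bound-at} gives $10^\nu < |a_t|,|b_t| < 10^{m+n+\nu+1}\le 10^{m+2n+\nu}$ (using $n\ge 1$); thus every $a_t$ and $b_t$ lies in the open interval $(10^\nu,10^{m+2n+\nu})$, which contains no auxiliary variable. Hence $|A| = 2n + n(2^{d+1}-4) = n(2^{d+1}-2) = N$, and $k=N/2$ is well defined.

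For the digit bound I would distinguish three cases. (i) Each auxiliary variable in $\bigcup_t(X_t\cup Y_t)$ is, by the first half of Property~(\ref{p4}), a fraction with numerator and denominator of magnitude at most $10^{\poly(n,d!)}$, hence has a $\poly(n,d!)$-digit representation. (ii) The numbers $a_t,b_t$ (and $B_1$) satisfy $|a_t|,|b_t|,B_1 < 10^{m+n+\nu+1}$ by Fact~\ref{claim:bound-at}; since the $1$-in-$3$-SAT instance may be taken with $m=O(n)$ and $\nu$ is chosen to be $\poly(n,d!)$ — exactly the regime in which Property~(\ref{p4}) holds — these have $\poly(n,d!)$ digits. (iii) For $2\le j\le d$, the target $B_j = \sum_{t=1}^n a_t^j + \sum_{t=1}^n\sum_{x\in X_t} x^j$ is a sum of at most $n\cdot 2^d \le n\cdot d! = \poly(n,d!)$ terms — here I use the elementary bound $2^d = O(d!)$, which absorbs all $2^d$-type factors into $\poly(n,d!)$ — where each summand is either an integer of at most $j(m+n+\nu+1) = \poly(n,d!)$ digits, or a fraction whose numerator and denominator have magnitude at most $10^{j\cdot\poly(n,d!)} = 10^{\poly(n,d!)}$. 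Placing all summands over the product of their (at most $n\cdot d!$) denominators — a common denominator of magnitude $10^{\poly(n,d!)}$ — and adding numerators increases the digit count by at most a $\poly(n,d!)$ factor, so $B_j$ also admits a $\poly(n,d!)$-digit representation. Together with the trivial bound on $k=N/2$, this covers the entire instance.

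I expect the only two points requiring real care to be the ones just flagged. First, ruling out an accidental collision between some $a_t$ or $b_t$ and an auxiliary variable: this is handled cleanly by combining the bimodality estimate of Property~(\ref{p3}) with Fact~\ref{claim:bound-at} to exhibit the forbidden interval $(10^\nu,10^{m+2n+\nu})$ (equivalently, the structural fact from Figure~\ref{le:k_2_red_fig} that $a_t,b_t$ vanish on the ``large components region'' while the huge auxiliary variables do not, and are too large to match the tiny ones). Second, controlling the digit growth of the higher moments $B_j$: one must simultaneously account for the number of summands ($\Theta(n2^d)$), the exponent ($j\le d$), and the per-summand size from Property~(\ref{p4}), and verify the product is still $\poly(n,d!)$ — the crucial ingredient being $2^d = O(d!)$. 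Everything else is immediate from Property~(\ref{p4}) and the definitions in~(\ref{eqn:def-vars})–(\ref{eqn:def-targets}).
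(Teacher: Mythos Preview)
Your argument is correct and mirrors the paper's proof: both invoke Property~(\ref{p4}) (packaged in the paper as Claims~\ref{bounddr} and~\ref{alpha-tii-dr}) for the auxiliary variables, Fact~\ref{claim:bound-at} for $a_t,b_t,B_1$, a direct triangle-inequality estimate for the higher targets $B_j$ (the paper's Claim~\ref{claim:bound-Bk}), and a magnitude-separation argument for disjointness of $\{a_t,b_t\}$ from the auxiliary set (the paper's Claim~\ref{claim:all-distinct}). The only cosmetic difference is that the paper derives the separation from the explicit lower bound of Proposition~\ref{prop:bound-on-xtij}, while you obtain it by applying the bimodality Property~(\ref{p3}) to singletons; both routes yield the needed gap.
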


In Section \ref{sec:gadget}, we will show how to construct variables satisfying Properties (\ref{p1}), (\ref{p2}), (\ref{p3}) and (\ref{p4}). 
The proof of Theorem~\ref{thm:mss} will follow from the next lemma and Lemma~\ref{lemma:runtime}.  
The proof of Theorem~\ref{thm-main} will then follow from Theorem~\ref{thm:mss} and Lemma~\ref{le:SMSS_RS_BDD_red}.

\begin{lemma}(Main)
\label{lemma:reduction}
There exists a satisfying assignment to a 3-SAT instance $\phi(z_1, \dots, z_n)$ if and only if there exists a subset $S\subseteq A$ of size $\lvert S \rvert = n(2^d-1)$ such that for every $k \in [d]$, 
\[ \displaystyle\sum\limits_{s \in S} s^k = B_k .\]
\end{lemma}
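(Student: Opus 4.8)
The plan is to prove the two directions separately, relying throughout on Properties (\ref{p1})--(\ref{p4}) of the auxiliary variables (whose construction is deferred to Section \ref{sec:gadget}). For the \emph{forward} (completeness) direction, suppose $\phi$ has a satisfying assignment $z\in\{0,1\}^n$. I would take the natural candidate subset
\[
S \;=\; \{\, a_t : z_t=1 \,\} \cup \{\, b_t : z_t=0 \,\} \;\cup\; \bigcup_{t=1}^n X_t,
\]
i.e., for each variable pick the ``true'' literal's base number together with \emph{all} of that variable's $X_t$ auxiliary variables, regardless of which literal is satisfied. First check the size: $|S| = n + n(2^d-2) = n(2^d-1)$, as required. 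For the degree-$1$ equation, Property (\ref{p1}) makes every $X_t$ contribute $0$, so $\sum_{s\in S}s = \sum_{z_t=1}a_t+\sum_{z_t=0}b_t = 10^\nu B = B_1$ by the correctness of the original Subset-Sum reduction (each column of the variable/clause regions sums to exactly $1$ because $z$ is a $1$-in-$3$ assignment). For a degree $k\in\{2,\dots,d\}$, split $S$ into its contributions per variable $t$: the $t$-th block contributes $a_t^k+\sum_{x\in X_t}x^k$ if $z_t=1$, and $b_t^k+\sum_{x\in X_t}x^k$ if $z_t=0$. By Property (\ref{p2}), $b_t^k+\sum_{x\in X_t}x^k = a_t^k+\sum_{y\in Y_t}y^k$, so in either case the $t$-th block contributes exactly $a_t^k+\sum_{x\in X_t}x^k$ — wait, that is only literally true in the $z_t=1$ case; in the $z_t=0$ case Property (\ref{p2}) rewrites it as $a_t^k+\sum_{y\in Y_t}y^k$, which is \emph{not} the same as $a_t^k+\sum_{x\in X_t}x^k$. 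So I should instead define the target $B_k$ (as the paper does in \eqref{eqn:def-targets}) to be precisely $\sum_t a_t^k+\sum_t\sum_{x\in X_t}x^k$, and then in the $z_t=0$ blocks use Property (\ref{p2}) in the direction $b_t^k+\sum_{x\in X_t}x^k = a_t^k + \bigl(\sum_{x\in X_t}x^k - (\text{that difference})\bigr)$; more cleanly: Property (\ref{p2}) says $\sum_{X_t}x^k-\sum_{Y_t}y^k = b_t^k-a_t^k$, which is exactly the statement $a_t^k+\sum_{X_t}x^k = b_t^k+\sum_{Y_t}y^k$ only after adding $\sum_{X_t}x^k$ to both — I will just verify the block-by-block identity directly: whichever of $a_t,b_t$ is in $S$, the block sum equals $a_t^k+\sum_{X_t}x^k$ because when $b_t\in S$ we replace the $Y_t$ variables' absence by noting $b_t^k + \sum_{X_t}x^k = a_t^k+\sum_{X_t}x^k + (b_t^k-a_t^k)$ and $b_t^k-a_t^k = \sum_{X_t}x^k-\sum_{Y_t}y^k$ would make this $a_t^k + 2\sum_{X_t}x^k - \sum_{Y_t}y^k$, which is wrong. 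The correct statement of Property (\ref{p2}) for this to work must be $a_t^k + \sum_{X_t}x^k = b_t^k + \sum_{Y_t}y^k$; then $S$ must contain $X_t$ when $a_t\in S$ and $Y_t$ when $b_t\in S$. I would therefore take $S = \{a_t:z_t=1\}\cup\{b_t:z_t=0\}\cup\bigcup_{z_t=1}X_t\cup\bigcup_{z_t=0}Y_t$, use Property (\ref{p1}) for the still-vanishing degree-$1$ contributions, and Property (\ref{p2}) rewritten as the equality $a_t^k+\sum_{X_t}x^k=b_t^k+\sum_{Y_t}y^k$ to make every block contribute $a_t^k+\sum_{X_t}x^k$, matching $B_k$.

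For the \emph{reverse} (soundness) direction, suppose $S\subseteq A$ has $|S|=n(2^d-1)$ and meets all $d$ moment targets. The key structural claim, which I would isolate as Proposition \ref{prop:soundness}, is that the auxiliary variables in $S$ have \emph{zero net contribution to the degree-$1$ equation}: writing $S = S_0 \sqcup S_{\mathrm{aux}}$ where $S_0 \subseteq \{a_t,b_t\}$ and $S_{\mathrm{aux}}\subseteq\bigcup_t(X_t\cup Y_t)$, I claim $\sum_{s\in S_{\mathrm{aux}}}s = 0$. This is where Property (\ref{p3}) (bimodality) does the work: by Fact \ref{claim:bound-at} the base-number part $\sum_{s\in S_0}s$ lies strictly between $0$ and $n\cdot 10^{m+n+\nu+1}$ in absolute value (it is a sum of at most $2n$ numbers each below $10^{m+n+\nu+1}$), in particular its absolute value is $< 10^{m+2n+\nu}$ for a suitable choice of the gap parameters; since the total is $B_1$ with $10^\nu < B_1 < 10^{m+n+\nu+1}$, the auxiliary part $\sum_{S_{\mathrm{aux}}}s = B_1-\sum_{S_0}s$ has absolute value $< 10^{m+2n+\nu}$; by Property (\ref{p3}) it must then be $< 10^\nu$; but every auxiliary variable, cleared of denominators, is an integer multiple of $10^\nu$ divided by a bounded denominator — more precisely the construction guarantees that a nonzero such subset-sum has absolute value $\geq 10^\nu$ — forcing $\sum_{S_{\mathrm{aux}}}s=0$ exactly. (I will need to state the precise quantitative form of Property (\ref{p3}) and Fact \ref{claim:bound-at} so the two thresholds $10^\nu$ and $10^{m+2n+\nu}$ sandwich things correctly; this is a matter of choosing $\nu$ large enough, handled in Section \ref{sec:gadget}.)

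Granting $\sum_{S_{\mathrm{aux}}}s=0$, the degree-$1$ equation reduces to $\sum_{s\in S_0}s = B_1 = 10^\nu B$, and dividing by $10^\nu$ we recover exactly a solution to the original $1$-in-$3$-SAT-to-Subset-Sum instance on $\{a_t',b_t'\}$ with target $B$; its correctness (already recalled in the ``Subset Sum Reduction'' paragraph) yields that for each $t$ exactly one of $a_t,b_t$ is in $S$, and the resulting assignment satisfies exactly one literal per clause. It remains to check that $S$ then contains, for each $t$, exactly $X_t$ (if $a_t\in S$) or exactly $Y_t$ (if $b_t\in S$) — not some other mixture. Here I would use a counting/cardinality argument: $|S_0| = n$ is forced (exactly one of each pair), so $|S_{\mathrm{aux}}| = n(2^d-1)-n = n(2^d-2)$, which is exactly $\sum_t |X_t| = \sum_t|Y_t|$; combined with the higher-moment equations and Property (\ref{p2}), any deviation from picking the ``matching'' block would break a moment equation by an amount that cannot be cancelled, given that the auxiliary variables of distinct variables $t$ live in disjoint magnitude ranges (again a consequence of the explicit construction and Property (\ref{p4})). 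I expect \textbf{this last bookkeeping step — showing the moment equations force the aux-variable selection to be block-consistent — to be the main obstacle}, since it requires the quantitative separation properties of the gadget rather than just Properties (\ref{p1})--(\ref{p4}) as stated; the cleanest route is probably to argue block-by-block that, after fixing which of $a_t,b_t$ is chosen, the residual degree-$k$ constraints on the auxiliary variables have the all-$X_t$ (resp. all-$Y_t$) selection as their unique feasible solution, which should follow from the same bimodal magnitude structure used for the degree-$1$ argument applied degree by degree, or alternatively from observing that $B_k$ was defined to force it and that the $a_t,b_t$ contributions are already pinned down.
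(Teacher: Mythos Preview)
Your completeness argument, after your self-correction, matches the paper: take $S=\{a_t:z_t=1\}\cup\{b_t:z_t=0\}\cup\bigcup_{z_t=1}X_t\cup\bigcup_{z_t=0}Y_t$, use Property~(\ref{p1}) for $k=1$ and Property~(\ref{p2}) (read as $a_t^k+\sum_{X_t}x^k=b_t^k+\sum_{Y_t}y^k$) for $k\ge 2$.

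There are two problems in your soundness direction, one minor and one conceptual.

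\emph{Minor:} In your proof of Proposition~\ref{prop:soundness}, the ``force to exactly zero'' step is misattributed. You write that a nonzero subset-sum of \emph{auxiliary} variables must have absolute value $\ge 10^\nu$; that is false (Property~(\ref{p3}) explicitly allows it to be nonzero and $<10^\nu$, and the auxiliary variables themselves are not multiples of $10^\nu$). The correct argument is on the other side: $B_1$ and every $a_t,b_t$ are integer multiples of $10^\nu$ by construction (\eqref{eqn:def-vars}, \eqref{eqn:def-targets}), so $B_1-\sum_{s\in S_0}s$ is a multiple of $10^\nu$, hence either $0$ or $\ge 10^\nu$ in absolute value. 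Together with your correct bound $|B_1-\sum_{s\in S_0}s|<10^{m+2n+\nu}$ and Property~(\ref{p3}), this squeezes the auxiliary sum to~$0$.

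\emph{Conceptual:} Your last paragraph is unnecessary, and what you flag as ``the main obstacle'' is not an obstacle at all. The reverse direction of the lemma only asks you to produce a satisfying $1$-in-$3$-SAT assignment from $S$; it does \emph{not} ask you to show that $S$ has the canonical form built in the completeness direction. Once $\sum_{s\in S_{\mathrm{aux}}}s=0$, the degree-$1$ equation gives $\sum_{s\in S_0}s=B_1=10^\nu B$, and the soundness of the classical $1$-in-$3$-SAT $\to$ Subset-Sum reduction on $\{a'_t,b'_t\}$ immediately yields the assignment. The size constraint and the higher-moment equations play no role in this direction, and no ``block-consistency'' check on the auxiliary variables is needed. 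The paper's proof ends exactly here.
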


\begin{proof}[Proof of Theorem~\ref{thm:mss}]
Recall that $N=n(2^{d+1}-2)$, and so $|S|=|A|/2=N/2$. 
From Lemma~\ref{lemma:runtime} above, 
we know that every element constructed in the instance of \mss{d} 
has $\poly(n, d!)$ digit representation. 
Therefore, for $d = O(\log n/ \log \log n)$, the reduction runs in $\poly(n)$ time. 
 
 Let $c>0$ be a sufficiently small absolute constant. 
The NP-hardness of \mss{d} for $d < c \log{N}/\log\log{N}$ (under polynomial-time reductions) and for $d < c \log{N}$ (under quasipolynomial time reductions) over the field of rationals 
then follows from Lemma~\ref{lemma:reduction}.

By Lemma~\ref{lemma:runtime} above, we deduce the same hardness results for \mss{d} over prime fields of size $2^{\poly(N)}$. 
\end{proof}

We now prove Lemma~\ref{lemma:reduction}.
\begin{proof}[Proof of Lemma~\ref{lemma:reduction}] 
We start by proving the completeness of our reduction. 
We show that given a satisfying assignment $z$ to the 3-SAT instance $\phi(z_1, \dots, z_n)$, there exists a subset $S\subseteq A$ such that for every $k \in [d]$, 
\[ \displaystyle\sum\limits_{s \in S} s^k = B_k .\]

Consider the following subset $S$ of variables:
\[  S \triangleq \displaystyle\bigcup_{t \in [n], z_t = 1} \{ a_t \} \displaystyle\bigcup_{t \in [n], z_t = 1} X_t \displaystyle\bigcup_{t \in [n], z_t = 0} \{ b_t \} \displaystyle\bigcup_{t \in [n], z_t = 0} Y_t .\]
Note that $\lvert S \rvert = n(2^d-1) = \frac N2$ since the number of auxiliary variables included in $S$ corresponding to each $ t\in [n]$ is exactly $2^d - 2$.

For every $k \in [d]$, we have that
\begin{equation}
\label{eqn:completeness}
 \displaystyle\sum\limits_{s \in S} s^k =  \displaystyle\sum_{\substack{ t \in [n]\\  z_t = 1}} \left( a_t^k + \displaystyle\sum\limits_{x \in X_t} x^k \right) + 
 \displaystyle\sum_{\substack{ t \in [n]\\  z_t = 0}} \left( b_t^k  + \displaystyle\sum\limits_{y \in Y_t} y^k\right)
\end{equation}

By Property~(\ref{p2}) of the auxiliary variables, we have that for any $t \in [n]$ and any $k \in \{2, 3, \dots, d\}$, 
\[ \displaystyle\sum\limits_{x \in X_t} x^k - \displaystyle\sum\limits_{y \in Y_t} y^k  = b_t^k  - a_t^k.\]
Summing this equation over all $t \in [n]$, such that $z_t = 0$, we get 
\begin{equation} 
\label{eqn:props}
\displaystyle\sum_{\substack{ t \in [n]\\  z_t = 0}} \left( b_t^k +  \displaystyle\sum\limits_{y \in Y_t} y^k  \right) = 
\displaystyle\sum_{\substack{ t \in [n]\\  z_t = 0}} \left( a_t^k+ \displaystyle\sum\limits_{x \in X_t} x^k   \right)
\end{equation}

From \ref{eqn:completeness} and \ref{eqn:props}, we conclude that for every $k \in \{2,3,\dots,d\}$, 
\[  \displaystyle\sum\limits_{s \in S} s^k = \displaystyle\sum\limits_{t=1}^n \left( a_t^k + \displaystyle\sum\limits_{x \in X_t} x^k  \right) = B_k \]

For $k = 1$, Property~(\ref{p1}) implies that for every $t \in [n]$, $\displaystyle\sum\limits_{x \in X_t} x = 0$ and $\displaystyle\sum\limits_{y \in Y_t} y = 0$. Therefore, 
\begin{equation}
\label{eq:comp}
 \displaystyle\sum\limits_{s \in S} s = \displaystyle\sum_{\substack{ t \in [n]\\  z_t = 1}} a_t + \displaystyle\sum_{\substack{ t \in [n]\\  z_t = 0}} b_t 
 \end{equation}
Recall the variables $a_t', b_t'$ and $B$ from the Subset Sum reduction defined at the beginning of the proof. Note that 
$(\displaystyle\sum_{\substack{ t \in [n]\\  z_t = 1}} a'_t + \displaystyle\sum_{\substack{ t \in [n]\\  z_t = 0}} b'_t ) =  B$. Therefore,  we can rewrite Equation~(\ref{eq:comp}) as: 
\[  \displaystyle\sum\limits_{s \in S} s = 10^\nu \cdot \left( \displaystyle\sum_{\substack{ t \in [n]\\  z_t = 1}} a'_t + \displaystyle\sum_{\substack{ t \in [n]\\  z_t = 0}} b'_t \right) = 10^\nu \cdot B = B_1. \]

We now prove the soundness of our reduction. 
Let $S$ be a solution to the \mss{d} instance. That is, $S \subseteq A$ is such that 
$ \displaystyle\sum\limits_{s \in S} s^k = B_k$ for every $k \in [d]$. Proposition~\ref{prop:soundness} -- which is stated below -- shows that the auxiliary variables in $S$ should sum to $0$. Therefore, there exists a subset $S' \subseteq \{a_t, b_t \mid t \in [n] \}$ such that $\displaystyle\sum\limits_{s\in S'} s = B_1$. By definition of $a_t, b_t$ and $B_1$, it follows that there exists a subset of $\{a_t', b_t' \mid t \in [n] \}$ which sums to $B$, and the soundness of our reduction then follows from the soundness of the Subset Sum reduction.

\begin{proposition}
\label{prop:soundness}
Let $S \subseteq A$ be such that $ \displaystyle\sum\limits_{s \in S} s = B_1$. Let $D = \bigcup\limits_{t\in[n]} (X_t \cup Y_t)$ be the set of all the auxiliary variables. Then, 
\[ \displaystyle\sum\limits_{y \in S \cap D} y  = 0. \]
\end{proposition}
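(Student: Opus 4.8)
The plan is to play the magnitude window of Fact~\ref{claim:bound-at} --- which controls $B_1$ and all the ``original'' Subset-Sum numbers $a_t,b_t$ --- against the bimodality Property~(\ref{p3}), which says that any sum of auxiliary variables is either tiny ($<10^{\nu}$ in absolute value) or huge ($>10^{m+2n+\nu}$ in absolute value). The ``huge'' option will contradict $\sum_{s\in S}s=B_1$, and the ``tiny'' option, combined with a divisibility observation, will pin the auxiliary part of $S$ to exactly $0$.

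Concretely, first I would split $S=S_0\sqcup S_1$, where $S_0=S\cap\{a_t,b_t:\ t\in[n]\}$ and $S_1=S\cap D$ (these are disjoint, since all the numbers constructed in the reduction are pairwise distinct). Put $P:=\sum_{s\in S_0}s$ and $Q:=\sum_{s\in S_1}s$, so that $P+Q=B_1$. By Fact~\ref{claim:bound-at} every element of $S_0$ is a positive integer below $10^{m+n+\nu+1}$, and $|S_0|\le 2n$, so $0\le P<2n\cdot 10^{m+n+\nu+1}$; the same Fact gives $0<B_1<10^{m+n+\nu+1}$. Applying Property~(\ref{p3}) to $S_1\subseteq D$ leaves two cases: $|Q|>10^{m+2n+\nu}$, or $|Q|<10^{\nu}$. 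In the first case $|P|=|B_1-Q|\ge |Q|-B_1>10^{m+2n+\nu}-10^{m+n+\nu+1}=(10^{n-1}-1)\cdot 10^{m+n+\nu+1}$, which exceeds $2n\cdot 10^{m+n+\nu+1}$ for all $n\ge 2$, contradicting the bound on $P$. Hence $|Q|<10^{\nu}$.

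It then remains to upgrade $|Q|<10^{\nu}$ to $Q=0$. By the definitions in~(\ref{eqn:def-vars}) and~(\ref{eqn:def-targets}), $B_1=10^{\nu}B$ and each $a_t,b_t$ equals $10^{\nu}$ times an integer, so $B_1$ and $P$ are integers divisible by $10^{\nu}$, and hence so is $Q=B_1-P$; the only integer multiple of $10^{\nu}$ of absolute value strictly less than $10^{\nu}$ is $0$, so $Q=\sum_{y\in S\cap D}y=0$, which is exactly the claim. I do not anticipate a genuine obstacle: all of the substance lies in Property~(\ref{p3}), whose construction (carried out in Section~\ref{sec:gadget}) is the real crux; here one needs only the elementary magnitude bookkeeping above together with the observation that $P$ and $B_1$ are honest integer multiples of $10^{\nu}$, so that the sub-$10^{\nu}$ residue $Q$ is forced to vanish rather than merely be small.
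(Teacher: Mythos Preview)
Your proof is correct and follows essentially the same approach as the paper: split $S$ into the ``original'' part $S_0$ and the auxiliary part $S_1$, use Fact~\ref{claim:bound-at} to bound the magnitude of $B_1-\sum_{s\in S_0}s$, invoke Property~(\ref{p3}), and use the divisibility of $B_1$ and of each $a_t,b_t$ by $10^{\nu}$ to force the auxiliary sum to vanish. The only difference is the order of the two steps --- the paper applies the $10^{\nu}$-divisibility first (so that $Q$ is either $0$ or $\geq 10^{\nu}$ in absolute value) and then derives a contradiction with Property~(\ref{p3}), whereas you apply Property~(\ref{p3}) first to force $|Q|<10^{\nu}$ and then conclude $Q=0$ by divisibility; the content is identical.
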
 

\begin{proof}[Proof of Proposition~\ref{prop:soundness}]
Since $\displaystyle\sum\limits_{s \in S} s = B_1$, we have that 
\[ \displaystyle\sum\limits_{y \in S \cap D} y + \displaystyle\sum\limits_{s \in S \setminus D} s = B_1. \]
Note that $S \setminus D \subseteq \{ a_t, b_t \mid t \in [n] \}$. Since the $\nu$ least significant digits of $B_1$ and those of each element of $S\setminus D$ are all equal to $0$, either $\left| B_1 - \displaystyle\sum\limits_{s \in S \setminus D} s \right| = 0$ or $\left| B_1 - \displaystyle\sum\limits_{s \in S \setminus D} s \right| > 10^\nu.$ If $\left| B_1 - \displaystyle\sum\limits_{s \in S \setminus D} s \right| = 0$, then we are done. Henceforth, we assume that $\left| B_1 - \displaystyle\sum\limits_{s \in S \setminus D} s \right| > 10^\nu$. 
By Fact~\ref{claim:bound-at}, the elements of $S \setminus D$ as well as $B_1$ all have magnitudes at most $10^{m+n+\nu+1}$. Therefore, $\left| B_1 - \displaystyle\sum\limits_{s \in S \setminus D} s \right| \leq (2n+1) \cdot 10^{m + n + \nu +1} <  10^{m + 2n +\nu}$. On the other hand, by Property~(\ref{p3}) of the auxiliary variables, we know that either
$\left| \displaystyle\sum\limits_{y \in S \cap D} y \right| > 10^{m+2n+\nu}$ or 
$ \left| \displaystyle\sum\limits_{y \in S \cap D} y \right| < 10^{\nu}$. Since 
$\left| \displaystyle\sum\limits_{y \in S \cap D} y \right| = \left| B_1 - \displaystyle\sum\limits_{s \in S \setminus D} s \right|$, we get a contradiction. Therefore, 
$\displaystyle\sum\limits_{y \in S\cap D} y  = 0$.

\end{proof} 

\end{proof} % end of Thm~\ref{thm:mss}

 %contains input to figure1, figure2
\subsection{Constructing the auxiliary variables $X_t$, $Y_t$}\label{sec:gadget}

We now show how to construct the auxiliary variables, starting from the $a_t, b_t$ variables described before, for every $t\in [n]$. 
We do so in Algorithm \ref{algo:aux-vars}, the  \generator. For every $t \in [n]$, we construct $2(2^{d}-2)$ distinct auxiliary variables which satisfy the Properties \ref{p1}, \ref{p2}, \ref{p3} and \ref{p4} stated above.
The \generator outputs the union of 
 the variables generated in Algorithm~\ref{algo:gadget}, the \atomicsolver, using the recursive coupling idea  described  in Section~\ref{subsec:pf_overv_tec}. We use ${\bf 1}^{\ell}$ (and ${\bf 0}^{\ell}$) to denote a column vector of $\ell$ $1$'s ( $0$'s) respectively. For any vector $v$, let $v^T$ denote its transpose.

\begin{algorithm}[h]
  \caption{\generator:}
  \label{algo:aux-vars}
  \textbf{Input}: $\bigcup\limits_{t\in[n]} \{ a_t, b_t \}$\\
  \textbf{Output}: Sets of auxiliary variables $X_t, Y_t$ for every $t \in [n]$.\\
  \begin{algorithmic}[1]
  \FOR {$t \in [n]$}
	\STATE $X_t = \emptyset$
	\STATE $Y_t = \emptyset$
	\FOR {$i \in \{2, \dots, d\}$}
		\STATE $R_{t,i}= (b_t^i - a_t^i) + \displaystyle\sum\limits_{y \in Y_t} y^i - \displaystyle\sum\limits_{x \in X_t} x^i$
		\STATE Let $\left\{ x_{t, i, j} \mid j \in [2^{i-1}] \right\}  \bigcup  \left\{ y_{t, i, j} \mid j \in [2^{i-1}] \right\}  =  $ \atomicsolverproc{t}{i}
    		\STATE Let $X_t = X_t \bigcup \{ x_{t, i, j} \mid j \in [2^{i-1}] \} $ and  $Y_t = Y_t \bigcup \{ y_{t, i, j} \mid j \in [2^{i-1}] \} $
    \ENDFOR
\ENDFOR
\end{algorithmic}
\end{algorithm}

We now give the details of \atomicsolverproc{t}{i} for any $t\in [n]$ and $i \in \{2, 3, \dots, d\}$. 
Let $\nu = n^2$, and $M = m+ \nu + n + 1$. For every $t \in [n], i \in \{ 2, 3, \dots, d\}$ and $r \in [i]$, we define the functions
$f(t, i) :=  (i-1)! \cdot \nu_t$ and $g(t,i,r) := (t-1)d^2 + (i-1)i + r$, 
where $\nu_t$ is the $t^{th}$ prime integer greater than $n^4$. Note that $M = O(n^3)$ and $10^M > B_1$, by Fact~\ref{claim:bound-at}. We will use the fact that $\nu_t$ is much larger than $M$ later. Using the Prime Number Theorem \cite{Vshoup09}, it follows that the number of primes in the interval $[n^4, n^5] $ is larger than $n$, and thus $\nu_n < n^5$. Moreover, these $n$ primes can be found in deterministic polynomial time \cite{AKS04}. 

We will implement the recursive coupling idea of the {\atomicsolver} described  in Section~\ref{subsec:pf_overv_tec}, in terms of matrix algebra. For example, recall that in the first step of the variable coupling, we set $x_1-y_1=\alpha$, $y_2-x_2=\alpha$ and $x_1-y_2=\beta$. We can then express $x_1, x_2, y_1, y_2$ as a linear combination of $\alpha, \beta$, where we use the extra degree of freedom to choose $x_1=-x_2$ , as follows:
 $(x_1, x_2)^T=\frac12 \begin{bmatrix} 1 & 1 \\ -1 & -1 \end{bmatrix}\cdot (\alpha,  \beta)^T$, and $(y_1, y_2)^T=\frac 12 \begin{bmatrix} -1& 1 \\ 1 & -1 \end{bmatrix} \cdot (\alpha,  \beta)^T.$  
In general, the polynomial equations give rise to $2^{i}-1$ linear constraints on $2^i$ unknowns $(x_1, \cdots, x_{2^{i-1}}, y_1, \cdots, y_{2^{i-1}})$. The extra degree of freedom allows us to preserve the symmetry of the solution, which enables us to describe the algorithm and its analysis in a clean form.

\begin{algorithm}[h]
\caption{\atomicsolverproc{t}{i}: }
\label{algo:gadget}
\textbf{Input}: $i, t, R_{t,i}$  \\
\textbf{Output}: Set of auxiliary variables, $ \{x_{t, i, j} \mid j \in [2^{i-1}]\} \bigcup \{y_{t, i, j} \mid j \in [2^{i-1}]\} $\\
	\begin{algorithmic}[1]
	\STATE Let $\nu_t$ be the $t^{th}$ prime integer greater than $n^4$
	\STATE Let $f(t, i) =  (i-1)! \cdot \nu_t$
	\STATE Let $g(t,i,r) = (t-1)d^2 + (i-1)i + r$ for all $1 < r < i$
	\STATE $\alpha_{t, i, 1}= 10^{f(t,i)} $
	\STATE $\alpha_{t, i, r} = 10^{g(t,i,r)}$ for all $1 < r < i$
	\STATE $\alpha_{t, i, i} =  R_{t,i} / (i ! \prod\limits_{ r\in [i-1]} \alpha_{t,i,r})$ 
	\STATE $\alpha_{t, i} = [\alpha_{t, i, 1}, \dots, \alpha_{t, i, i}]^T$
	\IF{$i =2$}
		\STATE $A_2 = \begin{bmatrix} 1 & 1 \\ -1 & -1 \end{bmatrix}$ and  $B_2 = \begin{bmatrix} 1& -1 \\ -1 & 1 \end{bmatrix}$
	\ELSE
		\STATE $A_i = \begin{bmatrix} A_{i-1}& {\bf 1}^{2^{i-2}}  \\ B_{i-1} & -{\bf 1}^{2^{i-2}}  \end{bmatrix}$
and  $B_i = \begin{bmatrix} B_{i-1}& {\bf 1}^{2^{i-2}} \\ A_{i-1} & -{\bf 1}^{2^{i-2}} \end{bmatrix}$
	\ENDIF
	\STATE $[x_{t, i, 1}, \dots, x_{t, i, 2^{i-1}}]^T = \frac12 \cdot A_i \cdot \alpha_{t, i}$
	\STATE $[y_{t, i, 1}, \dots, y_{t, i, 2^{i-1}}]^T = \frac12 \cdot B_i \cdot \alpha_{t, i}$
	\STATE Return $ \{x_{t, i, j} \mid j \in [2^{i-1}]\} \bigcup \{y_{t, i, j} \mid j \in [2^{i-1}]\} $	
	\end{algorithmic}
\end{algorithm}

\begin{figure*}[h]
\centering
\begin{tikzpicture}[scale=1.5]
\def \n {24}
\def \t {3}
\def \radius {1.15cm}
\def \margin {3} % margin in angles, depends on the radius
\def \margintwo {0.07} % margin in angles, depends on the radius
\def \radiussmall {0.7cm}
\def \radiuslarge {1.6cm}

\draw[<->,color=black] (-4,0) -- (-2,0);
\draw[<->,color=black] (-2,0) -- (0,0);
\draw[<->,color=black] (0,0) -- (2,0);
\draw[<->,color=black] (2,0) -- (4,0);
\node[fill=none, scale=0.8, color=purple!75!black] (n4) at (-3,-0.25) {large components region};
\node[fill=none, scale=0.8, color=purple!75!black] (n4) at (-1,-0.25) {variable region};
\node[fill=none, scale=0.8, color=purple!75!black] (n4) at (1,-0.25) {clause region};
\node[fill=none, scale=0.8, color=green!75!black] (n4) at (-1,-0.50) {$n$-digits};
\node[fill=none, scale=0.8, color=green!75!black] (n4) at (1,-0.50) {$m$-digits};
\node[fill=none, scale=0.8, color=purple!75!black] (n4) at (3,-0.25) {tiny components region};
\node[fill=none, scale=0.8, color=green!75!black] (n4) at (3,-0.50) {$\nu$-digits};

\node[fill=none, scale=0.8, blue] (n4) at (-3,0.25) {$\alpha_{t, i, 1}$};
\node[fill=none, scale=0.8, blue] (n4) at (0,0.25) {$a_t, b_t, B_1$};
%\node[fill=none, scale=0.8, blue] (n4) at (1,0.25) {$m$ $\Sigma$-ary digits};
\node[fill=none, scale=0.8, blue] (n4) at (3,0.25) {$\alpha_{t, i, 2}, \dots, \alpha_{t, i, i}$};

\end{tikzpicture}
\caption{Relative distribution of $\alpha_{t, i, r}$ for any $i \in \{2, \cdots, d\}$ with respect to $a_t$, $b_t$ and $B_1$.}\label{le:k_3_red_fig}
\end{figure*}
 %contains input to figure3
\section{Verifying Properties \ref{p1}, \ref{p2}, \ref{p3}, \ref{p4} }\label{sec:props}
In this section, 
we prove that the variables generated by the \generator satisfy Properties~\ref{p1}, \ref{p2}, \ref{p3}, \ref{p4}. This is done via the following lemmas.

% PROPERTY 1 & 2
\begin{lemma} 
\label{lemma:condition12}
For every $t \in [n]$, the auxiliary variables satisfy the following conditions
\begin{align*}
\displaystyle\sum\limits_{x \in X_t} x = \displaystyle\sum\limits_{y \in Y_t} y &= 0 \\
\displaystyle\sum\limits_{x \in X_t} x^k - \displaystyle\sum\limits_{y \in Y_t} y^k  &= b_t^k - a_t^k \text{ for every } k \in \{2,\ldots,d\}.
\end{align*}
\end{lemma}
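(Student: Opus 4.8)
The plan is to prove Lemma~\ref{lemma:condition12} by establishing that the \atomicsolverproc{t}{i} procedure, for each fixed $t\in[n]$ and each $i\in\{2,\dots,d\}$, outputs $2^{i-1}$ pairs $\{(x_{t,i,j},y_{t,i,j})\}$ satisfying the ``atomic'' identities~(\ref{eq:low_def_canc}) and (\ref{eq:deg_i_match}), namely $\sum_{\ell}(x_{t,i,\ell}^j-y_{t,i,\ell}^j)=0$ for $2\le j<i$, and $\sum_\ell (x_{t,i,\ell}^i-y_{t,i,\ell}^i)=R_{t,i}$, and in addition $\sum_\ell x_{t,i,\ell}=\sum_\ell y_{t,i,\ell}=0$. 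Granting this, the lemma follows by summing: Property~(\ref{p1}) is immediate since each atomic block contributes zero to the first moment; and for $k\in\{2,\dots,d\}$, writing $X_t=\bigcup_{i=2}^{d}\{x_{t,i,\ell}\}$, $Y_t=\bigcup_{i=2}^d\{y_{t,i,\ell}\}$, we get $\sum_{x\in X_t}x^k-\sum_{y\in Y_t}y^k=\sum_{i=2}^d\sum_\ell(x_{t,i,\ell}^k-y_{t,i,\ell}^k)$. For $i>k$ the $i$-th block contributes $0$ by~(\ref{eq:low_def_canc}); for $i=k$ it contributes $R_{t,k}$; for $i<k$ it contributes $\sum_\ell(x_{t,i,\ell}^k-y_{t,i,\ell}^k)$. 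By the definition of $R_{t,k}$ in Algorithm~\ref{algo:aux-vars}, line~5, $R_{t,k}=(b_t^k-a_t^k)+\sum_{i<k}\sum_\ell(y_{t,i,\ell}^k-x_{t,i,\ell}^k)$, so everything telescopes to exactly $b_t^k-a_t^k$.

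The core work is therefore the analysis of one call to \atomicsolverproc{t}{i}. Here I would argue by induction on $i$ that the matrices $A_i,B_i$ implement the recursive coupling scheme of Section~\ref{subsec:pf_overv_tec}. The key structural claim is: if $u=\tfrac12 A_i\alpha$ and $v=\tfrac12 B_i\alpha$ with $\alpha=(\alpha_1,\dots,\alpha_i)^T$, then for every $1\le s\le i$ the ``coupled differences'' behave as a repeated derivative/finite-difference operator — concretely, for any polynomial $P$, the alternating sum $\sum_\ell \big(P(u_\ell)-P(v_\ell)\big)$ equals, up to the coupling shifts $\alpha_1,\dots,\alpha_{i}$, a multiple of $\alpha_1\cdots\alpha_i$ times the $i$-th finite difference of $P$. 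For monomials $P(x)=x^j$ with $j<i$ the $i$-th finite difference vanishes, giving~(\ref{eq:low_def_canc}); for $j=i$ it equals $i!\,\alpha_1\cdots\alpha_i$, and since line~6 sets $\alpha_{t,i,i}=R_{t,i}/(i!\prod_{r<i}\alpha_{t,i,r})$ we get exactly $R_{t,i}$, which is~(\ref{eq:deg_i_match}). The first-moment cancellation $\sum_\ell u_\ell=\sum_\ell v_\ell=0$ should come from the fact that each row-block of $A_i$ (resp.\ $B_i$) paired with the corresponding block of $B_{i-1}$ (resp.\ $A_{i-1}$) and the $\pm\mathbf 1$ column is antisymmetric, so the column sums of $A_i$ and $B_i$ are zero; I would verify this in the inductive step from the recursive block definition $A_i=\begin{bmatrix}A_{i-1}&\mathbf 1\\ B_{i-1}&-\mathbf 1\end{bmatrix}$, noting the column sums of $A_{i-1}$ and $B_{i-1}$ are equal (both zero by induction) so the stacked column sums cancel, and the last column sums to $2^{i-2}-2^{i-2}=0$.

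Concretely, I would set up the induction so that the hypothesis records two facts about the pair $(A_{i-1},B_{i-1})$ applied to an arbitrary $(i-1)$-vector: (a) the alternating sum of $j$-th powers of the $A$-outputs minus $j$-th powers of the $B$-outputs vanishes for $j<i-1$ and equals $(i-1)!\prod_{r=1}^{i-1}\alpha_r$ when $j=i-1$; (b) the entrywise column sums of $A_{i-1}$ and of $B_{i-1}$ vanish. For the inductive step, the substitution $y_1=x_1-\alpha_i$-type coupling that appends the new column $\pm\mathbf 1^{2^{i-2}}$ with the new parameter $\alpha_i$ is exactly the ``difference in the $\alpha_i$ direction'' that drops the degree by one; expanding $(x_\ell)^j-(y_\ell)^j$ with $x_\ell-y_\ell=\pm\alpha_i$ and collecting, the degree-$(j)$ part reduces to a degree-$(j-1)$ alternating sum over the $(i-1)$-block with an overall factor $\alpha_i$, to which hypothesis~(a) applies. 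I expect the main obstacle to be bookkeeping the signs and the factor-of-$\tfrac12$ normalization consistently through the block recursion — i.e.\ making the finite-difference identity precise enough that the leading coefficient comes out to exactly $i!\prod_r\alpha_r$ rather than merely ``a nonzero multiple'' — since soundness (Property~(\ref{p3})) later needs the exact value to control magnitudes. A clean way to sidestep sign headaches is to phrase the identity in terms of the generating function $\prod_\ell(1+x_\ell s)^{-1}$-style manipulation, or simply to prove directly that $\sum_\ell x_{t,i,\ell}^j - \sum_\ell y_{t,i,\ell}^j$, as a polynomial in $\alpha_1,\dots,\alpha_i$, is divisible by $\alpha_1\cdots\alpha_i$ and has total degree $j$, which forces it to vanish for $j<i$ and to be a constant multiple of $\alpha_1\cdots\alpha_i$ for $j=i$; the constant is then pinned down by a single explicit evaluation.
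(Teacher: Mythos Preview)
Your proposal is correct and follows essentially the same approach as the paper: the paper likewise reduces the lemma to two atomic facts about each \atomicsolverproc{t}{i} call --- namely Proposition~\ref{prop:condition34} (your (\ref{eq:low_def_canc}) and (\ref{eq:deg_i_match})) and Proposition~\ref{prop:sum-xtij-0} (your column-sum vanishing) --- both proved by induction on $i$ via the recursive block structure of $A_i,B_i$, and then telescopes using the definition of $R_{t,k}$ exactly as you do. Your finite-difference framing and the divisibility shortcut for pinning down the leading coefficient are pleasant alternative viewpoints, but the paper's direct inductive computation (Claim~\ref{claim:structural-p1}) arrives at the identity $\sum_j(x_j^i-y_j^i)=i!\prod_r\alpha_r$ by the same mechanism.
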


% PROPERTY 3
\begin{lemma} 
\label{lemma:soundness-structural}
For any subset $S \subseteq  \bigcup\limits_{t\in[n]} X_t \cup Y_t$ of the auxiliary variables, either
\[ \lvert \displaystyle\sum\limits_{y \in S} y \rvert > 10^{m+2n+\nu}~~ \text{ or } ~~\lvert \displaystyle\sum\limits_{y \in S} y \rvert < 10^{\nu}. \] 
\end{lemma}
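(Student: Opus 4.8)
The plan is to exploit the explicit structure of the auxiliary variables produced by the \atomicsolver: every $x_{t,i,j}$ and $y_{t,i,j}$ is a $\{0,\pm\frac12\}$-linear combination of the entries of the vector $\alpha_{t,i}=(\alpha_{t,i,1},\dots,\alpha_{t,i,i})$, where $\alpha_{t,i,1}=10^{f(t,i)}$ with $f(t,i)=(i-1)!\cdot\nu_t$, the ``middle'' coordinates $\alpha_{t,i,r}=10^{g(t,i,r)}$ for $1<r<i$ are powers of $10$ with exponents $g(t,i,r)=(t-1)d^2+(i-1)i+r$ (which are at most $O(nd^2)=\poly(n)$, hence live in the ``tiny components region''), and $\alpha_{t,i,i}=R_{t,i}/(i!\prod_{r\in[i-1]}\alpha_{t,i,r})$. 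First I would record the key separation of scales: the exponents $f(t,i)$ are pairwise distinct across all $(t,i)$ and each exceeds $M=m+\nu+n+1$ by a huge margin (since $\nu_t>n^4\gg M=O(n^3)$), while the exponents $g(t,i,r)$ all lie in $\{1,\dots,\nu=n^2\}$, i.e.\ strictly below $\nu$. So each column vector $\alpha_{t,i}$ contributes exactly one ``huge'' coordinate (the $10^{f(t,i)}$ term, sitting in its own digit block with all other terms zero there) and a bunch of ``tiny'' coordinates bounded by $10^\nu$ in total.

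Next, given an arbitrary subset $S\subseteq\bigcup_t(X_t\cup Y_t)$, I would expand $\sum_{y\in S}y$ as an integer (rational, after clearing denominators) linear combination of all the $\alpha_{t,i,r}$. The crucial claim is: \emph{either the total coefficient of every ``huge'' coordinate $10^{f(t,i)}$ in this expansion is zero, or some nonzero huge coefficient survives}. In the first case, $\sum_{y\in S}y$ is a bounded combination of the tiny coordinates $\alpha_{t,i,r}$ ($r\ge 2$) together with the $\alpha_{t,i,i}$ terms; one bounds all of these by a geometric-type estimate using the inductive bound on $|R_{t,i}|$ (which I would need to extract, or re-derive, from the setup — the paper promises this in a footnote and in Section~\ref{sec:reduction}), concluding $|\sum_{y\in S}y|<10^\nu$. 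In the second case, because the surviving huge term $c\cdot 10^{f(t,i)}$ with $c\in\tfrac12\mathbb{Z}\setminus\{0\}$ occupies a digit range that no other term in the sum can reach (the $f$-exponents are distinct and far apart, and all tiny terms and all $R$-residuals are $\ll 10^{f(t,i)}$ — here I use that $10^M>B_1$ and that the $R_{t,i}$ are controlled), this term cannot be cancelled, forcing $|\sum_{y\in S}y|\ge \tfrac12\cdot 10^{f_{\min}} > 10^{m+2n+\nu}$ for the smallest relevant $f$, again using $f(t,i)=(i-1)!\nu_t \ge \nu_t > n^4 \gg m+2n+\nu=O(n^3)$.

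The two genuine obstacles I anticipate: (i) verifying that the ``tiny'' contributions, including the $\alpha_{t,i,i}=R_{t,i}/(i!\prod\alpha_{t,i,r})$ terms and their $\{0,\pm\frac12\}$ multiples summed over all $t,i$, really do stay below $10^\nu$ in absolute value — this needs the recursive upper bound $|R_{t,i}|\le 10^{(\text{something})\cdot f(t,i-1)}$ or similar, proved by induction on $i$, together with the fact that $\nu=n^2$ was chosen large enough to absorb the $\poly(n,d!)$-size numerators/denominators from Property~\ref{p4}; and (ii) making the ``a surviving huge coefficient cannot be cancelled'' step fully rigorous over the rationals — cleanly, one clears all denominators by multiplying through by $\prod_{t,i} i!\prod_{r}\alpha_{t,i,r}$ (an integer of controlled size) and then argues digit-by-digit in base $10$ that the digit block around position $f(t,i)$ is nonzero. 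Step (i) is where most of the bookkeeping lives and is, I expect, the main obstacle; step (ii) is conceptually the heart of the bimodality phenomenon but reduces to the scale-separation inequalities once the $|R_{t,i}|$ bound is in hand.
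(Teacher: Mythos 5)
Your high-level decomposition is exactly the paper's: write each auxiliary variable as $z = z_U + z_L$, where $z_U = \pm\tfrac12\alpha_{t,i,1}$ is the single ``huge'' coordinate and $z_L$ is the $\{\pm\tfrac12\}$-combination of the remaining $\alpha_{t,i,r}$ ($r\ge2$); bound $\lvert\sum_{z\in S}z_L\rvert<10^\nu$ using the bound $\sum_{r\ge2}\lvert\alpha_{t,i,r}\rvert\le 10^{\nu-nd}$, and show the huge part $\sum_{z\in S}z_U$ is either $0$ or huge. Your obstacle (i) is real and is precisely Proposition~\ref{prop:bound-on-alphas}(d)--(e), which inductively bounds $\lvert R_{t,i}\rvert$ to get $\lvert\alpha_{t,i,i}\rvert<2$; that part of your plan is on target.

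Step (ii) is where your plan has a genuine gap. You want to isolate a surviving coefficient $c\cdot 10^{f(t,i)}$ with $c\in\tfrac12\mathbb{Z}\setminus\{0\}$ and argue ``the digit block around $f(t,i)$ is nonzero.'' But the premise that the $f$-exponents are ``far apart'' is false in this construction: $f(t,2)=\nu_t$ runs over consecutive primes above $n^4$, which can be as close as $2$ apart. Meanwhile the coefficient of $10^{f(t,i)}$ in $\sum_{z\in S}z_U$ is a $\tfrac12$-integer with magnitude up to $2^{i-1}$ (it is $\tfrac12$ times a $\pm1$-sum over the first columns of $A_i$ and $B_i$), so for $d$ not tiny these coefficients are multi-digit and carries from one $f$-block into its neighbor are unavoidable. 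A naive digit-by-digit estimate does not in general give the surviving block a clean majority. Your proposed fix — clearing denominators by multiplying through by $\prod_{t,i}i!\prod_r\alpha_{t,i,r}$ — also destroys the decimal structure, since $i!$ is not a power of $10$ and the scaling is different per $(t,i)$. The paper avoids all of this with a one-line divisibility observation (Claim~\ref{claim:large-magnitudes}): every $z_U$ is $\pm\tfrac12\cdot 10^{f(t,i)}$ with $f(t,i)\ge n^4$, hence divisible by $\tfrac12\cdot 10^{n^4}$; therefore $\sum_{z\in S}z_U$ is a multiple of $\tfrac12\cdot 10^{n^4}$, and if nonzero its absolute value is at least $\tfrac12\cdot 10^{n^4}$. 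This requires no distinctness or spacing of the $f$-values, no denominator-clearing, and no carry analysis. You should replace your step (ii) with this divisibility argument; the triangle-inequality combination with step (i) then gives the stated dichotomy since $\tfrac12\cdot 10^{n^4}-10^\nu>10^{m+2n+\nu}$.
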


We restate the following lemma from \Cref{sec:reduction}.

% PROPERTY 4
\LemRuntime*

%%%%%%%%%%%%%%%%%%%%%%%%%%%%%%%%%%%%%%%%%%%%%%%%%%%%%%%%%%%%

In order to prove 
Lemma~\ref{lemma:condition12}, 
Lemma~\ref{lemma:soundness-structural} and 
Lemma~\ref{lemma:runtime} 
we first state some properties of the auxiliary variables generated by the \atomicsolverproc{t}{i} and prove them in 
Section~\ref{sec:helpers}. 

%%%%%%%%%%%%%%%%%%%%%%%%%%%%%%%%%%%%%%%%%%%%%%%%%%%%%%%%%%%%
\begin{proposition} \label{prop:condition34}
For any $(t,i) \in [n] \times \{2, \dots, d\}$,  \atomicsolverproc{t}{i} on input  
a rational $R_{t,i}$, returns two sets  of auxiliary variables $\{ x_{t,i,j} \mid j \in [2^{i-1}] \}$ and $\{ y_{t,i,j} \mid j \in [2^{i-1}] \}$ which satisfy: 
\begin{align*}
\displaystyle\sum\limits_{j=1}^{2^{i-1}}  (x^i_{t,i,j} - y^i_{t,i,j})  &= R_{t, i}, \\ \displaystyle\sum\limits_{j=1}^{2^{i-1}}  (x_{t,i,j}^k - y_{t,i,j}^k) &= 0 \text{ for every } k \in \{1,\ldots, i-1\}.
\end{align*}
\end{proposition}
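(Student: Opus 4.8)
\textbf{Proof proposal for Proposition~\ref{prop:condition34}.}

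The plan is to verify directly that the matrices $A_i, B_i$ built recursively in \atomicsolver, together with the vector $\alpha_{t,i}$ of ``scale'' parameters, implement the ``recursive coupling/differencing'' idea described informally in Section~\ref{subsec:pf_overv_tec}. Concretely, I want to track how the power sums $\sum_j (x_{t,i,j}^k - y_{t,i,j}^k)$ behave as a function of $k$, and show they vanish for $k \le i-1$ and equal $R_{t,i}$ for $k=i$. Throughout I will drop the $t$ and $i$ subscripts and write $x_\ell, y_\ell$ for the outputs, $\alpha = (\alpha_1,\dots,\alpha_i)^T$ for the parameter vector, and $A=A_i$, $B=B_i$, so that $x = \frac12 A\alpha$ and $y = \frac12 B\alpha$.

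The main structural observation I would prove by induction on $i$ is the following ``coupling identity'': the rows of $A_i$ and $B_i$ pair up so that, writing the first block ($\ell \in [2^{i-2}]$) and the second block of rows, the multiset of entries forces $x_\ell$ (first block) and $y_{\ell}$ (second block, suitably indexed) to differ by exactly $\alpha_{i}$ (the ``last'' parameter at this level), while the remaining structure of the first $2^{i-2}$ coordinates is governed by $A_{i-1}, B_{i-1}$ acting on $(\alpha_1,\dots,\alpha_{i-1})$. More precisely, I expect that by construction $x_{\ell} - x_{\ell+2^{i-2}}' = \pm \alpha_i$ for the appropriate pairing coming from the last column $\pm {\bf 1}^{2^{i-2}}$, and that the ``reduced'' variables obtained by this pairing are exactly the outputs of the level-$(i-1)$ construction applied to the parameter vector $(\alpha_1,\dots,\alpha_{i-1})$. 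Then, using the elementary ``finite difference lowers the degree'' fact — if $P$ has degree $k$ then $P(u) - P(u - \alpha_i)$ has degree $k-1$ in $u$ with leading coefficient $k\,\alpha_i$ times the leading coefficient of $P$ — one shows inductively that
\begin{align*}
\sum_{\ell=1}^{2^{i-1}} (x_\ell^k - y_\ell^k)
\end{align*}
collapses, after $i-1$ successive differencing steps (one per parameter $\alpha_2,\dots,\alpha_i$, or in the indexing of the algorithm $\alpha_1,\dots,\alpha_{i-1}$ for the couplings and then $\alpha_i$ for the scaling), to $0$ whenever $k < i$ (because fewer than $k$ differencing operations would be needed to kill it, but here we apply at least... — more carefully: each coupling step with parameter $\alpha_r$ replaces a degree-$k$ power sum difference by a degree-$(k-1)$ one, so after all the couplings a power sum of degree $k \le i-1$ becomes identically zero), and for $k = i$ collapses to $i!\cdot \alpha_1\alpha_2\cdots\alpha_{i-1}\cdot\alpha_i$. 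By the definition $\alpha_i = R_{t,i}/(i!\prod_{r\in[i-1]}\alpha_r)$ in line~6 of Algorithm~\ref{algo:gadget}, this last quantity is exactly $R_{t,i}$, which is what we want. (One should double-check the combinatorial constant: the $i!$ arises because each of the $i-1$ differencing steps contributes a factor of the form (current degree) and the product of these falling factors over the telescoping is $i!/1 = i!$; I would verify this against the $i=4$ worked example in the overview, where indeed $h_{\alpha,\beta,\gamma} = 24\,\alpha\beta\gamma$ and $24 = 4!$.)

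To organize the induction cleanly, I would first prove an auxiliary lemma: for every $i$, and for \emph{every} vector $v = (v_1,\dots,v_i)$, the outputs $x = \frac12 A_i v$, $y = \frac12 B_i v$ satisfy $\sum_\ell (x_\ell^k - y_\ell^k) = 0$ for $1 \le k \le i-1$ and $\sum_\ell (x_\ell^i - y_\ell^i) = i!\,v_1 v_2\cdots v_i$. The base case $i=2$ is a direct check from the explicit $2\times 2$ matrices: $x = \frac12(v_1+v_2, -v_1-v_2)$, $y=\frac12(v_1-v_2,-v_1+v_2)$, so $\sum(x_\ell - y_\ell) = 0$ and $\sum(x_\ell^2 - y_\ell^2) = 2 v_1 v_2 = 2!\,v_1v_2$. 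For the inductive step I use the block structure $A_i = \left[\begin{smallmatrix} A_{i-1} & {\bf 1} \\ B_{i-1} & -{\bf 1}\end{smallmatrix}\right]$ etc.: the top half of $x$ is $\frac12(A_{i-1}v' + v_i {\bf 1})$ and the bottom half is $\frac12(B_{i-1}v' - v_i{\bf 1})$ where $v' = (v_1,\dots,v_{i-1})$; similarly the halves of $y$ involve $B_{i-1}v' + v_i{\bf 1}$ and $A_{i-1}v' - v_i{\bf 1}$. Writing $u_\ell := \frac12(A_{i-1}v')_\ell$ and $w_\ell := \frac12(B_{i-1}v')_\ell$, one computes
\begin{align*}
\sum_{\ell=1}^{2^{i-1}}(x_\ell^k - y_\ell^k) = \sum_\ell\big[(u_\ell + \tfrac{v_i}{2})^k - (w_\ell + \tfrac{v_i}{2})^k\big] + \sum_\ell\big[(w_\ell - \tfrac{v_i}{2})^k - (u_\ell - \tfrac{v_i}{2})^k\big],
\end{align*}
and expanding by the binomial theorem the terms reorganize into $v_i$-weighted combinations of the quantities $\sum_\ell(u_\ell^j - w_\ell^j) = \sum_\ell(x^{(i-1)}_\ell{}^j - y^{(i-1)}_\ell{}^j)$, to which the inductive hypothesis applies. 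The induction hypothesis kills all of these for $j \le i-2$, so the whole expression vanishes for $k \le i-1$, and for $k=i$ the only surviving contribution is $2\binom{i}{1}\cdot\frac{v_i}{2}\cdot$ wait — more carefully, for $k=i$ the surviving term is the one multiplying $\sum_\ell(u_\ell^{i-1} - w_\ell^{i-1}) = (i-1)!\,v_1\cdots v_{i-1}$, with binomial coefficient producing the factor $i$, and the $v_i/2$ together with the sign structure (the second bracket contributes with the same sign since $(-1)(-\frac{v_i}{2})$-type bookkeeping) yields $i \cdot v_i \cdot (i-1)!\,v_1\cdots v_{i-1} = i!\,v_1\cdots v_i$, as desired. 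Proposition~\ref{prop:condition34} then follows by applying this auxiliary lemma with $v = \alpha_{t,i}$ and substituting the value of $\alpha_{t,i,i}$.

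The main obstacle I anticipate is bookkeeping the signs and binomial constants in the inductive step precisely enough to land on the clean constant $i!$ rather than something off by a sign or a power of $2$ — the halving factors $\frac12$ in the matrices and the $\pm{\bf 1}$ blocks interact, and it is easy to drop a factor. A secondary subtlety is confirming that ``lower-degree'' power sums ($k \le i-1$) really are annihilated: the argument needs that the inductive hypothesis gives vanishing of $\sum(u_\ell^j - w_\ell^j)$ for \emph{all} $j$ from $1$ up to $i-2$, and that the binomial expansion of $(u_\ell + c)^k - (w_\ell+c)^k$ only involves such $j$ (which it does, since the $u_\ell^k$ and $w_\ell^k$ terms cancel between the two blocks — one should check this cancellation is exact, which it is because the top block of $x$ and the bottom block of $y$ both see $A_{i-1}v'$, and symmetrically for $B_{i-1}v'$). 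Once these are nailed down, the proposition is a mechanical consequence.
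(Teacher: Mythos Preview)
Your proposal is correct and essentially identical to the paper's proof: the paper also isolates exactly your ``auxiliary lemma'' (that for any vector $v$, the outputs $x=\tfrac12 A_i v$, $y=\tfrac12 B_i v$ satisfy $\sum_j(x_j^k-y_j^k)=0$ for $k<i$ and $=i!\,v_1\cdots v_i$ for $k=i$), proves it by the same induction using the block decomposition and the binomial identity $(u+c)^k-(u-c)^k=2\sum_{r\text{ odd}}\binom{k}{r}u^{k-r}c^r$, and then plugs in $\alpha_{t,i,i}=R_{t,i}/(i!\prod_{r<i}\alpha_{t,i,r})$. The bookkeeping you flag as a possible obstacle works out exactly as you suspect: the only surviving term at $k=i$ is $r=1$, contributing $2\binom{i}{1}(v_i/2)\cdot(i-1)!\,v_1\cdots v_{i-1}=i!\,v_1\cdots v_i$.
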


\begin{proposition}
\label{prop:sum-xtij-0}
For any $t \in [n]$, and $i \in \{2, 3, \dots, d\} $, 
\[\displaystyle\sum\limits_{j=1}^{2^{i-1}} x_{t,i,j} = \displaystyle\sum\limits_{j=1}^{2^{i-1}} y_{t,i,j} = 0. \]
\end{proposition}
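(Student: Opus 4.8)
The plan is to reduce the statement to a fact about the coupling matrices $A_i,B_i$ alone, independently of the particular vector $\alpha_{t,i}$ produced by \atomicsolverproc{t}{i}. Writing $\mathbf{1}$ for the all-ones column vector of length $2^{i-1}$, the definitions $[x_{t,i,1},\dots,x_{t,i,2^{i-1}}]^T=\tfrac12 A_i\alpha_{t,i}$ and $[y_{t,i,1},\dots,y_{t,i,2^{i-1}}]^T=\tfrac12 B_i\alpha_{t,i}$ give
\[
\sum_{j=1}^{2^{i-1}} x_{t,i,j} = \tfrac12\,\mathbf{1}^T A_i\,\alpha_{t,i}
\qquad\text{and}\qquad
\sum_{j=1}^{2^{i-1}} y_{t,i,j} = \tfrac12\,\mathbf{1}^T B_i\,\alpha_{t,i}.
\]
Hence it suffices to prove that $\mathbf{1}^T A_i$ and $\mathbf{1}^T B_i$ are both the zero row vector of length $i$; equivalently, that every column of the $2^{i-1}\times i$ matrices $A_i$ and $B_i$ sums to $0$. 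This is the "symmetry" that the recursive coupling was designed to preserve, so it should fall out cleanly.

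I would prove this by induction on $i$. The base case $i=2$ is a direct inspection of $A_2=\left[\begin{smallmatrix}1 & 1 \\ -1 & -1\end{smallmatrix}\right]$ and $B_2=\left[\begin{smallmatrix}1 & -1 \\ -1 & 1\end{smallmatrix}\right]$, each column of which sums to $0$. For the inductive step, assume the claim for $i-1$ and use the block recursion $A_i=\left[\begin{smallmatrix}A_{i-1} & \mathbf{1}^{2^{i-2}} \\ B_{i-1} & -\mathbf{1}^{2^{i-2}}\end{smallmatrix}\right]$: the sum of the entries of each of the first $i-1$ columns of $A_i$ equals (the column sum of the corresponding column of $A_{i-1}$) plus (the column sum of the corresponding column of $B_{i-1}$), which is $0+0=0$ by the inductive hypothesis, while the sum of the entries of the last column is $2^{i-2}-2^{i-2}=0$. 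The identical computation applies to $B_i=\left[\begin{smallmatrix}B_{i-1} & \mathbf{1}^{2^{i-2}} \\ A_{i-1} & -\mathbf{1}^{2^{i-2}}\end{smallmatrix}\right]$. This closes the induction, and substituting back yields $\sum_j x_{t,i,j}=\tfrac12(\mathbf{1}^T A_i)\alpha_{t,i}=0$ and likewise $\sum_j y_{t,i,j}=0$.

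I do not anticipate a genuine obstacle: the argument is a short self-contained induction, and crucially everything is independent of the residual $R_{t,i}$, so no magnitude bound on $\alpha_{t,i,i}=R_{t,i}/(i!\prod_{r\in[i-1]}\alpha_{t,i,r})$ is needed for this proposition. The only point requiring mild care is keeping the block dimensions straight — $A_i,B_i$ are $2^{i-1}\times i$ and the appended column contributes $2^{i-2}$ entries of value $1$ in the top block and $2^{i-2}$ entries of value $-1$ in the bottom block — so I would record these dimensions explicitly before running the induction.
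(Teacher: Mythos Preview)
Your proposal is correct and follows essentially the same approach as the paper: reduce to showing $\mathbf{1}^T A_i = \mathbf{1}^T B_i = \mathbf{0}$, then prove this by induction on $i$ using the block-recursive definition of $A_i,B_i$, with the base case $i=2$ checked directly. The paper's proof is line-for-line the same argument.
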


\begin{proposition}
\label{prop:bound-on-alphas}
For any $t \in [n], i \in \{2, \dots, d\}$, we have
\begin{enumerate}[(a)]
\item \label{prop:alpha-condition} $i! \prod\limits_{r=1}^i \alpha_{t, i, r} = R_{t,i}$ 
\item \label{prop:bound-on-alpha-ti1} $ 10^{n^4} < \alpha_{t,i,1} < 10^{d! n^5} $
\item \label{prop:bound-on-alpha-tir} $ \alpha_{t,i,r} < 10^{nd^2}$  for $1 < r < i-1$
\item \label{prop:bound-on-alpha-tii1}$ \lvert \alpha_{t,i,i} \rvert < 2 $ 
\item \label{prop:bound-on-alpha-tii} $ \displaystyle\sum\limits_{r=2}^{i} \lvert \alpha_{t,i,r} \rvert < 10^{\nu-nd}$.
\end{enumerate}
\end{proposition}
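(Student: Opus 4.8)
The plan is to read parts (a)--(c) directly off the definitions inside \atomicsolverproc{t}{i}, to prove (d) by induction on $i$, and then to deduce (e) from (b)--(d) using that $d=O(\log N)$, so that $\nu=n^2$ dominates $nd^2$ in the exponent. For (a), by construction $\alpha_{t,i,i}=R_{t,i}/(i!\prod_{r\in[i-1]}\alpha_{t,i,r})$, and each $\alpha_{t,i,r}$ with $r<i$ equals $10^{f(t,i)}$ or $10^{g(t,i,r)}$, hence is a nonzero power of $10$; clearing the denominator gives $i!\prod_{r=1}^{i}\alpha_{t,i,r}=R_{t,i}$. For (b), $\alpha_{t,i,1}=10^{(i-1)!\,\nu_t}$: since $i\ge2$ we have $(i-1)!\ge1$ and $\nu_t>n^4$, and since $i\le d$ we have $(i-1)!<d!$, while the prime-counting estimate recalled before the algorithm gives $\nu_t\le\nu_n<n^5$. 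For (c), $\alpha_{t,i,r}=10^{g(t,i,r)}$ with $g(t,i,r)=(t-1)d^2+(i-1)i+r\le (n-1)d^2+(d-1)d+(d-1)<nd^2$, since $t\le n$ and $r\le i-1\le d-1$.

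For (d) I induct on $i$. When $i=2$ the lists $X_t,Y_t$ are still empty as $R_{t,2}$ is formed, so $R_{t,2}=b_t^2-a_t^2$; by Fact~\ref{claim:bound-at} this has magnitude at most $2\cdot 10^{2M}$, and since $\nu_t>n^4\gg M$ we get $|\alpha_{t,2,2}|\le 10^{2M}/10^{\nu_t}<1$. For $i\ge3$, everything comes down to bounding $R_{t,i}=(b_t^i-a_t^i)+\sum_{j=2}^{i-1}\sum_{\ell=1}^{2^{j-1}}(y_{t,j,\ell}^i-x_{t,j,\ell}^i)$. A short induction on the recursions defining $A_j$ and $B_j$ shows that their first columns coincide, so the huge coordinate $\alpha_{t,j,1}$ cancels in each difference $x_{t,j,\ell}-y_{t,j,\ell}$, which is therefore at most $\sum_{r=2}^{j}|\alpha_{t,j,r}|$; by (c) and the inductive hypothesis (d) for $j$ this is controlled by the largest ``moderate'' coordinate $\alpha_{t,j,j-1}=10^{g(t,j,j-1)}$ (and for $j=2$ the difference is just $\pm\alpha_{t,2,2}$, already $O(1)$). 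Since each $A_j,B_j$ has $\pm1$ entries, parts (b)--(d) for $j$ also give $|x_{t,j,\ell}|,|y_{t,j,\ell}|<\alpha_{t,j,1}=10^{(j-1)!\,\nu_t}$. Writing $y^i-x^i=-(x-y)\sum_{p=0}^{i-1}x^{p}y^{i-1-p}$ and summing over $\ell$ and over $j\le i-1$ --- where successive values of $j$ change the exponent by at least $\nu_t>n^4$, swamping the polynomial and $2^{O(i)}$ prefactors, and where $|b_t^i-a_t^i|\le 2\cdot 10^{iM}$ is negligible because $iM\le dM\ll n^4$ --- the $j=i-1$ term dominates and one obtains $|R_{t,i}|\le 2^{O(i)}\poly(i)\cdot 10^{(i-1)!\,\nu_t}\cdot 10^{g(t,i-1,i-2)}$ (with the last factor read as $1$ when $i=3$). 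Dividing by $i!\prod_{r=1}^{i-1}\alpha_{t,i,r}\ge i!\cdot 10^{(i-1)!\,\nu_t}\cdot 10^{\sum_{r=2}^{i-1}g(t,i,r)}$, the factor $10^{(i-1)!\,\nu_t}$ cancels and, substituting $g(t,i,r)=(t-1)d^2+(i-1)i+r$, a direct computation shows $g(t,i-1,i-2)-\sum_{r=2}^{i-1}g(t,i,r)\le-5$ for every $i\ge3$; since the remaining constant $2^{O(i)}\poly(i)/i!$ is bounded by a universal constant (well below $10^5$) for all $i\ge3$, this forces $|\alpha_{t,i,i}|<2$, closing the induction.

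Finally, part (e) follows by combining (c) and (d): $\sum_{r=2}^{i}|\alpha_{t,i,r}|\le (i-2)\cdot 10^{nd^2}+2<10^{nd^2+O(\log d)}$, and since $d=O(\log N)$ the exponent is $o(n^2)$, so for $n$ large this is below $10^{n^2-nd}=10^{\nu-nd}$. I expect the inductive step of (d) to be the only substantial part of the argument: a naive triangle-inequality estimate of $R_{t,i}$ overshoots the required size by a super-exponential factor, so one has to exploit both the cancellation of the dominant coordinate $\alpha_{t,j,1}$ (which is precisely the statement that $A_j$ and $B_j$ have equal first columns) and the exact ``digit spacing'' built into $f$ and $g$ --- the fact that $\alpha_{t,i,1}$ is astronomically larger than $\alpha_{t,i,2},\dots,\alpha_{t,i,i-1}$ --- which is exactly what pushes the quotient defining $\alpha_{t,i,i}$ below $2$.
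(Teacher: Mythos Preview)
Your proposal is correct and follows essentially the same inductive architecture as the paper. Two minor differences worth noting: (i) the paper isolates the key cancellation as a separate claim (Claim~\ref{claim:diff-x-y}), proving the sharper identity $|x_{t,i,j}-y_{t,i,j}|=\alpha_{t,i,2}$ (all columns of $A_i-B_i$ except the second vanish, not just the first), whereas you use only that the first columns agree and hence the weaker bound $|x-y|\le\sum_{r\ge2}|\alpha_{t,i,r}|$; and (ii) the paper runs the induction for (d) and (e) simultaneously, using (e) at smaller indices to control $|x_{t,u,v}|$, while you prove (d) alone by induction and derive (e) afterward from (c) and (d)---both arrangements work since $\alpha_{t,j,1}$ dwarfs the remaining coordinates regardless.
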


\begin{proposition}
\label{prop:bound-on-xtij}
For any $t \in [n]$, $i \in \{2, \dots, d\}$ and $j \in [2^{i-1}]$, we have that
\[ 10^{(i-1)! \nu_t} - 10^{\nu-nd} \leq 2 \cdot\lvert x_{t,i,j} \rvert  \leq 10^{(i-1)! \nu_t} + 10^{\nu-nd}. \]
The analogous statement also holds for $y_{t,i,j}$.
\end{proposition}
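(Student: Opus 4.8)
The plan is to read the conclusion straight off the matrix form of the output of \atomicsolver. By lines~13--14 of the algorithm, $[x_{t,i,1},\dots,x_{t,i,2^{i-1}}]^T = \tfrac12 A_i\,\alpha_{t,i}$ and $[y_{t,i,1},\dots,y_{t,i,2^{i-1}}]^T = \tfrac12 B_i\,\alpha_{t,i}$, where $\alpha_{t,i} = [\alpha_{t,i,1},\dots,\alpha_{t,i,i}]^T$. So the first step I would carry out is a trivial induction on $i$ showing that $A_i$ and $B_i$ are $\{-1,+1\}$-matrices of dimensions $2^{i-1}\times i$: the base case $i=2$ is given explicitly, and the recursive step stacks the blocks $A_{i-1}, B_{i-1}$ (which are $\pm1$ by the inductive hypothesis) alongside a column $\pm\mathbf{1}^{2^{i-2}}$. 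Reading off a single row then yields that for every $j\in[2^{i-1}]$ there are signs $\varepsilon_{j,1},\dots,\varepsilon_{j,i}\in\{-1,+1\}$ (namely the entries of the $j$-th row of $A_i$) with
\[ 2\,x_{t,i,j} \;=\; \varepsilon_{j,1}\,\alpha_{t,i,1} \;+\; \sum_{r=2}^{i}\varepsilon_{j,r}\,\alpha_{t,i,r}, \]
and the analogous identity for $y_{t,i,j}$ with signs coming from the $j$-th row of $B_i$. The key feature is that the coefficient of the large term $\alpha_{t,i,1}$ is $\pm1$ in every row.

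The second step is to separate the dominant term from the tail. Since $\alpha_{t,i,1} = 10^{f(t,i)} = 10^{(i-1)!\,\nu_t} > 0$ by line~4 of \atomicsolver, the triangle inequality applied to the display above gives
\[ 10^{(i-1)!\,\nu_t} - \sum_{r=2}^{i}\lvert\alpha_{t,i,r}\rvert \;\le\; 2\,\lvert x_{t,i,j}\rvert \;\le\; 10^{(i-1)!\,\nu_t} + \sum_{r=2}^{i}\lvert\alpha_{t,i,r}\rvert. \]
Now I invoke the tail bound $\sum_{r=2}^{i}\lvert\alpha_{t,i,r}\rvert < 10^{\nu-nd}$, which is part~(e) of Proposition~\ref{prop:bound-on-alphas}. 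Substituting it into both sides yields exactly
\[ 10^{(i-1)!\,\nu_t} - 10^{\nu-nd} \;\le\; 2\,\lvert x_{t,i,j}\rvert \;\le\; 10^{(i-1)!\,\nu_t} + 10^{\nu-nd}, \]
which is the claimed inequality. Repeating the computation verbatim with $B_i$ in place of $A_i$ (it too is a $\pm1$ matrix with a $\pm1$ coefficient on $\alpha_{t,i,1}$) gives the same bound for $\lvert y_{t,i,j}\rvert$, establishing the ``analogous statement''.

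I do not expect any real obstacle in this proposition: the only nontrivial ingredient is the tail bound (part~(e) of Proposition~\ref{prop:bound-on-alphas}), which is proved separately and accounts for the fact that $\alpha_{t,i,i}$ is merely a bounded rational rather than a power of $10$; given that, the present statement is just an application of the triangle inequality to the explicit matrix formulas, so I would keep the write-up to a few lines.
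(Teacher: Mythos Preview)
Your proposal is correct and follows essentially the same argument as the paper: both express $x_{t,i,j}$ as a $\pm\tfrac12$-linear combination of the $\alpha_{t,i,r}$'s, isolate the dominant term $\alpha_{t,i,1}=10^{(i-1)!\nu_t}$, and bound the remaining tail via Proposition~\ref{prop:bound-on-alphas}(\ref{prop:bound-on-alpha-tii}). The only cosmetic difference is that you explicitly verify by induction that $A_i,B_i$ have $\pm1$ entries, whereas the paper simply asserts the $(\pm\tfrac12)$-linear-combination form.
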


\begin{proposition}\label{prop:all-distinct}
We have that:
\begin{enumerate}
\item For every $(t_1,i_1,j_1) \neq (t_2,i_2,j_2)$,  we have that $x_{t_1, i_1, j_1} \neq x_{t_2, i_2, j_2}$.
\item For every $(t_1,i_1,j_1) \neq (t_2,i_2,j_2)$,  we have that $y_{t_1, i_1, j_1} \neq y_{t_2, i_2, j_2}$.
\item For every $(t_1,i_1,j_1), (t_2,i_2,j_2)$,  we have that $x_{t_1, i_1, j_1} \neq y_{t_2, i_2, j_2}$.
\end{enumerate}

\end{proposition}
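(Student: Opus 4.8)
The plan is to exploit the "digit-separation" structure built into the construction: each variable $x_{t,i,j}$ is, up to the factor $\frac12$, a $\{0,\pm1\}$-linear combination of the $\alpha_{t,i,r}$'s, and by Proposition~\ref{prop:bound-on-xtij} its magnitude is $\approx 10^{(i-1)!\,\nu_t}$, a power of $10$ whose exponent is $f(t,i)=(i-1)!\,\nu_t$. First I would observe that the function $(t,i)\mapsto (i-1)!\,\nu_t$ is injective on $[n]\times\{2,\dots,d\}$: indeed $\nu_t$ is the $t$-th prime exceeding $n^4$, so the $\nu_t$ are distinct primes larger than $d$ (recall $d\le c\log N$ while $\nu_t>n^4$), hence the prime $\nu_t$ divides $(i-1)!\,\nu_t$ to the first power only, which pins down $\nu_t$ (hence $t$), and then $(i-1)!$ determines $i$. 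Consequently, if $(t_1,i_1)\neq(t_2,i_2)$ then $f(t_1,i_1)\neq f(t_2,i_2)$, and the two exponents in fact differ by a large amount; combined with Proposition~\ref{prop:bound-on-xtij}, which says $2|x_{t,i,j}|$ lies in the tiny window $[10^{f(t,i)}-10^{\nu-nd},\,10^{f(t,i)}+10^{\nu-nd}]$ around $10^{f(t,i)}$, this forces $|x_{t_1,i_1,j_1}|\neq|x_{t_2,i_2,j_2}|$ (the windows around distinct powers of $10$ are disjoint since $\nu-nd$ is far smaller than the gaps between the relevant exponents — this uses $f(t,i)\ge\nu_t>n^4>\nu=n^2$). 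Hence parts (1), (2), (3) all hold whenever the leading indices $(t,i)$ differ, and symmetrically for $y$, and also for cross comparisons $x_{t_1,i_1,j_1}$ vs.\ $y_{t_2,i_2,j_2}$ with $(t_1,i_1)\neq(t_2,i_2)$.

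It remains to handle the case of a fixed block, i.e.\ $t_1=t_2=t$, $i_1=i_2=i$, and distinguish $x_{t,i,j}$ for varying $j\in[2^{i-1}]$, and likewise $y_{t,i,j}$, and finally $x_{t,i,j}$ vs.\ $y_{t,i,j'}$. Here I would work with the explicit matrix description from Algorithm~\ref{algo:gadget}: $[x_{t,i,1},\dots,x_{t,i,2^{i-1}}]^T=\frac12 A_i\alpha_{t,i}$ and $[y_{t,i,1},\dots]^T=\frac12 B_i\alpha_{t,i}$, where $A_i,B_i$ are the recursively-defined $\pm1$ matrices. Two coordinates $x_{t,i,j},x_{t,i,j'}$ are equal iff rows $j$ and $j'$ of $A_i$ give the same value when dotted with $\alpha_{t,i}$; since the entries $\alpha_{t,i,r}$ for $r<i$ are distinct powers of $10$ with exponents $f(t,i)$ and $g(t,i,r)=(t-1)d^2+(i-1)i+r$ that are well-separated (the $g$-exponents lie in a range of width $<d^2$ around $(t-1)d^2$, disjoint across different $r$, and all far below $f(t,i)$), and $\alpha_{t,i,i}$ is of magnitude $<2$, a $\{0,\pm1,\pm2\}$-combination of these is zero only if each coefficient is zero — a standard "digit-separation / no carrying" argument. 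So I reduce to showing that distinct rows of $A_i$ differ, i.e.\ $A_i$ has pairwise distinct rows, and similarly $B_i$, and that no row of $A_i$ equals a row of $B_i$ (for the cross case, after accounting for the $\alpha_{t,i,i}$ coordinate, whose column is $(\mathbf 1^{2^{i-2}},-\mathbf 1^{2^{i-2}})^T$ in both — this contributes identically, so one compares the first $i-1$ coordinates). I would prove these three matrix facts by induction on $i$ using the block recursion $A_i=\bigl[\begin{smallmatrix}A_{i-1}&\mathbf 1\\ B_{i-1}&-\mathbf 1\end{smallmatrix}\bigr]$, $B_i=\bigl[\begin{smallmatrix}B_{i-1}&\mathbf 1\\ A_{i-1}&-\mathbf 1\end{smallmatrix}\bigr]$: the last column separates the top $2^{i-2}$ rows (last entry $+1$) from the bottom $2^{i-2}$ rows (last entry $-1$), so collisions can only occur within a half, where the inductive hypothesis on $A_{i-1},B_{i-1}$ (distinct rows, and $A_{i-1}$, $B_{i-1}$ share no common row) applies; the $i=2$ base case $A_2=\bigl[\begin{smallmatrix}1&1\\-1&-1\end{smallmatrix}\bigr]$, $B_2=\bigl[\begin{smallmatrix}1&-1\\-1&1\end{smallmatrix}\bigr]$ is checked by hand.

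The main obstacle I anticipate is the within-block cross-comparison $x_{t,i,j}=y_{t,i,j'}$: unlike the pure-$x$ or pure-$y$ cases, here one cannot simply invoke "distinct rows of a single $\pm1$ matrix," and one must verify that no row of $A_i$ coincides with any row of $B_i$. The block recursion makes this tractable — by induction the top halves of $A_i$ and $B_i$ are $A_{i-1}$ and $B_{i-1}$ (no shared rows, by IH) each with a $+1$ appended, and the bottom halves are $B_{i-1},A_{i-1}$ each with $-1$ appended, again sharing no rows — but one has to track all four half-to-half comparisons carefully and make sure the inductive statement is strong enough (it must assert simultaneously: $A_i$ has distinct rows, $B_i$ has distinct rows, and $A_i$, $B_i$ have no common row). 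A secondary nuisance is bookkeeping the exact inequalities among the exponents $f(t,i)$, $g(t,i,r)$, $\nu$, $\nu-nd$, $m+n$ to justify every "digit separation" step; this is routine given Proposition~\ref{prop:bound-on-alphas} and the choices $\nu=n^2$, $n^4<\nu_t<n^5$, $d=O(\log N)$, but needs to be stated cleanly once and then reused.
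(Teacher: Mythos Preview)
Your proposal is correct and follows essentially the same approach as the paper's proof: use the dominant term $\alpha_{t,i,1}=10^{(i-1)!\nu_t}$ (via the magnitude bounds of Proposition~\ref{prop:bound-on-xtij}) to separate variables coming from different blocks $(t,i)$, and then use digit-separation among $\alpha_{t,i,2},\dots,\alpha_{t,i,i}$ to distinguish variables within a block. The only noteworthy difference is that you make explicit, via a clean induction on the block recursion for $A_i,B_i$, that all rows of $A_i$ and $B_i$ are pairwise distinct and that $A_i,B_i$ share no common row, whereas the paper simply asserts ``each auxiliary variable is a distinct linear combination'' and handles the diagonal cross-case $x_{t,i,j}$ vs.\ $y_{t,i,j}$ separately via Claim~\ref{claim:diff-x-y}.
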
 

\subsection{Proof of Lemma ~\ref{lemma:condition12}}
\label{subsec:lemma:condition12}
We now prove Lemma~\ref{lemma:condition12} which implies Properties~\ref{p1}  and \ref{p2} of the auxiliary variables.
\begin{proof}[Proof of Lemma~\ref{lemma:condition12}]
From Proposition~\ref{prop:sum-xtij-0}, we have that for any 
$t \in [n]$, and $i \in \{2, 3, \dots, d\} $, 
$\displaystyle\sum\limits_{j=1}^{2^{i-1}} x_{t,i,j} = \displaystyle\sum\limits_{j=1}^{2^{i-1}} y_{t,i,j} = 0$. 
Summing the variables over all $i \in \{2, 3, \dots, d\} $, we get 
$$\displaystyle\sum\limits_{x \in X_t} x = \displaystyle\sum\limits_{y \in Y_t} y = 0.$$ 

For the second part of the lemma, 
 for any $k \in  \{2, \dots, d\}$
\begin{align*}
\displaystyle\sum\limits_{x \in X_t} x^k - \displaystyle\sum\limits_{y \in Y_t} y^k & = 
\displaystyle\sum\limits_{i=2}^d\sum\limits_{j=1}^{2^{i-1}}  (x_{t,i,j}^k - y_{t,i,j}^k) \\
& =  \displaystyle\sum\limits_{i=2}^{k-1} \sum\limits_{j=1}^{2^{i-1}}  (x_{t,i,j}^k - y_{t,i,j}^k) + \sum\limits_{j=1}^{2^{k-1}}  (x_{t,k,j}^k - y_{t,k,j}^k) + \displaystyle\sum\limits_{i=k+1}^d \sum\limits_{j=1}^{2^{i-1}}  (x_{t,i,j}^k - y_{t,i,j}^k)
\end{align*}
From the definition of the residual, $R_{t,k}$, the first term, $ \displaystyle\sum\limits_{i=2}^{k-1} \sum\limits_{j=1}^{2^{i-1}}  (x_{t,i,j}^k - y_{t,i,j}^k) = b_t^k - a_t^k - R_{t,k} $. 
Also, from Proposition~\ref{prop:condition34} it follows that $\displaystyle\sum\limits_{j=1}^{2^{k-1}}  (x_{t,k,j}^k - y_{t,k,j}^k)  = R_{t,k}$ and $\displaystyle\sum\limits_{i=k+1}^d \sum\limits_{j=1}^{2^{i-1}}  (x_{t,i,j}^k - y_{t,i,j}^k) = 0$. 
Substituting these values in the above equation, we get,
\[ 
\displaystyle\sum\limits_{x \in X_t} x^k - \displaystyle\sum\limits_{y \in Y_t} y^k = b_t^k - a_t^k
\]
\end{proof}

%%%%%%%%%%%%%%%%%%%%%%%%%%%%%%%%%%%%%%%%%%%%%%%%%%%%%

\subsection{Proof of Lemma ~\ref{lemma:soundness-structural}}
\label{subsec:lemma:soundness-structural}
Before we prove Lemma~\ref{lemma:soundness-structural}, we note that each auxiliary variable, $x_{t,i,j}$ and $y_{t,i,j}$ is a $(\pm \frac12)$-linear combination of the $\alpha_{t,i,r}$ variables. From Proposition~\ref{prop:bound-on-alphas} (\ref{prop:bound-on-alpha-ti1}), (\ref{prop:bound-on-alpha-tir}) we note that each variable $\alpha_{t,i,r}$ is either of small magnitude, i.e. $\lvert \alpha_{t,i,r} \rvert < 10^{nd^2}$ 
or of fairly large magnitude, i.e. $\lvert \alpha_{t,i,r} \rvert >10^{n^4}$. Also, we note that there is only one large magnitude term, i.e., $\alpha_{t,i,1}$, for every pair $(t, i) \in [n] \times \{2, \cdots, d \}$. 

Recall that  $D$ is the set of all the auxiliary variables \[D = \{ x_{t,i,j}, y_{t,i,j} \mid t \in [n], i \in \{2, 3, \dots, d\}, j \in [2^{i-1}] \}. \]
For any auxiliary variable $z \in D$, we can split $z$  into terms of the form $\pm \frac12 \alpha_{t,i,r}$ 
with large magnitudes and terms with small magnitudes. 
$$ z = z_U + z_L, $$
where $z_U$ is the 
term with large magnitude and $z_L$ is the linear combinations of terms with small magnitudes. 
We now state and prove two properties of the small magnitude sum and the large magnitude sum which will imply the proof of Lemma~\ref{lemma:soundness-structural}.

\begin{claim}\label{claim:small-magnitudes}
For any subset $S \subseteq D$,  
$\displaystyle\sum\limits_{z \in S} z_L < 10^{\nu}$.
\end{claim}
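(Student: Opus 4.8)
\textbf{Proof plan for Claim~\ref{claim:small-magnitudes}.}
The plan is to bound the total contribution of the small-magnitude $\alpha$-terms by a crude counting-times-size estimate, exploiting the fact that each small $\alpha_{t,i,r}$ is at most $10^{nd^2}$ and that there are only polynomially many of them. First I would recall from the \atomicsolver construction that every auxiliary variable $z = x_{t,i,j}$ or $y_{t,i,j}$ is a $(\pm\tfrac12)$-linear combination of the $i$ variables $\alpha_{t,i,1},\dots,\alpha_{t,i,i}$, and that the ``small'' part $z_L$ is the corresponding $(\pm\tfrac12)$-combination of the terms among $\alpha_{t,i,2},\dots,\alpha_{t,i,i}$, all of which have magnitude at most $10^{nd^2}$ (indeed $\alpha_{t,i,i}$ has magnitude $<2$ and $\alpha_{t,i,r}<10^{nd^2}$ for $1<r<i$ by Proposition~\ref{prop:bound-on-alphas}~(\ref{prop:bound-on-alpha-tir}),(\ref{prop:bound-on-alpha-tii1})). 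So $|z_L| \le \tfrac12\sum_{r=2}^{i}|\alpha_{t,i,r}| \le \tfrac12 \cdot 10^{\nu-nd}$ by Proposition~\ref{prop:bound-on-alphas}~(\ref{prop:bound-on-alpha-tii}).

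Next I would sum over all $z \in S \subseteq D$: since $|D| = \sum_{t\in[n]}\sum_{i=2}^{d}2\cdot 2^{i-1} = n(2^{d+1}-4) < N$, and $N = \poly(n)$ in the relevant parameter regime (and at worst $N = n\cdot 2^{d+1}$ in general, with $d = O(\log N)$ so that $N$ is still polynomially bounded), we obtain
\[
\Big|\sum_{z\in S} z_L\Big| \le \sum_{z \in D} |z_L| \le |D|\cdot \tfrac12\cdot 10^{\nu-nd} \le N \cdot 10^{\nu - nd}.
\]
Since $\nu = n^2$ and $N \le 10^{nd}$ comfortably (as $N = n\cdot(2^{d+1}-2)$ and $nd$ is much larger than $\log_{10}N$ for the relevant $d$), we get $N\cdot 10^{\nu-nd} = 10^{\nu}\cdot(N\cdot 10^{-nd}) < 10^{\nu}$, which is the desired bound. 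The only mild subtlety is making sure the number of auxiliary variables times $10^{nd^2}$ (before invoking the sharper Proposition~\ref{prop:bound-on-alphas}~(\ref{prop:bound-on-alpha-tii}) bound) does not already exceed $10^\nu$; but this is exactly why the exponents in the definitions of $f$ and $g$ were chosen so that the ``slack'' $10^{\nu - nd}$ absorbs the polynomial factor $|D|$.

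\textbf{Main obstacle.} I do not expect a genuine obstacle here — this is a routine magnitude bookkeeping step. The only thing to be careful about is correctly tracking which $\alpha$-terms count as ``small'' (namely $\alpha_{t,i,r}$ for $r \ge 2$, whose sum of absolute values is controlled by part~(\ref{prop:bound-on-alpha-tii}) of Proposition~\ref{prop:bound-on-alphas}) versus the single ``large'' term $\alpha_{t,i,1}$ which is deliberately excluded from $z_L$, and making sure the per-variable bound $\tfrac12\cdot 10^{\nu-nd}$ multiplied by the count $|D| = n(2^{d+1}-4)$ stays strictly below $10^\nu$ for all admissible $d$. The companion claim (bounding the large-magnitude sum $\sum z_U$ away from the range $[10^\nu, 10^{m+2n+\nu}]$ using the distinctness and spacing of the exponents $(i-1)!\,\nu_t$) will be the substantive part of proving Lemma~\ref{lemma:soundness-structural}, but that is outside the scope of this claim.
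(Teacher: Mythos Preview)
Your proposal is correct and follows essentially the same approach as the paper: both invoke Proposition~\ref{prop:bound-on-alphas}(\ref{prop:bound-on-alpha-tii}) to bound $\sum_{r=2}^i|\alpha_{t,i,r}|\le 10^{\nu-nd}$ and then absorb a polynomial counting factor into the slack $10^{-nd}$. The only cosmetic difference is that the paper sums over the $nd$ many $(t,i)$-blocks to get $nd\cdot 10^{\nu-nd}<10^\nu$, whereas you sum over the $|D|\le N$ individual auxiliary variables to get $N\cdot 10^{\nu-nd}<10^\nu$; your per-variable accounting is in fact the more careful of the two, since several $z$'s from the same block share the same $\alpha_{t,i,r}$'s.
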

\begin{proof}
For any subset $S \subseteq D$, 
\[ \displaystyle\sum\limits_{z \in S} z_L \leq \frac12\displaystyle\sum\limits_{t=1}^n\displaystyle\sum\limits_{i=2}^d \displaystyle\sum\limits_{r=2}^i \lvert \alpha_{t,i,r} \rvert. \]
From Proposition~\ref{prop:bound-on-alphas}(\ref{prop:bound-on-alpha-tii}), we know that for any $(t, i) \in [n] \times \{2, \dots, d\}$, $\displaystyle\sum\limits_{r=2}^i \lvert \alpha_{t,i,r} \rvert \leq 10^{\nu - nd}$. 
Summing over all $(t,i)$, we upper bound the sum of small magnitude terms as follows:
\begin{align*}
\displaystyle\sum\limits_{z \in S} z_L & \leq \frac12\displaystyle\sum\limits_{t=1}^n\displaystyle\sum\limits_{i=2}^d \displaystyle\sum\limits_{r=2}^i \lvert \alpha_{t,i,r} \rvert \\
& \leq nd \cdot 10^{\nu - nd} \\
& < 10^\nu \qedhere
\end{align*}
\end{proof}

\begin{claim}\label{claim:large-magnitudes}
Let $S \subseteq D$ such that $\displaystyle\sum\limits_{z \in S} z_U \neq 0$, then 
$\left| \displaystyle\sum\limits_{z \in S} z_U \right| \geq  \frac12 \cdot 10^{n^4}$.
\end{claim}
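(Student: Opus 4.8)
The key observation is that each large-magnitude term appearing in the decomposition $z = z_U + z_L$ is of the form $\pm \tfrac12 \alpha_{t,i,1} = \pm\tfrac12 \cdot 10^{f(t,i)}$ with $f(t,i) = (i-1)!\cdot \nu_t$, so $\sum_{z\in S} z_U$ is a sum of terms $\pm\tfrac12\cdot 10^{(i-1)!\nu_t}$ over pairs $(t,i)\in[n]\times\{2,\dots,d\}$ (with a coefficient that counts how many auxiliary variables from the $(t,i)$-block land in $S$, divided by $2$). First I would argue that the exponents $f(t,i) = (i-1)!\cdot\nu_t$ are \emph{all distinct} as $(t,i)$ ranges over $[n]\times\{2,\dots,d\}$: this is exactly why the $\nu_t$ were chosen to be distinct primes each larger than $n^4 > d$. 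Indeed, if $(i-1)!\cdot\nu_t = (i'-1)!\cdot\nu_{t'}$, then since $\nu_t,\nu_{t'}$ are primes exceeding $d\ge i-1,i'-1$, neither $\nu_t$ nor $\nu_{t'}$ divides $(i-1)!$ or $(i'-1)!$, forcing $\nu_t = \nu_{t'}$ (hence $t=t'$) and then $(i-1)! = (i'-1)!$ (hence $i=i'$, using $i,i'\ge 2$). So the multiset of exponents is a set.

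Next, I would bound the integer coefficient attached to each power $10^{f(t,i)}$. For a fixed block $(t,i)$, the variables $x_{t,i,j}$ and $y_{t,i,j}$ each contain the term $\pm\tfrac12\alpha_{t,i,1}$ exactly once (from the $A_i,B_i$ matrices the first column entries are all $\pm 1$), and there are $2^{i-1}$ of each, so the total contribution of this block to $\sum_{z\in S} z_U$ is $\tfrac12\cdot c_{t,i}\cdot 10^{f(t,i)}$ for some integer $c_{t,i}$ with $|c_{t,i}| \le 2^{i} \le 2^d$. Now $\sum_{z\in S} z_U = \tfrac12\sum_{(t,i)} c_{t,i}\,10^{f(t,i)}$; if this is nonzero, some $c_{t,i}\ne 0$. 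Letting $(t^*,i^*)$ be the pair with the \emph{largest} exponent $f(t^*,i^*)$ among those with $c_{t^*,i^*}\ne 0$, the contribution of that single term has magnitude $\ge \tfrac12\cdot 10^{f(t^*,i^*)}$, while all remaining terms have exponents strictly smaller, hence at most $f(t^*,i^*)-1$; their total magnitude is at most $\tfrac12\cdot n d\cdot 2^d\cdot 10^{f(t^*,i^*)-1}$, which is $< \tfrac14\cdot 10^{f(t^*,i^*)}$ since $nd\cdot 2^d < 10^{nd^2} \ll 10\cdot\tfrac14$... more carefully, $nd\,2^d/10 < \tfrac14 \cdot 10^{?}$ — wait, I should instead just use that the dominant term has magnitude $\ge \tfrac12\cdot 10^{n^4}$ (since $f(t^*,i^*) = (i^*-1)!\,\nu_{t^*} \ge 1!\cdot n^4 = n^4$ by Proposition~\ref{prop:bound-on-alphas}(\ref{prop:bound-on-alpha-ti1}), which gives $\alpha_{t,i,1} > 10^{n^4}$), and the subleading terms, being separated by at least one order of magnitude and numbering at most $nd$ with coefficients at most $2^d$, sum to something smaller by a factor that makes the triangle-inequality lower bound $|\sum_{z\in S} z_U| \ge \tfrac12\cdot 10^{f(t^*,i^*)} - \tfrac12 n d\, 2^d\, 10^{f(t^*,i^*)-1} \ge \tfrac12\cdot 10^{n^4}$ go through, using $nd\,2^d \le 10$ fails for large $n$ — so instead I compare against $10^{n^4}$ directly: the leading term is $\ge\tfrac12\,10^{n^4}$ in absolute value and bigger than ten times the sum of all the others, so $|\sum z_U| \ge \tfrac12\,10^{n^4} - (\text{smaller block sums})$.

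\textbf{The main obstacle} is making the "leading power dominates" estimate fully rigorous: I must verify that once I isolate the largest active exponent $E := f(t^*,i^*)$, the aggregate of \emph{all} other large-magnitude terms has magnitude strictly less than $\tfrac12\cdot 10^{E} - \tfrac12\cdot 10^{n^4}$, i.e. at most $\tfrac12 (10^E - 10^{n^4})$. Since every other active exponent is $\le E-1$, and there are at most $nd$ blocks each contributing a coefficient of magnitude $\le 2^d$, the others sum to at most $\tfrac12\, n d\, 2^d\, 10^{E-1}$. It therefore suffices that $nd\,2^d\,10^{E-1} \le 10^E - 10^{n^4}$, and since $E \ge n^4$ and $nd\,2^d = 2^{O(d\log n)} \le 2^{O(\log^2 n)} \ll 10^{n^4-1} \le 10^{E-1}$ — wait this shows $nd\,2^d\,10^{E-1} \approx 10^{E-1}\cdot(\text{small})$ which may exceed $10^E - 10^{n^4}$ when $E$ is just slightly above $n^4$... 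The clean fix is: $nd\,2^d\,10^{E-1}\le 10^{E-1}\cdot 10 = 10^E$ is false; rather use $nd\,2^d < 10^{nd^2} \le 10^{E-1}$ only when $E-1\ge nd^2$, which need not hold. The correct accounting is that distinct blocks $(t,i)$ with the \emph{same} $t$ have exponents $(i-1)!\nu_t$ which differ multiplicatively (gaps grow like $\nu_t$), and blocks with different $t$ have exponents that are wildly separated since $\nu_t$ are distinct primes $>n^4$; so in fact consecutive active exponents differ by far more than $\log_{10}(nd\,2^d)$, which secures the geometric-series domination. I would pin this down with the bound $\alpha_{t,i,1} > 10^{n^4}$ and $\sum_{r\ge 2}|\alpha_{t,i,r}| < 10^{\nu-nd}$ from Proposition~\ref{prop:bound-on-alphas}, noting $\nu = n^2 \ll n^4$, so the small blocks are negligible against a single $10^{n^4}$, and conclude $|\sum_{z\in S} z_U|\ge \tfrac12\cdot 10^{n^4}$.
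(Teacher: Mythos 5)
Your approach is genuinely different from the paper's, and it has a real gap. You try to isolate the block $(t^*,i^*)$ with the largest exponent $E = f(t^*,i^*)$ that actually appears in the sum, and then argue that $10^E$ dominates the combined magnitude of all remaining terms. For this "leading power dominates" strategy to work, you need every other active exponent to be separated from $E$ by at least $\log_{10}\bigl(O(nd\,2^d)\bigr) = \Theta(d)$ digits. But the construction does not guarantee such separation. The exponents are $f(t,i) = (i-1)!\,\nu_t$, and while they are pairwise distinct (your primality argument for that is fine), they can be \emph{numerically close}: for instance, two exponents with $i=2$ are $\nu_{t_1}$ and $\nu_{t_2}$, which differ by as little as $2$ (twin primes among the $\nu_t$'s), and more generally the intervals $\bigl[(i-1)!\,n^4,\ (i-1)!\,n^5\bigr]$ for different $i$ overlap once $n > d$, so an exponent from an $i$-block (with coefficient bounded only by $2^i \le 2^d$) can sit within $O(1)$ of the leading exponent. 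Once $d$ exceeds a small constant, $2^d \gg 10^{O(1)}$, and the subleading term can overwhelm the leading one; the triangle-inequality lower bound you are aiming for does not go through. You flag this difficulty yourself several times, and the closing assertion that "consecutive active exponents differ by far more than $\log_{10}(nd\,2^d)$" is stated without justification and is false in general.

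The paper avoids all of this with a divisibility argument, which is both simpler and immune to the gap problem. Every $z_U$ equals $\pm\frac12\cdot 10^{f(t,i)}$ with $f(t,i) = (i-1)!\,\nu_t > n^4$, so $z_U$ is an integer multiple of $\frac12\cdot 10^{n^4}$. Consequently $\sum_{z\in S} z_U$ is also an integer multiple of $\frac12\cdot 10^{n^4}$, and any nonzero such multiple has magnitude at least $\frac12\cdot 10^{n^4}$. No estimate on the gaps between the exponents, nor on the sizes of the coefficients, is needed. If you want to salvage a leading-term argument, you would have to redesign $f(t,i)$ to enforce large gaps, but that is unnecessary once you notice the common-divisor structure.
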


\begin{proof}
We show that for any subset of the auxiliary variables, the contribution of the large magnitudes is either 0, or larger than $\frac12 \cdot 10^{n^4}$ .  
Note that all the large magnitude terms, i.e., $\alpha_{t, i, 1}$ for any $(t, i)$, are powers of $10$ larger than $n^4$ and therefore, each $z_U$, being a $\pm\frac12$ multiple of the large term, is divisible by $\frac12 \cdot 10^{n^4}$. Thus, the sum $\left| \displaystyle\sum\limits_{z \in S} z_U \right|$ is divisible by  $\frac12 \cdot 10^{n^4}$. If the sum is non-zero, then it is a non-zero multiple of $\frac12 \cdot 10^{n^4}$ and hence is larger than $\frac12 \cdot 10^{n^4}$.

\end{proof}

\noindent The proof of Lemma~\ref{lemma:soundness-structural} now follows by combining Claim~\ref{claim:small-magnitudes} and Claim~\ref{claim:large-magnitudes}. 
\begin{proof}[Proof of Lemma~\ref{lemma:soundness-structural}]
For any subset $S\subseteq D$,  we can split the sum of the variables as:
\[\displaystyle\sum\limits_{z \in S} z =  \displaystyle\sum\limits_{z \in S} z_U +  \displaystyle\sum\limits_{z \in S} z_L. \]

If $\displaystyle\sum\limits_{z \in S} z_U \neq 0$, then from Claim~\ref{claim:small-magnitudes} and Claim~\ref{claim:large-magnitudes} we have,
\begin{align*}
 \left| \displaystyle\sum\limits_{z \in S } z \right| & \geq   \left| \displaystyle\sum\limits_{z \in S} z_U  \right| -   \left| \displaystyle\sum\limits_{z \in S } z_L  \right| \\
 & \geq \frac12 \cdot 10^{n^4} - 10^{\nu} = \Omega(10^{n^4})> 10^{m+2n +\nu} ~~\text{ [using the fact that $\nu = n^2$]}.
 \end{align*}
 On the other hand, if $\displaystyle\sum\limits_{z \in S} z_U  = 0$, then from Claim~\ref{claim:small-magnitudes}, 
 \[\left|\displaystyle\sum\limits_{z \in S} z \right| = \left| \displaystyle\sum\limits_{z \in S } z_L  \right|
 \leq 10^{\nu}. \]
\end{proof}

%%%%%%%%%%%%%%%%%%%%%%%%%%%%%%%%%%%%%%%%%%%%%%%%%%%%%

\subsection{Proof of Lemma ~\ref{lemma:runtime}}
\label{subsec:lemma:runtime}

\begin{proof} [Proof of Lemma~\ref{lemma:runtime}]
In the construction of the instance of \mss{d}, we create $2$ variables, i.e., $a_t, b_t$ and $2^{d+1}-4$ auxiliary variables $X_t \cup Y_t$ corresponding to each of the $n$ literals in the 1-in-3 SAT instance.  From Claim~\ref{claim:all-distinct} below, we know that all variables in the set $A$ are distinct. Therefore, the size of the set $A$ in the instance of \mss{d}, $N = n(2^{d+1}-2)$.  Now we show that every element constructed in the instance of \mss{d}  
has $\poly(n, d!)$ digit representation. 

From Fact~\ref{claim:bound-at}, Proposition~\ref{prop:bound-on-xtij} and Claim~\ref{claim:bound-Bk} below, we know that the magnitudes of all the numbers generated by the reduction are bounded by $10^{\poly(n,d!)}$. Therefore to complete the proof, it remains to show that the denominators of all the rational numbers in the instance of \mss{d} are also bounded by $10^{\poly(n,d!)}$

Observe from Definition~\ref{eqn:def-vars} that  $a_t$ and $b_t$ for every $t \in [n]$ are integers. Also, for any $t \in [n]$ and $i \in \{2, \cdots, d\}$ each $\alpha_{t, i, r}$ for $ 1 \leq r \leq i-1$ constructed by \atomicsolverproc{t}{i} is a unique power of $10$, and hence an integer, but  $\alpha_{t,i,i}$ is a rational number. Each auxiliary variable generated by \atomicsolverproc{t}{i} is therefore a rational number due to the contribution from $\alpha_{t, i, i}$. 
From Claim~\ref{bounddr} below, it follows that every rational number in the instance of \mss{d} has magnitude at most $10^{\poly(n, d!)}$ and therefore a $\poly(n, d!)$ digit representation. 
\end{proof}

The following claim bounds the magnitudes of the targets in the \mss{d} instance.   
\begin{claim}
\label{claim:bound-Bk}
For every $k \in \{ 2, \cdots, d \}$, 
\[ \lvert B_k \rvert \leq 10^{k(d!)n^6}. \]
\end{claim}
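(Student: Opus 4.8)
The plan is to bound $|B_k|$ directly from its definition $B_k = \sum_{t=1}^n a_t^k + \sum_{t=1}^n \sum_{x \in X_t} x^k$ for $k \in \{2,\dots,d\}$, by bounding each of the two summands separately using the already-established magnitude bounds on the individual variables. First I would recall from Fact~\ref{claim:bound-at} that $|a_t| < 10^{m+n+\nu+1} = 10^M = 10^{O(n^3)}$, so that $|a_t|^k < 10^{kM}$ and hence $\left|\sum_{t=1}^n a_t^k\right| \le n \cdot 10^{kM} = 10^{O(n^3) + k \cdot O(n^3)}$, which is comfortably at most $10^{k(d!)n^6}$ (using $k \ge 2$ and $n$ large). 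The slightly more delicate term is $\sum_{t=1}^n \sum_{x \in X_t} x^k$: here I would invoke Proposition~\ref{prop:bound-on-xtij}, which gives $|x_{t,i,j}| \le \frac12\left(10^{(i-1)!\,\nu_t} + 10^{\nu - nd}\right) \le 10^{(i-1)!\,\nu_t} \le 10^{(d-1)!\,\nu_n} \le 10^{(d-1)!\, n^5}$ for every auxiliary variable in $X_t$ (using $\nu_n < n^5$ from the Prime Number Theorem argument in the text).

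Next I would count the auxiliary variables: $|X_t| = 2^d - 2$ for each $t$, so $|\bigcup_t X_t| = n(2^d-2) \le N$. Therefore
\[
\left|\sum_{t=1}^n \sum_{x \in X_t} x^k\right| \le n(2^d-2)\cdot \left(10^{(d-1)!\,n^5}\right)^k \le N \cdot 10^{k(d-1)!\,n^5}.
\]
Combining the two bounds,
\[
|B_k| \le n \cdot 10^{kM} + N \cdot 10^{k(d-1)!\,n^5} \le 2N \cdot 10^{k(d-1)!\,n^5},
\]
and since $2N \le 10^{n^5}$ for $n$ large and $k(d-1)!\,n^5 + n^5 \le k\,(d!)\,n^6$ (using $k\ge 2$, $d \ge 2$ so $d! \ge 2(d-1)!$, and the extra $n$ factor in $n^6$ versus $n^5$ absorbing the $+n^5$ term), we conclude $|B_k| \le 10^{k(d!)n^6}$ as claimed.

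I do not anticipate a genuine obstacle here — the statement is a routine accounting of magnitudes, and all the necessary per-variable bounds (Fact~\ref{claim:bound-at} and Proposition~\ref{prop:bound-on-xtij}) are already in hand. The only mild subtlety is making sure the arithmetic in the exponent is done with enough slack: one wants the final exponent $k(d!)n^6$ to dominate both $kM = k\cdot O(n^3)$ and $k(d-1)!\,n^5 + n^5$, which is why the claim is stated with the somewhat generous $n^6$ (rather than $n^5$) and with the $d!$ factor (rather than $(d-1)!$). As long as I keep track of these slack factors and note the regime $n$ sufficiently large (which is harmless, since small $n$ can be absorbed into the NP-hardness reduction's constants), the bound follows immediately. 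The one thing I would double-check is whether the $B_1$ term needs separate treatment — but the claim is only stated for $k \ge 2$, and $B_1 = 10^\nu \cdot B < 10^{m+n+\nu+1}$ is already covered by Fact~\ref{claim:bound-at}, so no extra work is needed.
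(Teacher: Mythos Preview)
Your proposal is correct and follows essentially the same route as the paper: recall the definition of $B_k$, bound $|a_t|$ via Fact~\ref{claim:bound-at}, bound each auxiliary variable via Proposition~\ref{prop:bound-on-xtij} (using $\nu_t \le \nu_n < n^5$), count the $n(2^d-2)$ terms, and then check that the resulting exponent is dominated by $k(d!)n^6$. The paper's proof is simply a two-line version of the same computation.
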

\begin{proof}
Recall from Definition~\ref{eqn:def-targets}, 
\[B_k = \displaystyle\sum\limits_{t = 1}^n a_t^k + \displaystyle\sum\limits_{t = 1}^n\displaystyle\sum\limits_{x \in X_t} x^k \text{   for every } k \in \{2, \dots, d \}.
\]
Using bounds on the magnitudes of $a_t$ and $x \in X_t$ from Fact~\ref{claim:bound-at} and Proposition~\ref{prop:bound-on-xtij} 
we get 
\begin{align*}
\lvert B_k \rvert &\leq n(10^{k(m+n+\nu+1)}) + n2^d((10^{ (d-1)!\nu_n} + 10^{\nu-nd} )^k) \\
&\leq 10^{k(d!)n^6}. \qedhere
\end{align*}

\end{proof}

We now bound the magnitude of the denominators of $\alpha_{t, i, i}$ for every $(t, i) \in [n] \times \{2, \cdots, d\}$. This bound will be used in Claim~\ref{bounddr} to bound the denominators of all the rational numbers in the instance of \mss{d}. Let $D(x)$ denote the irreducible denominator of a rational number $x$. 
\begin{claim}
\label{alpha-tii-dr}
For any $(t, i) \in [n] \times \{2, \cdots, d\}$, 
\[D(\alpha_{t, i, i}) \leq 10^{(i!)^2\cdot n^{6}} \]
\end{claim}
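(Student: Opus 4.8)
The plan is to bound $D(\alpha_{t,i,i})$ by induction on $i$, propagating denominators through the recursive definition of the \atomicsolver. The first step is a structural reduction: inspecting Algorithm~\ref{algo:gadget}, every auxiliary variable $x_{t,i,j}$ and $y_{t,i,j}$ is $\tfrac12$ times a $\{0,\pm1\}$-linear combination of $\alpha_{t,i,1},\dots,\alpha_{t,i,i}$, and all of $\alpha_{t,i,1},\dots,\alpha_{t,i,i-1}$ are (integer) powers of $10$. Hence both $D(x_{t,i,j})$ and $D(y_{t,i,j})$ divide $2\,D(\alpha_{t,i,i})$, and it suffices to bound $D(\alpha_{t,i,i})$.

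To that end I would isolate the two places where denominators enter. First, $\alpha_{t,i,i}=R_{t,i}\big/\big(i!\prod_{r=1}^{i-1}\alpha_{t,i,r}\big)$ and the divisor is a positive integer, so $D(\alpha_{t,i,i})\le D(R_{t,i})\cdot i!\prod_{r=1}^{i-1}\alpha_{t,i,r}$, where the product of $\alpha$'s is at most $10^{\poly(n,d!)}$ by Proposition~\ref{prop:bound-on-alphas}(\ref{prop:bound-on-alpha-ti1}),(\ref{prop:bound-on-alpha-tir}) together with $\nu_t<n^5$. Second, from the definition of $R_{t,i}$ in Algorithm~\ref{algo:aux-vars}, $R_{t,i}=(b_t^i-a_t^i)+\sum_{i'=2}^{i-1}\sum_{j=1}^{2^{i'-1}}\big(y_{t,i',j}^i-x_{t,i',j}^i\big)$ with $a_t,b_t\in\mathbb{Z}$; by the structural observation, for a fixed level $i'$ every term $x_{t,i',j}^i$ and $y_{t,i',j}^i$ has denominator dividing $\big(2\,D(\alpha_{t,i',i'})\big)^i$, and since the denominator of a sum of rationals divides their least common multiple, $D(R_{t,i})$ divides $\prod_{i'=2}^{i-1}\big(2\,D(\alpha_{t,i',i'})\big)^i$. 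The point to emphasize is that pooling the $\approx 2^{i'}$ variables of a single level costs nothing — they carry the common denominator $2\,D(\alpha_{t,i',i'})$ — so only the product over the $i-2$ distinct levels appears.

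The induction then runs as follows. The base case $i=2$ is immediate since $R_{t,2}=b_t^2-a_t^2\in\mathbb{Z}$, so $D(\alpha_{t,2,2})\le 2\,\alpha_{t,2,1}=2\cdot 10^{\nu_t}<10^{(2!)^2 n^6}$. For the inductive step, substituting $D(\alpha_{t,i',i'})\le 10^{(i'!)^2 n^6}$ into the two bounds above yields $D(\alpha_{t,i,i})\le 2^{O(i^2)}\cdot i!\cdot\big(\prod_{r<i}\alpha_{t,i,r}\big)\cdot 10^{\,i\,n^6\sum_{i'=2}^{i-1}(i'!)^2}$. The arithmetic heart of the claim is that factorials grow fast enough that $\sum_{i'=2}^{i-1}(i'!)^2\le\tfrac98((i-1)!)^2$, and that $\tfrac98\,i<i^2=(i!)^2/((i-1)!)^2$; this leaves a slack of order $((i-1)!)^2 n^6$ in the exponent, which absorbs the $i!\prod_{r<i}\alpha_{t,i,r}\le 10^{\poly(n,d!)}$ and $2^{O(i^2)}$ factors (in the spirit of the $n^6$-versus-$n^5$ cushion used for $B_k$ in Claim~\ref{claim:bound-Bk}), giving $D(\alpha_{t,i,i})\le 10^{(i!)^2 n^6}$. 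This is precisely the estimate needed as an input to Claim~\ref{bounddr}.

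I expect the main obstacle to be the analysis of $D(R_{t,i})$: naively, building $R_{t,i}$ by summing on the order of $2^i$ rationals and then raising them to the $i$-th power looks like it could inflate the denominator super-exponentially in $i$ and destroy the $\poly(n,d!)$ target. The resolution needs two observations made precise — (i) all variables produced by a single call \atomicsolverproc{t}{i'} share the denominator $2\,D(\alpha_{t,i',i'})$, so an lcm over an entire level is free, and (ii) the super-fast growth of $(i'!)^2$ makes $\sum_{i'<i}(i'!)^2$ essentially $((i-1)!)^2$, keeping the exponent below $(i!)^2$ — after which the routine (if slightly tedious) task is to check that the resulting inductive inequality closes with room to spare, using the bounds on $\alpha_{t,i,r}$ and $\nu_t<n^5$.
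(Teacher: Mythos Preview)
Your proposal is correct and follows essentially the same route as the paper's proof: both arguments bound $D(\alpha_{t,i,i})$ by $D(R_{t,i})\cdot i!\prod_{r<i}\alpha_{t,i,r}$, observe that all auxiliary variables from \atomicsolverproc{t}{i'} share the common denominator $2\,D(\alpha_{t,i',i'})$ so that $D(R_{t,i})$ is controlled by $\prod_{i'<i}(2\,D(\alpha_{t,i',i'}))^i$, and close the induction via the fast growth of $(i'!)^2$. The only cosmetic difference is that you use the sharper estimate $\sum_{i'=2}^{i-1}(i'!)^2\le\tfrac{9}{8}((i-1)!)^2$ where the paper uses the cruder $\sum_{u=2}^{\ell-1}(u!)^2\le(\ell-1)((\ell-1)!)^2$; both leave enough slack to absorb the $i!\prod_{r<i}\alpha_{t,i,r}\le i!\cdot 10^{(i-1)!n^5+nd^3}$ and $2^{O(i^2)}$ terms.
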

\begin{proof}
The proof proceeds by first obtaining a recursive expression for $D(\alpha_{t, i,i})$, and we then use induction on $i$ to show the bound. 
Recall the definition of $\alpha_{t, i, i}$ from Algorithm~\ref{algo:gadget},
\[ \alpha_{t,i,i} = \frac{R_{t,i}}{ i ! \prod\limits_{ r\in [i-1]} \alpha_{t,i,r} }, \]
where $R_{t,i}$ is defined as 
\[ R_{t,i} = b_t^i - a_t^i + \displaystyle\sum\limits_{u=2}^{i-1}\displaystyle\sum\limits_{v=1}^{2^{u-1}} (y_{t,u,v}^i - x_{t,u,v}^i). \]
Therefore, it follows that the denominator of $\alpha_{t, i, i}$ is bounded by the product of the denominator of $R_{t,i}$ and $i! \cdot \prod\limits_{r =1}^{i-1} \alpha_{t,i,r}$. i.e.,
\begin{align*}
D(\alpha_{t, i, i}) &\leq D(R_{t, i})\cdot(i! \cdot \prod\limits_{r =1}^{i-1} \alpha_{t,i,r} ) \\
&= D(R_{t, i})\cdot(i! \cdot 10^{(i-1)! \nu_t + \sum\limits_{r = 2}^{i-1} g(t,i,r)} )\\
&\leq D(R_{t, i})\cdot(i! \cdot 10^{(i-1)! n^5 + nd^3})
\end{align*}
The last inequality follows from the fact that $\sum\limits_{r = 2}^{i-1} g(t,i,r)  = \sum\limits_{r = 2}^{i-1} (t-1)d^2+(i-1)i + r \leq t d^3$ for all $ 2 \leq i \leq d$ and $\nu_t < n^5$ for any $t \in [n]$. We now obtain an expression for $D(R_{t,i})$. Since $b_t$ and $a_t$ are both integers, note that 
$D(R_{t,i}) = D\left(\displaystyle\sum\limits_{u=2}^{i-1}\displaystyle\sum\limits_{v=1}^{2^{u-1}} (y_{t,u,v}^i - x_{t,u,v}^i)\right)$. 
Also, recall that all the auxiliary variables obtained from a given \atomicsolverproc{t}{u}, described in Algorithm~\ref{algo:gadget}, have the same denominator to which $D(\alpha_{t, u, u})$ contributes, i.e.,  
$D(x_{t, u, v}) = D(y_{t, u, v}) = 2 \cdot D(\alpha_{t, u, u})$, for all $v \in [2^{u-1}]$.
Therefore,  $D(y_{t,u,v}^i - x_{t,u,v}^i ) = 2^i \cdot D( \alpha_{t, u, u}^i)$. 
From this observation, it follows that
$ D( \displaystyle\sum\limits_{v=1}^{2^{u-1}} y_{t,u,v}^i - x_{t,u,v}^i)  = 2^i \cdot D( \alpha_{t, u, u}^i)$, 
and we get an expression for $D(R_{t,i})$ as follows: 
\[ D(R_{t,i}) = LCM( \{ 2^i \cdot D(  \alpha_{t, u, u}^i)  \mid u \in \{2, \cdots, i-1\} \} ) \leq 2^{i^2} \cdot \displaystyle\prod_{u =2}^{i-1} D( \alpha_{t, u, u}^i).  \]
Substituting the above expression for $D(R_{t,i})$ back in the expression obtained for $D(\alpha_{t,i,i})$, we get  
\begin{equation}
\label{denom-eq}
 D(\alpha_{t, i, i}) \leq \left( \displaystyle\prod_{u =2}^{i-1} D( \alpha_{t, u, u}^i) \right) \cdot(2^{i^2} \cdot i! \cdot 10^{(i-1)! n^5 + nd^3})
 \end{equation}
 
We now use induction on $i$ to show that that $D(\alpha_{t, i, i}) \leq 10^{(i!)^2\cdot n^{6}}$ for every $i \in \{2, \cdots, d\}$. For the base case, $i=2$, from definitions we know that
\[D(\alpha_{t, 2, 2}) = 2 \cdot10^{\nu_t} < 10^{n^6}\]

Let us assume the induction hypothesis that for all $i < \ell \leq d$,
\[ D(\alpha_{t, i, i}) \leq 10^{(i !)^2 \cdot n^{6}}. \]

From Equation~\ref{denom-eq}, we know that
\begin{align*}
 D(\alpha_{t, \ell, \ell}) &\leq \left( \displaystyle\prod_{u=2}^{\ell-1} D( \alpha_{t, u, u}^\ell)\right) \cdot (2^{\ell^2} \cdot \ell! \cdot 10^{(\ell-1)! n^5 + nd^3}) \\
 &\leq \left( \displaystyle\prod_{u=2}^{\ell-1} (10^{(u !)^2 \cdot n^{6}} )^\ell\right)\cdot (10^{(\ell-1)! n^5 + nd^3 + 2 \ell^2}) \\
 &\leq 10^{ \ell \sum\limits_{ u=2}^{\ell-1} ( (u !)^2 \cdot n^{6} ) + (\ell)! n^5 + nd^3+ 2\ell^2} \\
  &\leq 10^{ \ell \cdot (\ell-1) \cdot (\ell-1)!^2 \cdot n^{6}  + (\ell)! n^5 +nd^3 + 2 \ell^2} \\
  & \leq 10^{ (\ell !)^2 \cdot n^{6}},
 \end{align*}
where the last inequality follows from the fact that $ \ell (\ell-1 )!^2 n^6 > (\ell)! n^5 + nd^3 + 2\ell^2 $ for any $\ell \leq d$.
\end{proof}

\begin{claim}
\label{bounddr}
For any $x \in  A  ~\bigcup~ \{ B_1, \cdots, B_d \} $, 
\[  D(x) < 10^{\poly(n, d!)}. \]
\end{claim}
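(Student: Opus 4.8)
The plan is to split $A \cup \{B_1,\dots,B_d\}$ into a few kinds of elements and bound the irreducible denominator of each kind separately, in every case reducing to Claim~\ref{alpha-tii-dr}, which already gives $D(\alpha_{t,i,i}) \le 10^{(i!)^2 n^6}$ for all $(t,i)\in[n]\times\{2,\dots,d\}$. By~(\ref{eqn:def-vars}) the numbers $a_t$ and $b_t$ are integers, and $B_1 = 10^\nu \cdot B$ is an integer as well, so $D(a_t) = D(b_t) = D(B_1) = 1$. Next, fix $t\in[n]$, $i\in\{2,\dots,d\}$ and $j\in[2^{i-1}]$: as is visible from Algorithm~\ref{algo:gadget} and from the proof of Claim~\ref{alpha-tii-dr}, each $x_{t,i,j}$ (and likewise $y_{t,i,j}$) is a $(\pm\frac12)$-linear combination of the integers $\alpha_{t,i,1},\dots,\alpha_{t,i,i-1}$ together with the single rational $\alpha_{t,i,i}$, so $D(x_{t,i,j}) = D(y_{t,i,j}) = 2\,D(\alpha_{t,i,i}) \le 2\cdot 10^{(i!)^2 n^6} \le 10^{\poly(n, d!)}$ since $i \le d$. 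This covers every element of each $X_t \cup Y_t$, hence all of $A$.

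It then remains to bound $D(B_k)$ for $k\in\{2,\dots,d\}$. By~(\ref{eqn:def-targets}), $B_k = \sum_{t=1}^n a_t^k + \sum_{t=1}^n \sum_{x \in X_t} x^k$, and the first sum is an integer. Each $x \in X_t$ is produced by a call to \atomicsolver on some pair $(t,u)$ with $u\in\{2,\dots,d\}$, and all variables from that call share the denominator $2\,D(\alpha_{t,u,u})$, so $D(x^k)$ divides $(2\,D(\alpha_{t,u,u}))^k$. Hence $D(B_k)$ divides the least common multiple of these quantities over the at most $nd$ relevant pairs $(t,u)$, which is at most $\prod_{t=1}^n \prod_{u=2}^d \big(2\,D(\alpha_{t,u,u})\big)^k \le \big(2 \cdot 10^{(d!)^2 n^6}\big)^{n d^2} = 10^{\poly(n, d!)}$ by Claim~\ref{alpha-tii-dr} and $k \le d$, which completes the proof.

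There is no genuine obstacle here; the only point that needs a bit of care is the bookkeeping in the last step — counting the distinct denominators that feed into $B_k$ (one per invocation of \atomicsolver, so $O(nd)$ of them) and the power $k\le d$ to which each is raised — so that the exponent in the final bound stays $\poly(n, d!)$ and does not blow up. Everything else is immediate from the denominator bound of Claim~\ref{alpha-tii-dr}.
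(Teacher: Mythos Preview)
Your proposal is correct and follows essentially the same approach as the paper: both split into the integer elements $a_t, b_t, B_1$, the auxiliary variables (bounded via $D(x_{t,i,j}) \le 2\,D(\alpha_{t,i,i})$ and Claim~\ref{alpha-tii-dr}), and the targets $B_k$ (bounded by the product over the $O(nd)$ pairs $(t,i)$ of $D(\alpha_{t,i,i})^k$). The only cosmetic difference is that you carry the extra factor of $2$ through the $B_k$ bound and arrive at an exponent of order $n d^2 (d!)^2 n^6$, whereas the paper drops the $2$ and writes the bound as $10^{k d (d!)^2 n^7}$; both are $\poly(n,d!)$.
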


\begin{proof} [Proof of Claim~\ref{bounddr}]
We first observe that the elements constructed from the 3-SAT clauses and variables are all integers. So, $D(a_t) = D(b_t) = 1$ for all $t \in [n]$. Next, we argue about the denominators of the auxiliary variables and show that they are all bounded by $2 \cdot 10^{(d!)^2\cdot n^{6}}$. Consider the set of auxiliary variables generated by \atomicsolverproc{t}{i} for some $t \in [n]$ and $i \in \{2, 3, \cdots, d\}$. Each $x_{t,i,j}$ (or $y_{t,i,j}$) 
 is a $\pm \frac12$-linear combination of the $\{ \alpha_{t,i,r} \mid r \in [i] \}$ variables. From the definitions in Algorithm~\ref{algo:gadget}, we note that all $\alpha_{t, i, r}$ variables constructed by the \atomicsolver are integers except for $\alpha_{t,i,i}$. Therefore, each $x_{t,i,j}$ and $y_{t,i,j}$ have the same denominator as $\alpha_{t,i,i}/2$. Using Claim~\ref{alpha-tii-dr}, we get that for every $(t,i)$, $D(\alpha_{t, i, i,}) \leq 10^{(i!)^2\cdot n^{6}}$. Therefore, for any $j \in [2^{i-1}]$,  $D(x_{t, i, j}) < 2 \cdot 10^{(i!)^2\cdot n^{6}}$. A similar argument applies to $y_{t,i,j}$. 

We now bound the magnitudes of the denominators of the target, $B_1, \cdots, B_d$ defined in the \mss{d} instance. Recall from Definition~\ref{eqn:def-targets} that $B_1$ is an integer. Therefore, $D(B_1) = 1$. All other targets are rational numbers defined as
\[B_k = \displaystyle\sum\limits_{t = 1}^n a_t^k + \displaystyle\sum\limits_{t = 1}^n\displaystyle\sum\limits_{x \in X_t} x^k \text{   for every } k \in \{2, \dots, d \}.
\]
The denominator of $B_k$ is defined by the denominator of the sum, $\displaystyle\sum\limits_{t = 1}^n\displaystyle\sum\limits_{x \in X_t} x^k$. This sum can be expanded as $\displaystyle\sum\limits_{t= 1}^n\displaystyle\sum\limits_{i= 2}^d\displaystyle\sum\limits_{j= 1}^{2^{i-1}}  x^k$.  From the fact that $D(\displaystyle\sum\limits_{j= 1}^{2^{i-1}}  x^k) = D(\alpha_{t,i,i}^k)$ and Claim~\ref{alpha-tii-dr}, we get, 
\begin{align*}
D(B_k) &\leq \displaystyle\prod\limits_{\substack{ t \in [n]\\  i \in \{2 \ldots, d\}}} D(\alpha_{t,i,i}^k) \\
&\leq \displaystyle\prod\limits_{\substack{ t \in [n]\\  i \in \{2 \ldots, d\}}} 10^{k (i!)^2\cdot n^{6}}\\
&\leq (10^{k (d!)^2\cdot n^{6}})^{nd} = 10^{kd (d!)^2\cdot n^{7}}
\end{align*}

Therefore, we conclude that every element of the instance of \mss{d} constructed by the reduction has a denominator of magnitude at most $10^{\poly(n, d!)}$. 
\end{proof}

\begin{claim}
\label{claim:all-distinct}
All variables in the set $A$ are distinct. 
\end{claim}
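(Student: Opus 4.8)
The plan is to partition $A$ into the two blocks $P = \{a_t, b_t \mid t\in[n]\}$ and $Q = \bigcup_{t\in[n]}(X_t \cup Y_t)$ and to verify three things: (i) $P$ and $Q$ are disjoint; (ii) the elements of $P$ are pairwise distinct; (iii) the elements of $Q$ are pairwise distinct. Distinctness of all of $A$ then follows at once, and as a byproduct $\lvert A \rvert = 2n + n(2^{d+1}-4) = n(2^{d+1}-2)$, the count needed in Lemma~\ref{lemma:runtime}.

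For (i), I would argue by magnitude. By Fact~\ref{claim:bound-at}, every element of $P$ has absolute value strictly below $10^{m+n+\nu+1} = 10^{M}$ with $M = O(n^3)$, whereas Proposition~\ref{prop:bound-on-xtij} gives $2\lvert x_{t,i,j}\rvert \ge 10^{(i-1)!\,\nu_t} - 10^{\nu-nd}$ (and the analogous bound for $y_{t,i,j}$), which is at least $\tfrac12\cdot 10^{n^4}$ since $(i-1)!\ge 1$, $\nu_t > n^4$ and $\nu = n^2$. As $M \ll n^4$ for $n$ large, the two absolute-value ranges do not overlap, so no element of $P$ coincides with one of $Q$.

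For (ii), I would use that $a_t = 10^{\nu}a_t'$ and $b_t = 10^{\nu}b_t'$, where $a_t', b_t'$ are the integers produced by the $1$-in-$3$-SAT-to-Subset-Sum reduction recalled in Section~\ref{sec:reduction}. For $t_1\neq t_2$ the $t_1$-st digit of the variable region separates $\{a_{t_1}', b_{t_1}'\}$ from $\{a_{t_2}', b_{t_2}'\}$; and $a_t' \neq b_t'$ because their clause regions carry $1$'s in disjoint positions with at least one of the two nonempty, which we may assume after the standard normalization that every variable occurs in $\phi$ and no clause contains a literal together with its negation. Multiplication by $10^{\nu}$ preserves all these inequalities, so $P$ has $2n$ distinct elements.

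For (iii), I would simply invoke Proposition~\ref{prop:all-distinct}, whose parts~(1)--(3) state precisely that the $x$-variables are pairwise distinct, the $y$-variables are pairwise distinct, and no $x$-variable equals a $y$-variable. Its mechanism is the only part of the argument that is not routine, so it is worth recalling: Proposition~\ref{prop:bound-on-xtij} confines $\lvert x_{t,i,j}\rvert$ and $\lvert y_{t,i,j}\rvert$ to a narrow band around $\tfrac12\cdot 10^{(i-1)!\,\nu_t}$, and these bands are pairwise disjoint across distinct pairs $(t,i)$ because the exponents $(i-1)!\,\nu_t$ are pairwise distinct --- an equality $(i_1-1)!\,\nu_{t_1} = (i_2-1)!\,\nu_{t_2}$ with $i_1 > i_2$ would exhibit $\nu_{t_2} = \frac{(i_1-1)!}{(i_2-1)!}\,\nu_{t_1}$ as a product of two integers each at least $2$, contradicting the primality of $\nu_{t_2}$. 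The genuinely delicate step, which Proposition~\ref{prop:all-distinct} packages, is ruling out collisions within a fixed $(t,i)$: there one uses that the rows of $A_i$ and of $B_i$ are pairwise distinct $\pm 1$ vectors and that $\alpha_{t,i,1}$ dominates $\alpha_{t,i,2},\dots,\alpha_{t,i,i}$ in magnitude, so no choice of signs can make two of the entries of $\tfrac12 A_i\alpha_{t,i}$ (or of $\tfrac12 B_i\alpha_{t,i}$) equal. Given the earlier propositions, the present claim reduces to this observation together with the magnitude bookkeeping in (i)--(ii), and I do not expect any additional obstacle.
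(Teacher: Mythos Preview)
Your proof is correct and follows essentially the same three-part decomposition as the paper: invoke Proposition~\ref{prop:all-distinct} for distinctness of the auxiliary variables, appeal to the construction for distinctness of $\{a_t,b_t\}$, and separate the two blocks by the magnitude gap coming from Fact~\ref{claim:bound-at} and Proposition~\ref{prop:bound-on-xtij}. One minor remark: your parenthetical sketch of the within-$(t,i)$ mechanism of Proposition~\ref{prop:all-distinct} is oversimplified---dominance of $\alpha_{t,i,1}$ alone does not rule out collisions between rows that agree in the first coordinate; the paper's argument there also uses that $\alpha_{t,i,2},\dots,\alpha_{t,i,i-1}$ are distinct powers of $10$---but since you are citing the proposition rather than reproving it, this does not affect the validity of your argument for the claim.
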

\begin{proof}[Proof of Claim~\ref{claim:all-distinct}]
From Proposition~\ref{prop:all-distinct}, we know that all auxiliary variables are distinct. Also, the distinctness of the variables $\{a_t, b_t\mid t \in [n] \}$ follows from the construction. The only thing that remains to show is that all the auxiliary variables are different from $\{a_t, b_t\mid t \in [n] \}$.

We show this fact by comparing the magnitudes of the two sets of variables. From Fact~\ref{claim:bound-at}, we know that 
$\lvert v \rvert < 10^{m+n+\nu+1} $ for every $v \in \{a_t, b_t\mid t \in [n] \}$, and from Proposition~\ref{prop:bound-on-xtij}, we know that all auxiliary variables are larger than $10^{\nu_1} - 10^{\nu-nd} > 10^{m+n+\nu+1}$. Therefore the two sets of variables are disjoint. 
\end{proof}

\section{Proofs of the Helper Propositions~\ref{prop:condition34}, \ref{prop:sum-xtij-0}, \ref{prop:bound-on-alphas}, \ref{prop:bound-on-xtij}, \ref{prop:all-distinct} }
\label{sec:helpers}

In this section, we prove the helper claims stated in the previous section. 
%%%%%%%%%%%%%%%%%%%%%%%%%%%%%%%%%%%%%%%%%%%%%%%%%%%%%%%

\begin{proof}[Proof of Proposition~\ref{prop:condition34}]
We first show a structural property of the auxiliary variables generated by any $\atomicsolver$. The proof of Proposition~\ref{prop:condition34} follows from it. 

\begin{claim}\label{claim:structural-p1}
For any $i \in \{2, \cdots, d\}$,
Let $A_i$, $B_i$ are matrices defined in the \atomicsolver and let $\{ \alpha_{r} \mid r \in [i] \}$ be some rational numbers.
If 
\begin{align*}
\begin{bmatrix} x_{1} \\ x_{2}\\ \vdots \\ x_{2^{i-1}} \end{bmatrix} = \frac12 \cdot A_i \cdot \begin{bmatrix} \alpha_{1} \\  \alpha_{2} \\ \vdots \\ \alpha_{i} \end{bmatrix}, 
\text{ and }
\begin{bmatrix} y_{1} \\ y_{2}\\ \vdots \\ y_{2^{i-1}} \end{bmatrix} = \frac12 \cdot B_i \cdot \begin{bmatrix} \alpha_{1} \\  \alpha_{2} \\ \vdots \\ \alpha_{i}\end{bmatrix},
\end{align*}

then $\{ x_{j} \mid j \in [2^{i-1}] \}$ and $\{ y_{j} \mid j \in [2^{i-1}] \}$  satisfy: 
\begin{align*}
\displaystyle\sum\limits_{j=1}^{2^{i-1}}  (x_{j}^k - y_{j}^k) &= 0 \text{ for every } k \in \{1,\ldots, i-1\} \\
\displaystyle\sum\limits_{j=1}^{2^{i-1}}  (x^i_{j} - y^i_{j})  &= i! \prod\limits_{r=1}^i \alpha_{r} \\
\end{align*}
\end{claim}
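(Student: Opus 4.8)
The plan is to prove Claim~\ref{claim:structural-p1} by induction on $i$, exploiting the recursive block structure of the matrices $A_i,B_i$ as a kind of ``differencing operator'' that simultaneously annihilates all low-degree power-sum differences and produces the degree-$i$ term. For vectors $u=(u_1,\dots,u_M)$ and $v=(v_1,\dots,v_M)$ of the same length, write $\Delta_k(u,v):=\sum_{j=1}^M u_j^k-\sum_{j=1}^M v_j^k$ for the degree-$k$ power-sum difference, with the convention $\Delta_0(u,v)=0$. With this notation, the claim asserts exactly that $\Delta_k(x,y)=0$ for $1\le k\le i-1$ and $\Delta_i(x,y)=i!\prod_{r=1}^i\alpha_r$, where $x=\tfrac12 A_i\alpha$ and $y=\tfrac12 B_i\alpha$.

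For the base case $i=2$, a direct computation gives $x=\tfrac12(\alpha_1+\alpha_2,\,-\alpha_1-\alpha_2)$ and $y=\tfrac12(\alpha_1-\alpha_2,\,-\alpha_1+\alpha_2)$, so $\Delta_1(x,y)=0$ and $\Delta_2(x,y)=\tfrac12\big[(\alpha_1+\alpha_2)^2-(\alpha_1-\alpha_2)^2\big]=2\alpha_1\alpha_2$, as required. For the inductive step, I would set $x':=\tfrac12 A_{i-1}(\alpha_1,\dots,\alpha_{i-1})^T$ and $y':=\tfrac12 B_{i-1}(\alpha_1,\dots,\alpha_{i-1})^T$, so that the induction hypothesis applies to $(x',y')$ at level $i-1$. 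Unwinding the block recursions $A_i=\bigl[\begin{smallmatrix} A_{i-1} & \mathbf{1} \\ B_{i-1} & -\mathbf{1}\end{smallmatrix}\bigr]$ and $B_i=\bigl[\begin{smallmatrix} B_{i-1} & \mathbf{1} \\ A_{i-1} & -\mathbf{1}\end{smallmatrix}\bigr]$ shows that, as multisets, $\{x_j\}=\{x'_j+\tfrac{\alpha_i}{2}\}\cup\{y'_j-\tfrac{\alpha_i}{2}\}$ and $\{y_j\}=\{y'_j+\tfrac{\alpha_i}{2}\}\cup\{x'_j-\tfrac{\alpha_i}{2}\}$. Substituting and regrouping,
\[
\Delta_k(x,y)=\sum_j\Big[(x'_j+\tfrac{\alpha_i}{2})^k-(x'_j-\tfrac{\alpha_i}{2})^k\Big]-\sum_j\Big[(y'_j+\tfrac{\alpha_i}{2})^k-(y'_j-\tfrac{\alpha_i}{2})^k\Big].
\]

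Now I would apply the elementary identity $(u+v)^k-(u-v)^k=2\sum_{\ell\text{ odd}}\binom{k}{\ell}u^{k-\ell}v^\ell$ with $v=\tfrac{\alpha_i}{2}$, obtaining
\[
\Delta_k(x,y)=2\sum_{\ell\text{ odd}}\binom{k}{\ell}\Big(\tfrac{\alpha_i}{2}\Big)^\ell\,\Delta_{k-\ell}(x',y').
\]
By the induction hypothesis $\Delta_m(x',y')=0$ for $0\le m\le i-2$ and $\Delta_{i-1}(x',y')=(i-1)!\prod_{r=1}^{i-1}\alpha_r$. For $k\le i-1$ every surviving term has $\ell\ge1$, hence $k-\ell\le i-2$, so $\Delta_k(x,y)=0$. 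For $k=i$ only the $\ell=1$ term survives (any odd $\ell\ge3$ forces $i-\ell\le i-2$), yielding $\Delta_i(x,y)=2\cdot i\cdot\tfrac{\alpha_i}{2}\cdot(i-1)!\prod_{r=1}^{i-1}\alpha_r=i!\prod_{r=1}^i\alpha_r$, which closes the induction.

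The only delicate point is the bookkeeping of the block structure: correctly identifying that the level-$i$ variables are precisely the level-$(i-1)$ variables shifted by $\pm\alpha_i/2$, with the roles of $x'$ and $y'$ interchanged in the second block of $B_i$ relative to $A_i$. Once that is pinned down, the degree reduction is forced by the binomial identity and the rest is routine. Proposition~\ref{prop:condition34} then follows immediately by instantiating the claim with $\alpha_r=\alpha_{t,i,r}$ and recalling that $\alpha_{t,i,i}$ is chosen so that $i!\prod_{r=1}^i\alpha_{t,i,r}=R_{t,i}$ (Proposition~\ref{prop:bound-on-alphas}\eqref{prop:alpha-condition}).
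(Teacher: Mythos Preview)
Your proposal is correct and follows essentially the same approach as the paper: induction on $i$, unwinding the block recursion to express the level-$i$ variables as $\pm\alpha_i/2$ shifts of the level-$(i{-}1)$ variables (with the $x'/y'$ roles swapped in the second block), regrouping so that each summand has the form $(u+\tfrac{\alpha_i}{2})^k-(u-\tfrac{\alpha_i}{2})^k$, and then applying the odd-term binomial identity to reduce to $\Delta_{k-\ell}(x',y')$. Your $\Delta_k$ notation and the explicit convention $\Delta_0=0$ make the bookkeeping slightly cleaner than in the paper, but the argument is the same.
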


\begin{proof}[Proof of Claim~\ref{claim:structural-p1}]
We use induction on $i$. 
For the base case, consider $i=2$. From the definition of $A_2$ and $B_2$ we get,
\begin{align*}
x_{1} &= \frac{\alpha_{1}}{2} + \frac{\alpha_{2}}{2} \\
x_{2} &= -\frac{\alpha_{1}}{2} - \frac{\alpha_{2}}{2} \\
y_{1} &= \frac{\alpha_{1}}{2} - \frac{\alpha_{2}}{2} \\
y_{2} &= -\frac{\alpha_{1}}{2} + \frac{\alpha_{2}}{2}  
\end{align*}
Therefore, 
\begin{align*}
x_{1}+x_{2}-y_{1}-y_{2} &= 0 \\
x_{1}^2+x_{2}^2-y_{1}^2-y_{2}^2 &= 2 \cdot \alpha_{1}\cdot \alpha_{2} 
\end{align*}
and the claim holds for $i = 2$. 

Let us assume the induction hypothesis for all $i < \ell \leq d$. 
For $i = \ell$, we have,
\begin{align*}
 \begin{bmatrix} x_{1} \\ x_{ 2}\\ \vdots \\ x_{2^{\ell-1}} \end{bmatrix} = \frac12 \cdot A_{\ell} \cdot \begin{bmatrix} \alpha_{ 1} \\  \alpha_{2} \\ \vdots \\ \alpha_{ \ell} \end{bmatrix}
 \text{ and  }
\begin{bmatrix} y_{1} \\ y_{ 2}\\ \vdots \\ y_{2^{\ell-1}} \end{bmatrix} = \frac12 \cdot B_{\ell} \cdot \begin{bmatrix} \alpha_{ 1} \\  \alpha_{2} \\ \vdots \\ \alpha_{ \ell}\end{bmatrix}
\end{align*}

From the recursive definitions of the matrices $A_{\ell}$, $B_{\ell}$ in Algorithm~\ref{algo:gadget}, we can split the above equations as
\begin{align*}
 \begin{bmatrix} x_{1} \\ x_{ 2}\\ \vdots \\ x_{2^{\ell-2}} \end{bmatrix} &= \frac12 \cdot A_{\ell -1} \cdot \begin{bmatrix} \alpha_{1} \\  \alpha_{2} \\ \vdots \\ \alpha_{\ell-1} \end{bmatrix} + \frac12 \cdot \begin{bmatrix} \alpha_{\ell} \\  \alpha_{\ell} \\ \vdots \\ \alpha_{ \ell} \end{bmatrix}  \\
 \begin{bmatrix} x_{2^{\ell-2}+1} \\ x_{ 2^{\ell-2}+2}\\ \vdots \\ x_{2^{\ell-1}} \end{bmatrix} &= \frac12 \cdot B_{\ell -1} \cdot \begin{bmatrix} \alpha_{  1} \\  \alpha_{  2} \\ \vdots \\ \alpha_{  \ell-1} \end{bmatrix} -  \frac12 \cdot \begin{bmatrix} \alpha_{  \ell} \\  \alpha_{  \ell} \\ \vdots \\ \alpha_{\ell} \end{bmatrix}  \\
 \begin{bmatrix} y_{1} \\ y_{ 2}\\ \vdots \\ y_{2^{\ell-2}} \end{bmatrix} &= \frac12 \cdot B_{\ell -1} \cdot \begin{bmatrix} \alpha_{ 1} \\  \alpha_{  2} \\ \vdots \\ \alpha_{\ell-1} \end{bmatrix} + \frac12 \cdot \begin{bmatrix} \alpha_{ \ell} \\  \alpha_{ \ell} \\ \vdots \\ \alpha_{ \ell} \end{bmatrix}  \\
 \begin{bmatrix} y_{2^{\ell-2}+1} \\ y_{2^{\ell-2}+2}\\ \vdots \\ y_{2^{\ell-1}} \end{bmatrix} &= \frac12 \cdot A_{\ell -1} \cdot \begin{bmatrix} \alpha_{ 1} \\  \alpha_{ 2} \\ \vdots \\ \alpha_{\ell-1} \end{bmatrix} - \frac12 \cdot  \begin{bmatrix} \alpha_{ \ell} \\  \alpha_{ \ell} \\ \vdots \\ \alpha_{ \ell} \end{bmatrix}  \\
\end{align*}

Equivalently, they can be rewritten as
\[ x_{ j} = \left\{ \begin{array}{ll}
		  &  x_{ j}' + \frac12 \cdot  \alpha_{  \ell} ~ \mbox{if }  j \leq 2^{\ell-2} \\
		  &  y_{ j-2^{\ell-2}}' - \frac12 \cdot \alpha_{  \ell} ~ \mbox{ if } j > 2^{\ell-2} \\
	\end{array} \right. \]
Similarly, 
\[ y_{j} = \left\{ \begin{array}{ll}
		  &  y_{  j}' +\frac12 \cdot  \alpha_{\ell} ~ \mbox{if }  j \leq 2^{\ell-2} \\
		  &  x_{j-2^{\ell-2}}' -\frac12 \cdot  \alpha_{  \ell} ~ \mbox{ if } j > 2^{\ell-2} \\
	\end{array} \right. \]
where, the $\{ x_{ j}',  y_{  j}' \mid j \in [2^{\ell-2}]\} $ by induction hypothesis satisfy
\begin{align*}
\displaystyle\sum\limits_{j=1}^{2^{\ell-2}}  (x^{'\ell-1}_{j} - y^{'\ell-1}_{j})  &= (\ell-1)!\prod\limits_{r=1}^{\ell-1} \alpha_{ r} \\
\displaystyle\sum\limits_{j=1}^{2^{\ell-2}}  (x_{j}^{'k} - y_{j}^{'k}) &= 0 \text{ for every } k \in \{1,\ldots, \ell-2\}.
\end{align*}

Therefore, for any $k \in \mathbb{N}$, we get that 
\begin{align*}
\displaystyle\sum\limits_{j=1}^{2^{\ell-1}}  (x_{j}^k - y_{j}^k) = 
& \displaystyle\sum\limits_{j=1}^{2^{\ell-2}}  (x_{j}' + \frac12 \cdot \alpha_{\ell})^k - (y_{j}' + \frac12 \cdot \alpha_{\ell})^k \\
&+ \displaystyle\sum\limits_{j= 2^{\ell-2}+1}^{2^{\ell-1}} (y_{j-2^{\ell-2}}' - \frac12 \cdot \alpha_{ \ell})^k - (x_{j-2^{\ell-2}}' - \frac12 \cdot \alpha_{\ell})^k \\
= &\displaystyle\sum\limits_{j=1}^{2^{\ell-2}}  (x_{j}' + \frac12 \cdot \alpha_{\ell})^k - (x_{j}' - \frac12 \cdot \alpha_{\ell})^k \\
& - \displaystyle\sum\limits_{j=1}^{2^{\ell-2}} (y_{j}' + \frac12 \cdot \alpha_{\ell})^k - (y_{j}' - \frac12 \cdot \alpha_{ \ell})^k \\
= &\displaystyle\sum\limits_{j=1}^{2^{\ell-2}} \left( 2 \displaystyle\sum_{\substack{r=0 \\ r \equiv 1 \mmod 2}}^k  \frac{1}{2^r} \cdot \dbinom{k}{r} x_{j}^{'k-r} \alpha_{\ell}^{r} \right) \\
&- \displaystyle\sum\limits_{j=1}^{2^{\ell-2}} \left( 2 \displaystyle\sum_{\substack{r=0 \\ r \equiv 1 \mmod 2}}^k \frac{1}{2^r} \cdot \dbinom{k}{r} y_{j}^{'k-r} \alpha_{\ell}^{r} \right) \\
= & \displaystyle\sum\limits_{j=1}^{2^{\ell-2}}\left( 2 \displaystyle\sum_{\substack{r=0 \\ r \equiv 1 \mmod 2}}^k \frac{1}{2^r} \cdot \dbinom{k}{r} ( x_{ j}^{'k-r} - y_{ j}^{'k-r} )\alpha_{\ell}^{r} \right) 
\end{align*}
 
Observe that, for all $k \leq \ell-1$, $k-r \leq \ell-2$, since $r \equiv 1 \mmod 2$. Therefore, for all $k \leq \ell-1$, from induction hypothesis,  
we have, $( x_{ j}^{'k-r} - y_{ j}^{'k-r} ) = 0$. And, 
\[ \displaystyle\sum\limits_{j=1}^{2^{\ell-1}}  (x_{j}^k - y_{j}^k) = 0.
\] 
For $k = \ell$, 
\begin{align*}
\displaystyle\sum\limits_{j=1}^{2^{\ell-1}}  (x_{j}^\ell - y_{j}^\ell) = 
&\displaystyle\sum\limits_{j=1}^{2^{\ell-2}}\left( 2 \displaystyle\sum_{\substack{r=0 \\ r \equiv 1 \mmod 2}}^{\ell} \frac{1}{2^r} \cdot \dbinom{\ell}{r} ( x_{ j}^{'\ell-r} - y_{j}^{'\ell-r} )\alpha_{ \ell}^{r} \right)  \\
=& 2 \cdot  \frac{1}{2} \cdot\dbinom{\ell}{1}\cdot \alpha_{ \ell}   \displaystyle\sum\limits_{j=1}^{2^{\ell-2}} ( x_{j}^{'\ell-1} - y_{ j}^{'\ell-1} )\\
= &  \ell \cdot (\ell-1)! \cdot \prod\limits_{r=1}^{\ell-1} \alpha_{ r} \cdot \alpha_{\ell}\\
= & \ell! \cdot \prod\limits_{r=1}^{\ell} \alpha_{ r}.
\end{align*}
\end{proof}

Note that Claim~\ref{claim:structural-p1} is independent of $t$ and the choice of the $\alpha$ variables.  
Recall the construction of \atomicsolverproc{t}{i} for any $(t,i) \in [n] \times \{2, \cdots, d\}$. It returns two sets  of auxiliary variables $\{ x_{t,i,j} \mid j \in [2^{i-1}] \}$ and $\{ y_{t,i,j} \mid j \in [2^{i-1}] \}$ which are constructed using matrices $A_i$ and $B_i$. From Claim~\ref{claim:structural-p1}, it then follows that these auxiliary variables satisfy:
\begin{align*}
\displaystyle\sum\limits_{j=1}^{2^{i-1}}  (x^i_{tij} - y^i_{tij})  &= i! \prod\limits_{r=1}^i \alpha_{t, i, r} \\
\displaystyle\sum\limits_{j=1}^{2^{i-1}}  (x_{tij}^k - y_{tij}^k) &= 0 \text{ for every } k \in \{1,\ldots, i-1\}
\end{align*}

Using Proposition~\ref{prop:bound-on-alphas} (\ref{prop:alpha-condition}),  
we get,
\[
\displaystyle\sum\limits_{j=1}^{2^{i-1}}  (x^i_{tij} - y^i_{tij})  = b_t^i - a_t^i + R_{t,i}
\]
\end{proof}

%%%%%%%%%%%%%%%%%%%%%%%%%%%%%%%%%%%%%%%%%%%%%%%%%%%%%%%

\begin{proof}[Proof of Proposition~\ref{prop:sum-xtij-0}]
The proof uses the recursive structure of the matrices $A_i$ and $B_i$. Recall that $\bf{1}^\ell$ denotes a vector of $\ell$ ones, and ${\bf 0}^{\ell}$ denotes a vector of $\ell$ zeros. Note that for any $(t, i) \in [n] \times \{2, \cdots, d\}$, 

\[ \displaystyle\sum\limits_{j = 1}^{2^{i-1}} x_{t,i,j} = \frac12 \cdot 
({\bf 1}^{2^{i-1}})^T \cdot A_i \cdot \begin{bmatrix} \alpha_{t,i,1} & \cdots & \alpha_{t,i,i} \end{bmatrix}^T.\]
Similarly, the sum of all the $\{ y_{t, i,j} \mid j \in [2^{i-1}] \}$ can be written as
\[ \displaystyle\sum\limits_{j = 1}^{2^{i-1}} y_{t,i,j} = \frac12 \cdot 
({\bf 1}^{2^{i-1}})^T \cdot B_i \cdot \begin{bmatrix} \alpha_{t,i,1} & \cdots & \alpha_{t,i,i} \end{bmatrix}^T.\]
We show by induction on $i \geq 2$ that 
\[({\bf 1}^{2^{i-1}})^T \cdot A_i = ({\bf 0}^{i })^T \text{ and } ({\bf 1}^{2^{i-1}})^T \cdot B_i = ({\bf 0}^{i})^T. \]

For the base case, $i=2$, it can be verified that 
\begin{align*}
\begin{bmatrix} 1 & 1 \end{bmatrix} \cdot A_2 =  \begin{bmatrix} 1 & 1 \end{bmatrix} \cdot  \begin{bmatrix} 1 & 1 \\ -1 & -1 \end{bmatrix} &= \begin{bmatrix} 0 & 0 \end{bmatrix} \text{ and }\\
 \begin{bmatrix} 1 & 1 \end{bmatrix} \cdot B_2 = \begin{bmatrix} 1 & 1 \end{bmatrix} \cdot  \begin{bmatrix} 1 & -1 \\ -1 & 1 \end{bmatrix} &= \begin{bmatrix} 0 & 0 \end{bmatrix}
 \end{align*}
 
Let us assume the induction hypothesis for all $i <\ell \leq d$. For $i = \ell$, observe that 
 \begin{align*}
 ({\bf 1}^{2^{\ell-1}})^T \cdot A_i &= 
 \begin{bmatrix} ({\bf 1}^{2^{\ell-2}})^T & ({\bf 1}^{2^{\ell-2}})^T \end{bmatrix} \cdot  \begin{bmatrix} A_{\ell-1} & {\bf 1}^{2^{\ell-2}}\\ B_{i-1} & -{\bf 1}^{2^{\ell-2}} \end{bmatrix} \\
 &= \begin{bmatrix} ({\bf 1}^{2^{\ell-2}})^T \cdot A_{\ell-1} + ({\bf 1}^{2^{\ell-2}})^T \cdot B_{\ell-1} & 0 \end{bmatrix} 
 \end{align*}
By the induction hypothesis, we know that $({\bf 1}^{2^{\ell-2}})^T \cdot A_{\ell-1} + ({\bf 1}^{2^{\ell-2}})^T \cdot B_{\ell-1}   = ({\bf 0}^{\ell -1})^T $, 
Therefore, 
\[  ({\bf 1}^{2^{\ell-1}})^T\cdot A_\ell = \begin{bmatrix} {\bf 0}^{\ell -1} &0 \end{bmatrix}
 \]
 Similarly,  
 \[  ({\bf 1}^{2^{\ell-1}})^T\cdot B_\ell = \begin{bmatrix} ({\bf 1}^{2^{\ell-2}})^T \cdot B_{\ell-1} + ({\bf 1}^{2^{\ell-2}})^T \cdot A_{\ell-1} & 0 \end{bmatrix}  = \begin{bmatrix} {\bf 0}^{\ell -1} &0 \end{bmatrix}
 \]

\end{proof}

%%%%%%%%%%%%%%%%%%%%%%%%%%%%%%%%%%%%%%%%%%%%%%%%%%%%%%%
We now show certain bounds on the magnitudes of $\alpha_{t,i,r}$ and hence on the auxiliary variables $x_{t,i,j}, y_{t,i,j}$. For any two tuples of same dimensions, we say that $(p_1, p_2, \cdots, p_d) > (q_1, q_2, \cdots, q_d)$ if there is an $i \in [d]$ such that $p_i > q_i$ and $p_j = q_j$ for all $j < i$. In order to prove \Cref{prop:bound-on-alphas}, we will need the following claim.

%%%%%%%%%%%%%%%%%%%%%%%%%%%%%%%%%%%%%%%%%%%%%%%%%%%%%%%
\begin{claim}\label{claim:diff-x-y}
For any $t \in [n]$, $i \in \{2, \dots, d\}$ and any $j \in [2^{i-1}]$, 
\[ \lvert x_{t,i,j} - y_{t,i,j} \rvert =  \alpha_{t,i, 2} \] 
\end{claim}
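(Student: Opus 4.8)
The plan is to prove the claim by induction on $i$, reusing the recursive decomposition of the variables $x_{t,i,j},y_{t,i,j}$ that was established in the proof of Claim~\ref{claim:structural-p1}. As there, it is cleanest to prove the slightly more general statement: for every $i\in\{2,\dots,d\}$ and arbitrary rationals $\alpha_1,\dots,\alpha_i$, if $(x_1,\dots,x_{2^{i-1}})^T=\tfrac12 A_i(\alpha_1,\dots,\alpha_i)^T$ and $(y_1,\dots,y_{2^{i-1}})^T=\tfrac12 B_i(\alpha_1,\dots,\alpha_i)^T$, then $\lvert x_j-y_j\rvert=\lvert\alpha_2\rvert$ for every $j\in[2^{i-1}]$. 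Specializing $\alpha_r=\alpha_{t,i,r}$ and observing that for $i\ge 3$ one has $\alpha_{t,i,2}=10^{g(t,i,2)}>0$ recovers the claim exactly as stated; for $i=2$ it gives $\lvert x_{t,2,j}-y_{t,2,j}\rvert=\lvert\alpha_{t,2,2}\rvert$, which is all one can say since $\alpha_{t,2,2}$ need not be nonnegative.

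For the base case $i=2$, the explicit forms of $A_2,B_2$ give $x_1=\tfrac{\alpha_1+\alpha_2}{2}$, $y_1=\tfrac{\alpha_1-\alpha_2}{2}$, $x_2=-\tfrac{\alpha_1+\alpha_2}{2}$, $y_2=-\tfrac{\alpha_1-\alpha_2}{2}$, so $x_1-y_1=\alpha_2$ and $x_2-y_2=-\alpha_2$, and in both cases $\lvert x_j-y_j\rvert=\lvert\alpha_2\rvert$. For the inductive step, write $\ell=i$ and let $\{x_j',y_j'\}_{j\in[2^{\ell-2}]}$ be the variables produced from $A_{\ell-1},B_{\ell-1}$ and the prefix vector $(\alpha_1,\dots,\alpha_{\ell-1})$. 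From the decomposition recorded in the proof of Claim~\ref{claim:structural-p1}, $x_j=x_j'+\tfrac12\alpha_\ell$ and $y_j=y_j'+\tfrac12\alpha_\ell$ for $j\le 2^{\ell-2}$, while $x_j=y_{j-2^{\ell-2}}'-\tfrac12\alpha_\ell$ and $y_j=x_{j-2^{\ell-2}}'-\tfrac12\alpha_\ell$ for $j>2^{\ell-2}$. In the first range the $\tfrac12\alpha_\ell$ terms cancel in the difference, giving $x_j-y_j=x_j'-y_j'$; in the second range they cancel as well, giving $x_j-y_j=-(x_{j-2^{\ell-2}}'-y_{j-2^{\ell-2}}')$. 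Hence $\lvert x_j-y_j\rvert=\lvert x_{j'}'-y_{j'}'\rvert$ for the appropriate index $j'$, and since $\ell-1\ge 2$ the induction hypothesis at level $\ell-1$, applied to the $\alpha$-vector $(\alpha_1,\dots,\alpha_{\ell-1})$ whose second coordinate is still $\alpha_2$, gives $\lvert x_{j'}'-y_{j'}'\rvert=\lvert\alpha_2\rvert$.

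The argument is essentially pure bookkeeping, so there is no real obstacle. The one point that must be handled with care is to run the induction with a \emph{generic} $\alpha$-vector rather than with the specific one output by \atomicsolverproc{t}{i}: the sub-variables $x_j',y_j'$ appearing in the recursion are built from the prefix $(\alpha_{t,i,1},\dots,\alpha_{t,i,i-1})$ of the current $\alpha$-vector, not from the (different) vector that \atomicsolverproc{t}{i-1} would produce. One then simply checks that the ``second coordinate'' $\alpha_2=\alpha_{t,i,2}$ is preserved all the way down the recursion, which holds precisely because each recursive step drops only the last coordinate and $i-1\ge 2$ throughout the descent.
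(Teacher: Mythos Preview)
Your proof is correct and follows essentially the same inductive strategy as the paper. The only cosmetic difference is that the paper carries out the induction at the matrix level, computing $A_\ell-B_\ell$ recursively and showing it reduces to a matrix whose only nonzero column is the second one (with entries $\pm 2$), whereas you work coordinate-wise via the decomposition already recorded in the proof of Claim~\ref{claim:structural-p1} and observe that the $\tfrac12\alpha_\ell$ shifts cancel in each difference $x_j-y_j$. Your remark that the induction must be run on a generic $\alpha$-vector (rather than the specific output of \atomicsolverproc{t}{i-1}) is exactly right and worth making explicit; and your observation that for $i=2$ one only obtains $\lvert x_{t,2,j}-y_{t,2,j}\rvert=\lvert\alpha_{t,2,2}\rvert$, since $\alpha_{t,2,2}=R_{t,2}/(2\alpha_{t,2,1})$ need not be positive, is a fair reading of the statement---the paper's own base case also yields $\pm\alpha_{t,2,2}$, and everywhere the claim is used (e.g., in bounding $\lvert y_{t,u,v}-x_{t,u,v}\rvert$ within the proof of Proposition~\ref{prop:bound-on-alphas}) only the absolute value matters.
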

\begin{proof}[Proof of Claim~\ref{claim:diff-x-y}]
We use the recursive matrix definitions to show that for every $t, i, j \in [n] \times \{2, \cdots, d\} \times [2^{i-1}]$,
\[ \lvert x_{t,i,j} - y_{t,i,j} \rvert = \alpha_{t, i, 2}\]
We use induction on $i$. 

For the base case, $i = 2$, 
\begin{align*}
x_{t,2,1} - y_{t,2,1}  &= \frac{\alpha_{t,2,1}}{2} + \frac{\alpha_{t,2,2}}{2} -\frac{\alpha_{t,2,1}}{2} + \frac{\alpha_{t,2,2}}{2}  = \alpha_{t,2,2} \\
x_{t,2,2} - y_{t,2,2}  &= -\frac{\alpha_{t,2,1}}{2} - \frac{\alpha_{t,2,2}}{2}  + \frac{\alpha_{t,2,1}}{2} - \frac{\alpha_{t,2,2}}{2}  = - \alpha_{t,2,2} 
\end{align*}

Let us assume the induction hypothesis for all $i < \ell \leq d$. 

From the definition of the \atomicsolver we know that, 
\begin{align*}
\begin{bmatrix} x_{t, \ell,1} \\ x_{t, \ell, 2}\\ \vdots \\ x_{t, \ell, 2^{\ell-1}} \end{bmatrix} = \frac12 \cdot A_{\ell} \cdot \begin{bmatrix} \alpha_{t, \ell, 1} \\  \alpha_{t, \ell, 2} \\ \vdots \\ \alpha_{t, \ell, \ell} \end{bmatrix}, 
\text{ and }
\begin{bmatrix} y_{t, \ell,1} \\ y_{t, \ell, 2}\\ \vdots \\ y_{t, \ell, 2^{\ell-1}} \end{bmatrix} = \frac12 \cdot B_{\ell} \cdot \begin{bmatrix} \alpha_{t, \ell, 1} \\  \alpha_{t, \ell, 2} \\ \vdots \\ \alpha_{t, \ell, \ell}\end{bmatrix}. 
\end{align*}

Therefore, 
\[
\begin{bmatrix} x_{t, \ell,1}- y_{t, \ell, 1} \\ x_{t, \ell, 2} - y_{t, \ell, 2} \\ \vdots \\ x_{t, \ell, 2^{\ell-1}}-y_{t, \ell, 2^{\ell-1}} \end{bmatrix} = \frac12 \cdot (A_{\ell} - B_{\ell}) \cdot \begin{bmatrix} \alpha_{t, \ell, 1} \\  \alpha_{t, \ell, 2} \\ \vdots \\ \alpha_{t, \ell, \ell}\end{bmatrix}.
\]

From the recursive definition of the matrices $A_{\ell}$ and $B_{\ell}$, we get that 
$A_{\ell} - B_{\ell} = \begin{bmatrix} A_{\ell-1} - B_{\ell -1} & {\bf{0}}^{2^{\ell-2}} \\ B_{\ell-1} - A_{\ell -1} & {\bf 0}^{2^{\ell-2}} \end{bmatrix}$. 
From the induction hypothesis, we get,  
\[A_{\ell} - B_{\ell} = \begin{bmatrix} A_2 - B_2 & 0 & \cdots & 0 \\ B_2 - A_2 & 0 & \cdots & 0\\ &&\vdots& \\ A_2 - B_2 & 0 & \cdots & 0 \\ B_2 - A_2 & 0 & \cdots & 0 \end{bmatrix} = 
\begin{bmatrix} 0& 2 & 0 & \cdots & 0 \\ 0& -2 & 0 & \cdots & 0 \\ &&&\vdots& \\ 0& 2 & 0 & \cdots & 0 \\ 0& -2 & 0 & \cdots & 0 \end{bmatrix}
\]
and therefore for every $j \in [2^{\ell-1}]$, 
\[ \lvert x_{t, \ell,j}- y_{t, \ell, j} \rvert = \alpha_{t, \ell, 2} .\]
\end{proof}

We are now ready to prove \Cref{prop:bound-on-alphas}.
\begin{proof}[Proof of Proposition~\ref{prop:bound-on-alphas}]
\item
\paragraph{(\ref{prop:alpha-condition})} Follows from the definition of $\alpha_{t, i, i}$ in Algorithm~\ref{algo:gadget}. 
%%%%%%%%%%%%%%%%%%%%%%%%%%%%%%%%%%%%%%%%%%%%%%%%%%%%%%%
\item
\paragraph{(\ref{prop:bound-on-alpha-ti1})} $\alpha_{t,i,1} = 10^{f(t,i)} = 10^{(i-1)!\nu_t}$, where $\nu_t$ is the $t^{th}$ prime greater than $n^4$. Since $\nu_t$ is increasing in $t$, and for a fixed $t$, $\alpha_{t,i,1}$ is increasing in $i$,  $\max\limits_{t, i} \{ \alpha_{t,i,1} \} = \alpha_{n, d, 1}$ and 
$\min\limits_{t, i} \{ \alpha_{t,i,1} \} = \alpha_{1, 2, 1}$. We had noted earlier that from Prime Number Theorem, the $n^{th}$ prime greater than $n^4$ is at most $n^5$. Therefore, 
\[ 10^{n^4} < 10^{\nu_1} =  \alpha_{1, 2, 1}  \leq  \alpha_{t,i,1} \leq  \alpha_{n, d,1} = 10^{(d-1)! \nu_n} < 10^{d! n^5}. \]
%%%%%%%%%%%%%%%%%%%%%%%%%%%%%%%%%%%%%%%%%%%%%%%%%%%%%%%
\item
\paragraph{(\ref{prop:bound-on-alpha-tir})} From the definitions in Algorithm~\ref{algo:gadget}, for every $1 < r < i-1$, $\alpha_{t,i,r} = 10^{g(t,i,r)}$.
Note that $\max\limits_{t, i, r} \{ g(t,i,r)\} = g(n, d, d-1) \leq nd^2$ and therefore, 
\[ \alpha_{t,i,r} \leq \alpha_{n, d, d-1} = 10^{g(n,d,d-1)} \leq 
10^{nd^2}. \]
%%%%%%%%%%%%%%%%%%%%%%%%%%%%%%%%%%%%%%%%%%%%%%%%%%%%%%%
\item
\paragraph{(\ref{prop:bound-on-alpha-tii1} and \ref{prop:bound-on-alpha-tii})} 
Fix an arbitrary $t \in [n]$. We prove by induction on $i \in \{2, \cdots, d\}$ that, 
\[\lvert \alpha_{t,i,i} \rvert < 2  \text{ and } \displaystyle\sum\limits_{r=2}^{i}\lvert \alpha_{t,i,r} \rvert \leq 10^{\nu - nd}. \]

For the base case, $i=2$, $\alpha_{t,2,2}= \frac{ b_t^2 - a_t^2 }{2 \alpha_{t,2,1}}$. Recall from Definition~\ref{eqn:def-vars} that the variable part of $a_t$ and $b_t$ is the same. Therefore $\lvert b_t -a_t \rvert \leq 10^{m+\nu}$. 
From Fact~\ref{claim:bound-at}, we know that $\lvert a_t \rvert $ and $\lvert b_t \rvert$ are at most $10^{m+\nu+n+1} = 10^M$, so we get, 
\[\lvert b_t^2 - a_t^2 \rvert = \lvert (b_t-a_t)(b_t+a_t)\rvert  \leq  10^{m+\nu} \cdot 2 \max\{ a_t, b_t \} < 10^{m+\nu} \cdot 2\cdot 10^M. \]
Since $m+\nu < M$, $\lvert \alpha_{t,2,2} \rvert < \frac{ 10^{m+\nu} \cdot 2 \cdot 10^M }{2 \cdot 10^{f(t, 2)} } < 10^{2M-f(t,2)}$. By definitions in Algorithm~\ref{algo:gadget}, $f(t,2) = \nu_t $ and $\nu_t$ is a prime larger than $n^4$. Also, $M = O(n^3)$ and  $f(t,2)> 2M$ therefore, it follows that, $\lvert \alpha_{t,2,2} \rvert < 1 <10^{\nu-nd} $ and the claim holds for $i = 2$.

Let us assume the induction hypothesis for all $i < \ell \leq d$, and we now prove the claim for $i = \ell$. We need to show that 
\[ \lvert \alpha_{t,\ell,\ell} \rvert < 2  \text{ and }  \displaystyle\sum\limits_{r=2}^{\ell} \lvert \alpha_{t,\ell,r} \rvert \leq 10^{\nu-nd} \]
We first bound the magnitude of $\alpha_{t, \ell, \ell}$ for any $t\in [n]$. Recall from the definitions in Algorithm~\ref{algo:gadget}, 
\[ \lvert \alpha_{t, \ell, \ell} \rvert = \frac{ \lvert R_{t,\ell} \rvert }{i! \cdot \prod\limits_{r\in[\ell-1]} \alpha_{t,\ell,r}}, \text{ where, } R_{t, \ell} = b_t^{\ell} - a_t^{\ell} +  \displaystyle\sum\limits_{u=2}^{\ell-1} \displaystyle\sum\limits_{v=1}^{2^{u-1}} y_{t,u, v}^{\ell} - x_{t,u, v}^{\ell} .  \]
We will bound each individual term in the definition of $\alpha_{t, \ell, \ell}$ separately. 
\item The term $\lvert b_t^{\ell} - a_t^{\ell}\rvert$ in $R_{t, \ell}$ can be factorized as $\lvert b_t^{\ell} - a_t^{\ell} \rvert = \lvert (b_t - a_t)(\displaystyle\sum\limits_{k=0}^{\ell-1}b_t^ka_t^{\ell-1-k})\rvert$. We had seen earlier that $\lvert (b_t - a_t)\rvert < 10^{m+\nu} < 10^M$ and from Fact~\ref{claim:bound-at}, $\max\{a_t, b_t\} < 10^M$. Using these observations, we get
\begin{equation}
  \label{bound1}
  \begin{aligned}
\lvert b_t^{\ell} - a_t^{\ell} \rvert &= \lvert (b_t - a_t)(\displaystyle\sum\limits_{k=0}^{\ell-1}b_t^ka_t^{\ell-1-k}) \rvert \\
& < 10^M \cdot \ell \cdot \max\{a_t^{\ell-1}, b_t^{\ell-1}\} \\
& \leq 10^M \cdot \ell \cdot 10^{M(\ell-1)} = \ell \cdot 10^{M\ell}.
  \end{aligned}
\end{equation}

\item Using the definitions of $\alpha_{t,\ell,r}$, the denominator in the expression for $\alpha_{t,\ell,\ell}$ can be written as 
\begin{equation}
\label{bound2}
\begin{aligned}
\ell! \displaystyle\prod\limits_{r=1}^{\ell-1} \alpha_{t, \ell, r} &= \ell! \cdot 10^{f(t, \ell) + \sum\limits_{r=2}^{\ell-1} g(t, \ell, r)}\\
& \geq \ell! \cdot 10^{ f(t, \ell) + g(t, \ell, 2)}
\end{aligned}
\end{equation}

\item Now to bound the magnitude of $\left| \displaystyle\sum\limits_{u=2}^{\ell-1} \displaystyle\sum\limits_{v=1}^{2^{u-1}} y_{t,u, v}^{\ell} - x_{t,u, v}^{\ell} \right| $,  
we have
\begin{align*}
\left| \displaystyle\sum\limits_{u=2}^{\ell-1} \displaystyle\sum\limits_{v=1}^{2^{u-1}} y_{t,u, v}^{\ell} - x_{t,u, v}^{\ell} \right| 
&\leq \displaystyle\sum\limits_{u=2}^{\ell-1} \displaystyle\sum\limits_{v=1}^{2^{u-1}} \lvert y_{t,u, v}^{\ell} - x_{t,u, v}^{\ell} \rvert \\
&= \displaystyle\sum\limits_{u=2}^{\ell-1} \displaystyle\sum\limits_{v=1}^{2^{u-1}}\lvert (y_{t,u, v} - x_{t,u, v})(\displaystyle\sum\limits_{k=0}^{\ell-1} y_{t,u, v}^{k} x_{t,u, v}^{\ell-1-k}) \rvert \\
&\leq \displaystyle\sum\limits_{u=2}^{\ell-1} \displaystyle\sum\limits_{v=1}^{2^{u-1}}\lvert (y_{t,u, v} - x_{t,u, v}) \rvert \cdot \ell \cdot \max\{\lvert x_{t, u, v}\rvert ^{\ell-1}, \lvert y_{t, u, v}\rvert ^{\ell-1} \}
\end{align*}
Using Claim~\ref{claim:diff-x-y}, we know that for any $(t, u, v) \in [n] \times \{2, \cdots, \ell-1\} \times [2^{u-1}]$,
\[\lvert x_{t,u, v} - y_{t,u, v} \rvert = \alpha_{t,u,2} =10^{g(t, u, 2)}.\]
Also, from definition of the auxiliary variables in Algorithm~\ref{algo:gadget}, each $x_{t, u, v}$ and $y_{t, u, v}$ for any $t \in [n]$, is a $(\pm \frac12)$-linear combinations of $\{\alpha_{t,u,r} \mid r \in [u] \}$. Therefore,  
\[ \max\{ \lvert x_{t,u,v} \rvert , \lvert y_{t,u,v} \rvert \} \leq \frac12 \displaystyle\sum\limits_{r=1}^{u} \lvert \alpha_{t,u,r} \rvert.\]
Since $u < \ell$, using the induction hypothesis, we know that $\displaystyle\sum\limits_{r=2}^{u} \lvert \alpha_{t,u,r} \rvert < 10^{\nu-nd}$. So, we get
\[ \max\{ \lvert x_{t,u,v} \rvert , \lvert y_{t,u,v} \rvert \} \leq \lvert \alpha_{t, u, 1} \rvert  + \displaystyle\sum\limits_{r=2}^{u} \lvert \alpha_{t,u,r} \rvert  < \frac12 (10^{f(t,u)} + 10^{\nu-nd} )< 10^{f(t,u)}.\] 
From these observations, we get
\begin{align*}
\left| \displaystyle\sum\limits_{u=2}^{\ell-1} \displaystyle\sum\limits_{v=1}^{2^{u-1}} y_{t,u, v}^{\ell} - x_{t,u, v}^{\ell} \right| 
&\leq \displaystyle\sum\limits_{u=2}^{\ell-1} \displaystyle\sum\limits_{v=1}^{2^{u-1}}10^{g(t, u, 2)} \cdot 
\ell \cdot (10^{f(t,u)})^{\ell-1}
\end{align*}
Note that $\max\limits_{u}\{g(t, u, 2) \} =g(t, \ell-1, 2) $
and for a fixed $t, f(t,i)$ is increasing in $i$, therefore, $f(t, u) \leq f(t, \ell-1)$ for all $u \leq \ell-1$.  Therefore, 
\begin{equation}
\label{bound3}
\begin{aligned}
\left| \displaystyle\sum\limits_{u=2}^{\ell-1} \displaystyle\sum\limits_{v=1}^{2^{u-1}} y_{t,u, v}^{\ell} - x_{t,u, v}^{\ell} \right| 
 & \leq \ell \cdot 2^{\ell} \cdot 10^{g(t, \ell-1, 2)} \cdot 10^{(\ell-1)f(t,\ell-1)}
\end{aligned}
\end{equation}
Combining Equations~\ref{bound1}, \ref{bound2}, \ref{bound3}, we get an upper bound on the magnitude $\alpha_{t, \ell, \ell}$ as
\begin{align*}
\lvert \alpha_{t, \ell, \ell} \rvert&= \frac{\lvert R_{t,\ell} \rvert }{\ell! \displaystyle\prod\limits_{r=1}^{\ell-1} \alpha_{t, \ell, r}} \\
& \leq \frac{\lvert b_t^{\ell} - a_t^{\ell} \rvert}{\ell! \displaystyle\prod\limits_{r=1}^{\ell-1} \alpha_{t, \ell, r}} + 
\frac{\left| \displaystyle\sum\limits_{u=2}^{\ell-1} \displaystyle\sum\limits_{v=1}^{2^{u-1}} y_{t,u, v}^{\ell} - x_{t,u, v}^{\ell} \right|  }{\ell! \displaystyle\prod\limits_{r=1}^{\ell-1} \alpha_{t, \ell, r}}\\
&\leq  \frac{  \ell \cdot 10^{M \ell}} {\ell! \cdot 10^{ f(t, \ell) + g(t, \ell, 2)}} + \frac{\ell \cdot 2^{\ell} \cdot 10^{g(t, \ell-1, 2) +(\ell-1)f(t,\ell-1)}} {\ell! \cdot 10^{ f(t, \ell) + g(t, \ell, 2)}} 
\end{align*}
We now show that each individual term is at most $1$, and therefore, $\lvert \alpha_{t, \ell, \ell} \rvert < 2$.

\item The first term can be simplified by plugging in the definition of $f(t, \ell)$ and using the fact that $g(t,\ell, r) > 2$. 
\[ \frac{  \ell \cdot 10^{M \ell}} {\ell! \cdot 10^{ f(t, \ell) + g(t, \ell, 2)}}  < \frac{1}{(\ell-1)!} \cdot 10^{M \cdot \ell -(\ell-1)! \nu_t - 2}\]
Since $\ell \cdot M < (\ell-1)! \nu_t$ , it follows that 
\[ \frac{  \ell \cdot 10^{M \ell}} {\ell! \cdot 10^{ f(t, \ell) + g(t, \ell, 2)}}  < 1. \]

\item For the second term, note that $f(t, \ell) = (\ell-1)!\nu_t = (\ell-1) \cdot (\ell-2)! \nu_t = (\ell-1)f(t, \ell-1)$ and 
$g(t, \ell, 2) - g(t, \ell-1, 2) = 2\ell - 2 \geq 2$ for $\ell \geq 2$. Also, for $\ell \geq 2$, we have $\frac{2^\ell}{(\ell-1)!} \leq 4$. Therefore, 
\begin{align*} 
\frac{\ell \cdot 2^{\ell} \cdot 10^{g(t, \ell-1, 2) +(\ell-1)f(t,\ell-1)}} {\ell! \cdot 10^{ f(t, \ell) + g(t, \ell, 2)}} 
&= \frac{2^{\ell}}{(\ell-1)!} \cdot 10^{g(t, \ell-1, 2) - g(t, \ell, 2)} \cdot 10^{(\ell-1)f(t,\ell-1) - f(t, \ell) } \\
&\leq 4 \cdot 10^{-1} <1
\end{align*}

Now that we have established $\lvert \alpha_{t, \ell, \ell} \rvert < 2$, we show that  $\displaystyle\sum\limits_{r=2}^{\ell}\lvert \alpha_{t,\ell,r} \rvert < 10^{\nu-nd}$. We split this summation into two terms as 
\begin{align*}
\displaystyle\sum\limits_{r=2}^{\ell}\lvert \alpha_{t,\ell,r} \rvert = \displaystyle\sum\limits_{r=2}^{\ell-1} \lvert \alpha_{t,\ell,r} \rvert +  \lvert \alpha_{t,\ell,\ell} \rvert .
\end{align*}

\item From the definition of $\alpha_{t,\ell,r}$ for $1<r<\ell$, we have 
\[\displaystyle\sum\limits_{r=2}^{\ell-1} \lvert \alpha_{t,\ell,r} \rvert = \displaystyle\sum\limits_{r=2}^{\ell-1} 10^{g(t,\ell,r)} < 10^{g(t,\ell,\ell-1)+1}.\]
Since $g(t, i, r)$ is increasing in $t, i, r$,  $g(t,\ell,\ell-1)+1 \leq g(n ,d, d-1) +1 = nd^2$. Recall that $\nu = n^2$, and therefore, for any $d = o(\sqrt n)$, we have, 
\[ 10^{g(t,\ell,\ell-1)+1} \leq 10^{nd^2} \leq 10^{\nu- nd-1}.\]
Therefore, it follows that 
\[ \displaystyle\sum\limits_{r=2}^{\ell} \lvert \alpha_{t,\ell,r} \rvert \leq \displaystyle\sum\limits_{r=2}^{\ell-1} \lvert \alpha_{t,\ell,r} \rvert  + \vert \alpha_{t,\ell,\ell} \rvert <10^{\nu-nd-1} + 2 < 10^{\nu-nd}. \]
\end{proof}

%%%%%%%%%%%%%%%%%%%%%%%%%%%%%%%%%%%%%%%%%%%%%%%%%%%%%%%

\begin{proof}[Proof of Proposition~\ref{prop:bound-on-xtij}]

From the definition of $\alpha_{t, i, r}$ in Algorithm~\ref{algo:gadget}, we know that each auxiliary variable is a $(\pm \frac12)$-linear combination of $\{ \alpha_{t, i, r} \mid r \in [i] \}$. i.e
\[ x_{t,i,j} = \displaystyle\sum\limits_{r=1}^i u_r \alpha_{t,i,r}  ~\text{ for some } u_r \in \{ \pm \frac12 \}. \]
Therefore, 
\begin{align*}
\frac12 \cdot \lvert \alpha_{t, i, 1} \rvert -  \frac12 \cdot  \displaystyle\sum\limits_{r=2}^i \lvert \alpha_{t,i,r}  \rvert  
\leq & \left| \displaystyle\sum\limits_{r=1}^i u_r \alpha_{t,i,r} \right|
\leq \frac12 \cdot \lvert \alpha_{t, i, 1} \rvert +  \frac12 \cdot  \displaystyle\sum\limits_{r=2}^i \lvert \alpha_{t,i,r}  \rvert
\end{align*}
Using Proposition~\ref{prop:bound-on-alphas} (\ref{prop:bound-on-alpha-tii}), we know that $\displaystyle\sum\limits_{r=2}^i \lvert \alpha_{t,i,r} \rvert \leq 10^{\nu-nd}$ and from definitions, $\alpha_{t, i, 1} = 10^{(i-1)! \nu_t}$. Therefore, 
\[
\frac12 \cdot (10^{(i-1)! \nu_t} - 10^{\nu-nd} ) \leq \lvert x_{t,i,j} \rvert \leq \frac12 \cdot (10^{(i-1)! \nu_t} + 10^{\nu-nd})
\]
\end{proof}

%%%%%%%%%%%%%%%%%%%%%%%%%%%%%%%%%%%%%%%%%%%%%%%%%%%%%%%
%%%%%%%%%%%%%%%%%%%%%%%%%%%%%%%%%%%%%%%%%%%%%%%%%%%%%%%

\begin{proof}[Proof of Proposition~\ref{prop:all-distinct}]

Let $t_1, t_2 \in [n], i_1 , i_2 \in \{2, \cdots, d\}, j_1 \in [2^{i_1}-1]$ and $j_2 \in [2^{i_2}-1]$.
If $(t_1, i_1, j_1) = (t_2, i_2, j_2)$, then from Claim~\ref{claim:diff-x-y}, we know
$\lvert x_{t_1, i_1, j_1} - y_{t_1, i_1, j_1} \rvert = \alpha_{t,i,2} \neq 0$ and it follows that $x_{t_1, i_1, j_1} \neq y_{t_1, i_1, j_1}$. Now we show that if $(t_1, i_1, j_1) \neq (t_2, i_2, j_2)$, then $x_{t_1, i_1, j_1} \neq x_{t_2, i_2, j_2}$. 
The proof holds if either or both the $x_{t, i, j}$'s replaced with $y_{t, i, j}$. Let,   
\[x_{t_1, i_1, j_1} = u_1 \cdot 10^{(i_1-1)! \nu_{t_1}} + \displaystyle\sum\limits_{r=2}^{i_1}u_r \cdot \alpha_{t_1,i_1,r} ~\text{ for some } u_r \in \{ \pm \frac12 \} \]
and, 
\[x_{t_2, i_2, j_2}  = v_1 \cdot 10^{(i_2-1)! \nu_{t_2}} + \displaystyle\sum\limits_{r=2}^{i_2}v_r \cdot \alpha_{t_2,i_2,r} ~\text{ for some } v_r \in \{\pm \frac12 \}. \] 
If $x_{t_1, i_1, j_1} = x_{t_2, i_2, j_2}$, then on reordering the terms we get, 
\[\left| u_1 \cdot 10^{(i_1-1)! \nu_{t_1}} - v_1 \cdot 10^{(i_2-1)! \nu_{t_2}}\right| = \left| \displaystyle\sum\limits_{r=2}^{i_2}v_r \cdot \alpha_{t_2,i_2,r}  - \displaystyle\sum\limits_{r=2}^{i_1}u_r \cdot \alpha_{t_1,i_1,r} \right| \] 
Note that if $\lvert u_1 \cdot 10^{(i_1-1)! \nu_{t_1}} - v_1 \cdot 10^{(i_2-1)! \nu_{t_2}}\rvert $ is non-zero, then using the fact that $\nu_{t_1}$ and $\nu_{t_2}$ are prime integers larger than $n^4$ we have, 
 \[ \left| u_1 \cdot 10^{(i_1-1)! \nu_{t_1}} - v_1 \cdot 10^{(i_2-1)! \nu_{t_2}} \right|  \geq 10^{n^4} \]
But from Proposition~\ref{prop:bound-on-alphas}, (\ref{prop:bound-on-alpha-tii}),
\[ \left| \displaystyle\sum\limits_{r=2}^{i_2}v_r \cdot \alpha_{t_2,i_2,r}  - \displaystyle\sum\limits_{r=2}^{i_1}u_r \cdot \alpha_{t_1,i_1,r} \right| \leq  
\frac12 \displaystyle\sum\limits_{r=2}^{i_2} \lvert \alpha_{t_2,i_2,r} \rvert  + \frac12 \displaystyle\sum\limits_{r=2}^{i_1} \lvert \alpha_{t_1,i_1,r} \rvert
 \leq 10^{\nu - nd}\]
 which is a contradiction. Therefore,  $t_1 = t_2, i_1 = i_2 $ and $ u_1 = v_1$.

Let us assume $t_1 = t_2 = t, i_1 = i_2 = i$ and $j_1 > j_2$. If $x_{t, i, j_1} = x_{t, i, j_2}$, then, 
\begin{align*}
 \displaystyle\sum\limits_{r=2}^{i} (v_r - u_r) \cdot \alpha_{t,i,r} = 0
\end{align*} 
We know that $(v_r - u_r) \in \{0, \pm1 \}$, so there exists a $\{ 0, \pm 1\}$- linear combination of $\alpha_{t,i,r}$ equal to $0$. 
If $u_r = v_r$ for every $r \in \{2, \cdots, i\}$, then $j_1 = j_2$ since 
each auxiliary variable is a distinct linear combination of the $\alpha_{t, i, r}$'s. So, there exists at least one $r \in \{ 2, \cdots, i\}$ such that $u_r \neq v_r$. Let $r^*$ be the largest such $r$. We know that 
\[
0 = \left|  \displaystyle\sum\limits_{r=2}^{i} (v_r - u_r) \cdot \alpha_{t,i,r} \right| \geq \left| \lvert \alpha_{t, i, r^*} \rvert - \lvert \displaystyle\sum\limits_{r=2}^{r^*-1} (v_r - u_r) \cdot \alpha_{t,i,r}\rvert \right|
\]
But each $\alpha_{t,i,r} = 10^{g(t, i, r)}$ for $r \in \{2, \cdots, i-1\}$ is a distinct power of $10$ and $\lvert \alpha_{t,i,i} \rvert < 2$. So, $ \left|  \alpha_{t, i, r^*} \right| - \left| \displaystyle\sum\limits_{r=2}^{r^*-1} (v_r - u_r) \cdot \alpha_{t,i,r}\right| \neq  0 $, which is a contradiction.  Therefore, $j_1 = j_2$.  
\end{proof}

%%%%%%%%%%%%%%%%%%%%%%%%%%%%%%%%%%%%%%%%%%%%%%%%%%%%%%%
 %contains input to lemmas, helper_claims
\section{Existence of (Inhomogeneous) PTE Solutions over General Finite Fields}
\label{subsec:existence}

Recall that a solution to a PTE system of size $s$ and degree $d$ satisfies 
\begin{eqnarray*}
x_1+x_2+\dots+x_s &=& y_1+y_2+\dots+ y_s\\
x_1^2+x_2^2+\dots+x_s^2 &=& y_1^2+y_2^2+\dots+ y_s^2\\
&\dots&\\
x_1^d+x_2^d+\dots+x_s^d &=& y_1^d+y_2^d+\dots+ y_s^d.
\end{eqnarray*}

We will show that such a system always has a solution over a field $\F=\F_{p^{\ell}}$,  for $d<|\F|^{1/2-\delta}$, for $\delta>0$. In fact, the proof will also hold for inhomogeneous PTE systems such as (\ref{eq:mom_match}).

\begin{theorem}\label{thm:existence} Let $\F$ be a finite field, and let $r_1, r_2, \ldots, r_d \in \F$. Let $d$ be a positive integer such that $d \le | \F|^{1/2-\delta}$. Then, there exists a solution in $\F$ to the system  $\sum_{i=1}^{s} x_i^j -\sum_{i=1}^{s} y_i^j=r_j$, for $j\in [d]$, with $s = 3d/\delta$.

Moreover, if $|\F|$ is a sufficiently large function of $\delta$, then we can ensure that the $x_i$'s and the $y_i$'s are all distinct.

\end{theorem}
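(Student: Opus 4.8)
The plan is to count the number $N$ of solutions $(x_1,\dots,x_s,y_1,\dots,y_s)\in\F^{2s}$ to the system and show $N>0$ by an exponential–sum argument. Write $q=|\F|$, fix a nontrivial additive character $\psi$ of $\F$, and for $\vec c=(c_1,\dots,c_d)\in\F^d$ put $f_{\vec c}(X)=\sum_{j=1}^{d}c_jX^j\in\F[X]$ and $S(\vec c)=\sum_{x\in\F}\psi(f_{\vec c}(x))$. Detecting each of the $d$ constraints by additive orthogonality, so that $\mathbf 1[a=0]=\tfrac1q\sum_{c\in\F}\psi(ca)$, one obtains
\[
N=\frac1{q^{d}}\sum_{\vec c\in\F^{d}}\psi\!\Big(-\sum_{j=1}^{d}c_jr_j\Big)\,S(\vec c)^{s}\,\overline{S(\vec c)}^{\,s}=\frac1{q^{d}}\sum_{\vec c\in\F^{d}}\psi\!\Big(-\sum_{j}c_jr_j\Big)\,|S(\vec c)|^{2s},
\]
using $\sum_{\vec x}\psi(\sum_j c_j\sum_i x_i^j)=S(\vec c)^s$ and $\sum_{\vec y}\psi(-\sum_j c_j\sum_i y_i^j)=\overline{S(\vec c)}^{\,s}$. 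The term $\vec c=\vec0$ contributes exactly $q^{2s-d}$ (the ``main term''), since $S(\vec0)=q$. For $\vec c\neq\vec0$ the polynomial $f_{\vec c}$ is nonconstant of degree $e\in\{1,\dots,d\}$, and — assuming, as we may (see the last paragraph), that $\mathrm{char}(\F)>d$ — we have $\gcd(e,\mathrm{char}\,\F)=1$, so the Weil bound gives $|S(\vec c)|\le(e-1)\sqrt q\le(d-1)\sqrt q$. Hence the $\vec c\neq\vec0$ terms contribute at most $(d-1)^{2s}q^{s}$ in absolute value.

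Now take $s=3d/\delta$. Since $d\le q^{1/2-\delta}$ we get $(d-1)^{2s}q^{s}\le q^{(1/2-\delta)\cdot 2s}q^{s}=q^{2s-2\delta s}=q^{2s-6d}$, and therefore
\[
N\ \ge\ q^{2s-d}-q^{2s-6d}\ \ge\ \tfrac12\,q^{2s-d}\ >\ 0,
\]
which proves the existence of a solution.

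For the ``moreover'' part, I bound the number of solutions in which two of the $2s$ values coincide. Fixing a pair of the $2s$ slots and equating the corresponding two values, one substitutes and is left with a system of the same shape with $d$ power-sum constraints in at most $2s-1$ variables; the identical exponential-sum estimate bounds the number of such tuples by $q^{2s-1-d}+(d-1)^{2s-1}q^{(2s-1)/2}\le 2\,q^{2s-1-d}$, the error being $\le q^{(1-\delta)(2s-1)}\le q^{2s-1-d}$ because $(2s-1)\delta=6d-\delta>d$. Summing over the $\binom{2s}{2}=O(s^{2})$ pairs, the number of solutions with all $2s$ values distinct is at least $\tfrac12q^{2s-d}-O(s^{2})q^{2s-1-d}=q^{2s-1-d}\big(\tfrac12q-O(s^{2})\big)$, which is positive once $q>Cs^{2}$ for the absolute constant $C$ in the $O(\cdot)$. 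As $s^{2}=9d^{2}/\delta^{2}\le 9q^{1-2\delta}/\delta^{2}$, this holds whenever $q^{2\delta}>9C/\delta^{2}$, i.e.\ whenever $|\F|$ exceeds a bound depending only on $\delta$, as claimed.

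The one delicate point — and the place I expect any difficulty to lie — is the Weil estimate: if $p=\mathrm{char}(\F)$ divided $e=\deg f_{\vec c}$, then $f_{\vec c}$ could be of the form $g(X)^{p}-g(X)$, forcing $|S(\vec c)|=q$ and destroying the bound. This is not a mere technicality: for $d\ge p$ the statement is genuinely false, since, e.g., when $d=p$ the $j=1$ and $j=p$ equations combine, via $\sum_i x_i^{p}=(\sum_i x_i)^{p}$ in characteristic $p$, to force $r_{p}=r_{1}^{p}$. Thus the natural (and necessary) hypothesis is $\mathrm{char}(\F)>d$, which is compatible with the standing assumption $\mathrm{char}(\F)=\Omega(d!)$ used throughout the reduction; under it every degree $e\le d$ is coprime to $p$ and the argument above applies verbatim.
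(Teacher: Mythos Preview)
Your argument is correct and is essentially the same as the paper's: both count solutions via additive characters and control the nontrivial Fourier contributions by the Weil bound, the paper packaging this as the statement that the $s$-fold convolution $\mu^{(s)}$ is close to uniform (via a lemma of Kopparty--Sudan), while you carry out the circle-method computation directly. You are in fact more careful than the paper on two points: you make explicit the hypothesis $\mathrm{char}(\F)>d$ needed for the univariate Weil bound to apply to every nonzero $\vec c$ (the paper's statement of Weil/Deligne glosses over this), and your distinctness argument---bounding, for each pair of slots, the number of tuples with that pair equal and subtracting---is the right one, whereas the paper's one-line sketch for the ``moreover'' clause is garbled as written. One tiny remark: in the collision count, the case $x_i=y_j$ behaves slightly differently (the two univariate sums cancel, leaving a free variable), so the error term there is $(d-1)^{2s-2}q^{s}$ rather than $(d-1)^{2s-1}q^{(2s-1)/2}$; this is still $\le q^{2s-1-d}$ under your parameters, so your conclusion is unaffected.
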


Let $G$ be a group. An additive character of $G$ is a a function $\chi: G\ra \C$ such that $\chi(x+y)=\chi(x)\chi(y)$ for all $x, y\in G$. We will now define characters over groups of the form $\F^n$, where $\F=\F_{p^{\ell}}$ and $p$ is a prime.

 Let $\omega=e^{2\pi i/p}$ be a primitive $p$th root of unity, and let $Tr:\F_{p^{\ell}}\ra \F_p$ be the Trace operator $Tr(x)=\sum_{i=0}^{\ell-1} x^{p^i}$. Then, an additive character of $\F^n=(\F_{p^{\ell}})^n$ is  $\chi_a(x)=\omega^{Tr(a\cdot x)}$, where $a, x \in \F^n$, and $a\cdot x$ denotes the inner product over $\F^n$.

We will use of some results of \cite{KoppartyS13}. Let $\mu$ be a distribution over vectors in $\F^n$, and denote by $\mu^{(s)}$ the distribution of $x_1+x_2+  \ldots +x_s$, where the  $x_i$'s are picked independently from $\mu$.

\begin{theorem}(\cite{KoppartyS13}, Appendix B) \label{mu-s}Suppose that for some $\beta$,  any non-trivial character $\chi$ of $\F^s$ satisfies
$$|\E_{x\sim \mu} \chi(x)|\leq \beta.$$
Then $$\sum_{x\in \F^n}  \left| \mu^{(s)}(x)-\frac{1}{|\F|^n} \right| \leq \beta^s |\F|^n,$$  
and so $\mu^{(s)}$ is $\beta^s |\F|^n$- close to the uniform distribution over $\F^n$  in statistical distance.
\end{theorem}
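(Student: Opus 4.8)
The plan is to prove this by elementary Fourier analysis on the finite abelian group $G=\F^n$, on which both $\mu$ and $\mu^{(s)}$ live. (I read the hypothesis as being on all non-trivial characters of $G=\F^n$; the ``$\F^s$'' in the statement is a typo, since $\E_{x\sim\mu}\chi(x)$ only makes sense for $\chi$ a character of $\F^n$.) Recall $G$ is self-dual, its characters being the $\chi_a$ for $a\in G$, with $\chi_0\equiv 1$ the trivial one. Fix the normalization $\widehat f(\chi)=\sum_{x\in G}f(x)\overline{\chi(x)}$ for the Fourier transform. Then: (i) for a probability distribution $\mu$ on $G$ we have $\widehat\mu(\chi)=\E_{x\sim\mu}\overline{\chi(x)}$, hence $|\widehat\mu(\chi)|=|\E_{x\sim\mu}\chi(x)|$, while $\widehat\mu(\chi_0)=1$; (ii) convolution becomes multiplication, so $\widehat{\mu^{(s)}}(\chi)=\widehat\mu(\chi)^{s}$ for every $\chi$; and (iii) the uniform distribution $U$ on $G$ satisfies $\widehat U(\chi_0)=1$ and $\widehat U(\chi)=0$ for $\chi\neq\chi_0$. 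All three are routine.

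Now set $g=\mu^{(s)}-U$. By (i)--(iii), $\widehat g(\chi_0)=1-1=0$, and for every non-trivial $\chi$ we have $\widehat g(\chi)=\widehat\mu(\chi)^{s}$, so the hypothesis gives $|\widehat g(\chi)|\le\beta^s$. Fourier inversion $g(x)=\frac{1}{|G|}\sum_{\chi}\widehat g(\chi)\chi(x)$ and $|\chi(x)|=1$ give, for every $x\in G$,
\[
|g(x)|\;\le\;\frac{1}{|G|}\sum_{\chi\neq\chi_0}|\widehat g(\chi)|\;\le\;\frac{|G|-1}{|G|}\,\beta^s .
\]
Therefore $\sum_{x\in G}|g(x)|\le |G|\cdot\norm{g}_\infty\le (|G|-1)\beta^s\le |\F|^n\beta^s$, which is exactly the claimed $\ell_1$ bound; the statistical distance between $\mu^{(s)}$ and $U$ is half of this and hence also at most $|\F|^n\beta^s$.

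I do not expect a genuine obstacle here: the content is entirely in the three bullet facts above, and the rest is a one-line $\ell_\infty\!\to\!\ell_1$ estimate. The only things to be careful about are fixing the Fourier normalization so that the constant comes out as stated, and the (apparent) ``$\F^s$'' vs.\ ``$\F^n$'' typo. I would also remark that the slightly sharper estimate $\sum_{x}|g(x)|\le \sqrt{|\F|^n}\,\beta^s$ follows from Cauchy--Schwarz together with Parseval's identity $\sum_{\chi}|\widehat g(\chi)|^2=|G|\sum_{x}|g(x)|^2$, but the weaker bound above already suffices for its use in Theorem~\ref{thm:existence}.
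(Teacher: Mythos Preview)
Your proof is correct and is the standard Fourier-analytic argument. Note, however, that the paper does not give its own proof of this statement: it is quoted as a black box from \cite{KoppartyS13}, Appendix~B, so there is no in-paper proof to compare against. Your identification of the ``$\F^s$'' vs.\ ``$\F^n$'' typo is also right, and your remark that Cauchy--Schwarz plus Parseval yields the sharper $\sqrt{|\F|^n}\,\beta^s$ bound is a nice addendum, though as you say the cruder $\ell_\infty\!\to\!\ell_1$ estimate already suffices for the application in Theorem~\ref{thm:existence}.
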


Recall the Weil/Deligne bound.

\begin{theorem}(Weil \cite{weil}, Deligne \cite{deligne})\label{deligne}
Let $f(x_1, x_2, \ldots, x_t)$ be a $t$-variate polynomial over $\F$ of degree at most $|\F|^{1/2-\delta}$, for some $\delta>0$. Then, either $\chi(f(x))$ is constant for all $x\in \F$, or $\chi$ satisfies $|\E_{x\in \F}~ \chi(f(x))| \le |\F|^{-\delta}$.
\end{theorem}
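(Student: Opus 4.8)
The plan is to recognize this as the classical Weil estimate for one-variable additive character sums and to reduce the stated form to it, treating the deep analytic input (the Riemann Hypothesis for curves over finite fields, resp.\ Deligne's Weil~II) as a cited black box rather than reproving it.

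First I would reduce to the univariate case. Since $\chi$ is an additive character of the ambient group and $f$ is a polynomial, on the evaluation $x \in \F$ the composite $x \mapsto \chi(f(x))$ has the form $x \mapsto \psi(g(x))$ for a single-variable polynomial $g \in \F[x]$ with $\deg g \le \deg f$ and $\psi$ a (possibly trivial) additive character of $\F$. If $\psi$ is trivial, or more generally if $\psi\circ g$ is constant on $\F$, we are in the first alternative. Otherwise it suffices to establish the Weil bound $\left|\sum_{x\in\F}\psi(g(x))\right| \le (\deg g - 1)\,|\F|^{1/2}$, since then $\left|\E_{x\in\F}\chi(f(x))\right| = |\F|^{-1}\left|\sum_x \psi(g(x))\right| \le (\deg g - 1)\,|\F|^{-1/2} < \deg f \cdot |\F|^{-1/2} \le |\F|^{1/2-\delta}\,|\F|^{-1/2} = |\F|^{-\delta}$, which is the second alternative.

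To obtain the Weil bound, write $\psi = \psi_a$ with $\psi_a(y) = \omega^{\mathrm{Tr}(ay)}$ and $a\neq 0$, and attach to $g$ the Artin--Schreier sheaf $\mathcal{L}_{\psi_a(g)}$ on $\mathbb{A}^1$ (equivalently, the curve $w^p - w = \mathrm{Tr}_{\F/\F_p}(a\,g(x))$). The hypothesis that $\psi\circ g$ is non-constant is, via additive Hilbert~90, exactly the assertion that $a\,g(x) - a\,g(0)$ is not of the shape $h(x)^p - h(x)$ in $\F[x]$; this makes $\mathcal{L}_{\psi_a(g)}$ a nontrivial rank-one lisse sheaf, so $H^0_c = H^2_c = 0$, and the Grothendieck trace formula gives $\sum_{x\in\F}\psi_a(g(x)) = -\,\mathrm{Tr}\bigl(\mathrm{Frob}_{|\F|}\,\big|\,H^1_c(\mathbb{A}^1_{\overline{\F}}, \mathcal{L}_{\psi_a(g)})\bigr)$. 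By the Euler--Poincar\'e / Grothendieck--Ogg--Shafarevich formula the dimension of $H^1_c$ is at most $\deg g - 1$ (tame at all finite points, Swan conductor at $\infty$ equal to $\deg g$ when $p \nmid \deg g$, and at most $\deg g$ in general), and Deligne's Weil~II purity forces every Frobenius eigenvalue on $H^1_c$ to have absolute value $|\F|^{1/2}$; combining these gives $\left|\sum_x\psi_a(g(x))\right| \le (\deg g - 1)|\F|^{1/2}$. (The classical route instead bounds $\left|\#C(\F) - (|\F|+1)\right| \le 2g_C\,|\F|^{1/2}$ using $g_C = O(p\deg g)$ from Riemann--Hurwitz, and then isolates the $\psi_a$-isotypic component under the $\F_p$-translation action on $C$.)

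The hard part is nothing I would actually carry out: it is the cited Riemann Hypothesis for curves \cite{weil}, resp.\ Weil~II \cite{deligne}, which supplies the square-root cancellation and which I would invoke rather than prove. The only genuine bookkeeping left is (i) verifying that ``$\psi\circ g$ non-constant'' is precisely the non-degeneracy condition that kills $H^0_c$ and $H^2_c$ and keeps the Artin--Schreier curve geometrically irreducible, and (ii) bounding the wild ramification at $\infty$ by $\deg g$, which needs a short argument when $p \mid \deg g$ (repeatedly subtract an $h^p - h$ from $g$ to pass to a reduced representative of no larger degree). After that the conclusion is the one-line estimate displayed above, so in the write-up I would present this as a remark identifying the statement with the classical Weil/Deligne bound and cite \cite{weil, deligne} directly.
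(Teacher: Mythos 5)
The paper gives no proof of this theorem at all: it is stated, labeled, and cited to Weil and Deligne, and the surrounding text simply says ``Recall the Weil/Deligne bound.'' Your proposal correctly treats the square-root cancellation as a cited black box, which is exactly the paper's stance, so on the level of what is actually proved you and the paper are doing the same thing; the Artin--Schreier / Grothendieck--Ogg--Shafarevich sketch you add on top is standard and correct as a one-variable argument.

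One wrinkle worth flagging, both in the paper's statement and in your reduction. The hypothesis names a $t$-variate polynomial $f(x_1,\ldots,x_t)$ yet the expectation is typeset as $\E_{x\in\F}$, and in the proof of Theorem~\ref{thm:existence} the theorem is in fact invoked for the genuinely two-variable polynomial $g(x,y)=\sum_i a_i(x^i-y^i)$ with the average running over $(x,y)\in\F^2$. Your step ``reduce to the univariate case'' reads the statement literally as $x\in\F$; if $x$ is really a tuple in $\F^t$, composing with the additive character produces a scalar polynomial that is still $t$-variate, not a one-variable $g\in\F[x]$, so the reduction as written does not cover the case the paper actually uses. It happens to be harmless here because the paper's $g(x,y)$ is additively separable, so $\chi(g(x,y))=\psi(g_1(x))\,\psi(-g_2(y))$ and the bivariate sum factors into two univariate Weil sums, each already $\le|\F|^{-\delta}$; for a general multivariate $f$ one would instead cite the multivariable Weil/Deligne estimate (which your sheaf-theoretic sketch does cover once you replace $\mathbb{A}^1$ by $\mathbb{A}^t$). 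If you intend only to cite, none of this matters; if you intend the reduction to be part of the argument, you should either fix the domain to $\F$ (matching the statement as typeset) or add the separability observation needed to justify the step in the form the paper uses it.
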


\begin{proof}[Proof of Theorem \ref{thm:existence}]
For $x, y\in \F$, let $v_{x, y}=(x-y, x^2-y^2, \ldots, x^d-y^d)\in \F^d$.
Let $\mu$ be the distribution of $v_{x, y}$ when  $x, y$ are distributed independently and uniformly in $\F$. 
Note that for a nontrivial character $\chi_{a}$ with  $a\in (\F^*)^d$, we have 

$$\E_{v_{x,y} \sim \mu} ~[\chi_a(v_{x, y})]=\E[ \omega^{a \cdot v_{x, y}}]=\E_{x, y}[ \omega^{g(x, y)}]=\E_{x, y}[\chi_a(g(x, y))]$$ 
for the polynomial $g(x, y)=\sum_{i=1}^{d}~ a_i~ (x^i-y^i)$ of degree $d \le |\F|^{1/2-\delta}$.

By Deligne's Theorem \ref{deligne}, we have that 
\begin{equation}\label{beta}|\E_{v_{x,y} \sim \mu}~[ \chi_a(v_{x, y})]|=|\E[\chi_a (g(x, y))]| \le |\F|^{-\delta}.
\end{equation}

Let $\mu^{(s)}$ be the distribution of $S=\sum_{i=1}^s v_{x_i, y_i}$ when we pick $s$ vectors  $v_{x_1, y_1}, v_{x_2, y_2}, \ldots, v_{x_s, y_s}\in \F^d$ independently, according to $\mu$. Note that $\mu^{(s)}$ is precisely the distribution of 
$(\sum x_i-\sum y_i, \sum x_i^2-\sum y_i^2, \ldots, \sum x_i^d-\sum y_i^d)$, when we pick the $x_i$'s and $y_i$'s  independently and uniformly in $\F$.

By Theorem \ref{mu-s} and Equation (\ref{beta}), it follows that $$\sum_{v\in \F^d} \left| \mu^{(s)}(v)-\frac{1}{|\F|^d} \right| \leq (|\F|^{-\delta})^s |\F|^d=|\F|^{-\delta s+d}.$$

Picking $s=3d/\delta$, we get that $\mu^{(s)}((r_1, r_2, \ldots, r_d)) \geq |\F|^{-d}-|\F|^{-2d}>0$.

We can also ensure that all $x_i$'s and $y_i$'s are distinct, by noticing that the $\Pr[|\{x_1, x_2, \ldots, x_s, y_1, \ldots, y_s\}|=2s]=\prod_{i=0}^{2s-1} \frac{1}{|\F|-i}< (|\F|-2s)^{-2s}<{|\F|}^{-2d}< |\F|^{-d}-|\F|^{-2d}$ for $|\F|$ being sufficiently large as a function of $\delta$.

\end{proof}

\section{Reduction from 1-in-3 SAT to \mss{d} over $\F_{p^\ell}$}\label{subsec:general-fin-fields-reduction}
We will choose prime $p=O(d!)$ and  
$\ell = \poly(n)$ for this reduction.
To generate the field $\F_q = \F_{p^\ell}$, we consider an irreducible polynomial over $\F_p$ of degree $\ell$. 
Let $\gamma$ be a root of this polynomial in the algebraic closure of $\mathbb{F}_p$. Every element of $\F_q$ can then be generated as a linear combination of  $1, \gamma, \cdots, \gamma^{\ell-2}, \gamma^{\ell-1}$ over $\F_p$ (We refer to \cite{lidl97} for a general treatment of finite fields.).
Then, for $v=\sum v_i \gamma^i\in \F_q$, we will abuse notation and view $v$ as the vector $(v_1, v_2, \ldots, v_{\ell-1})$.
We define an analogue of the notion of ``magnitude'' used in the previous sections. For $v \in \F_q$, define  $\lvert v \rvert$ to be the largest non-zero index $i \in [\ell]$ in the vector representation of $v$.  
Note that this definition of magnitude satisfies the property that $\lvert u + v \rvert \le \max(\lvert u \rvert,\lvert v \rvert)$ for every $u, v \in \mathbb{F}_q$, and thus also satisfies that the triangle inequality. 

We now sketch a proof of the reduction, which follows analogously to the proof over the rational field, with some small modifications, as described next. 

An instance of \mss{d} consists of a tuple $\langle A, k, B_1, \dots, B_d \rangle$. 
Similar to the rational field reduction, each variable $(z_t, \overline{z_t})$ is mapped to $2^{d+1}-2$ distinct elements $\{a_{t}\} \cup \{ x_{t,i} \mid i \in [2^{d}-2] \}$ (corresponding to $z_t$) and $\{b_{t}\} \cup \{ y_{t,i} \mid i \in [2^{d}-2] \}$ (corresponding to $\overline{z}_t$). Let $\{a'_t, b'_t: t \in [n]\}$ be the elements of $\F_q$ produced by the reduction of 1-in-3 SAT to Subset-Sum defined as follows:
\begin{itemize}
\item The vector representations of $a'_t$ and $b'_t$ consist of two parts: a clause region consisting of the leftmost $m$ coordinates and a variable region consisting of the next  $n$ indices.
\item In the variable region, $a'_t$ and $b'_t$ have a $1$ at the $t$-th index and $0$'s at the other indices. Denote that by $(a_t)^{'v}$.
\item In the clause region, for every $j \in [m]$, $a'_t$ (resp. $b'_t$) has a $1$ at the $j$th location if $z_t$ (resp. $\overline{z_t}$) appears in clause $j$, and a $0$ otherwise. We denote the clause part of $a'_t$ by $(a_t)^{'c}$.
\item $ a_t' = ( a_t^{'c}, a_t^{'v}, 0^{\ell-m-n})$. Similarly for $b_t'$. 
\item The target $B$ is set to the element whose field representation is the vector which takes $1$'s in the first $m+n$ indices and 0 everywhere else. i.e. 
$B = (1^{m}, 1^n, 0^{\ell-m-n})$.
\end{itemize}

Define,
\[a_t = (0^{\nu}, a_t^{'c}, a_t^{'v}, 0^{\ell-\nu-m-n}) ~~\text{ and, }~~ b_t = (0^{\nu}, b_t^{'c}, b_t^{'v}, 0^{\ell-\nu-m-n}).\] 
For each $t \in [n]$, we will explicitly construct two sets of $2^{d}-2$ {\em auxiliary variables}, 
$X_t = \{ x_{t, i} \mid i \in [2^{d}-2] \}$ and $Y_t = \{ y_{t, i} \mid i \in  [2^{d}-2] \}$ which satisfy the following properties:
\begin{enumerate}[Property (1):]
\item\label{fp1} $\displaystyle\sum\limits_{x \in X_t} x = \displaystyle\sum\limits_{y \in Y_t} y = 0 $\\
\item\label{fp2} $\displaystyle\sum\limits_{x \in X_t} x^k - \displaystyle\sum\limits_{y \in Y_t} y^k  = b_t^k - a_t^k \text{ for every } k \in \{2,\ldots,d\}$.
\item\label{fp3} Additionally, an appropriately scaled set of auxiliary variables, can be shown to  satisfy the bimodal property. Namely, 
for any subset $S \subseteq  \bigcup\limits_{t\in[n]} X_t \cup Y_t$, and a scaling factor $K =\gamma^h $, where $h = \poly(n, d!)$,  
either
$$ \left| \displaystyle\sum\limits_{s \in S} \gamma^h s \right| >  h+n^4 
~~\text{or}~~ 
\left| \displaystyle\sum\limits_{s \in S} \gamma^h s \right| < h + \nu.$$
\end{enumerate}

Define the set $A = \displaystyle\bigcup\limits_{t\in[n]} \{ a_t \} \cup \{ b_t \} \cup X_t \cup Y_t . $
The targets $B_1, \dots, B_d$ are defined as follows:
\[B_1= ( 0^{\nu},  1^m, 1^n, 0^{\ell-\nu-m-n}),\] 
\[B_k = \displaystyle\sum\limits_{t = 1}^n a_t^k + \displaystyle\sum\limits_{t = 1}^n\displaystyle\sum\limits_{x \in X_t} x^k \text{   for every } k \in \{2, \dots, d \} \] 

Note that $a_t = \gamma^\nu \cdot a_t'$. Similarly, $b_t =\gamma^\nu \cdot b_t' $ and $B_1 =\gamma^\nu \cdot B$.

We now define a scaled version of the subset sum instance over the finite fields.  Let $h = \poly(n, d!)$ 
and let 
$K = \gamma^{h}$,  be the scaling factor 
. Scaling of all the elements of the instance of \mss{d} is roughly equivalent to scaling the rational solutions by a large power of $10$. The scaling of all the variables maintains Properties~\ref{fp1}, and \ref{fp2} of the auxiliary variables and also satisfies the solution to achieve Property~\ref{fp3}. 

Let $A_{h}= \{ \gamma^h a \mid a \in A \}$. 
$B_{k, h} = \gamma^{kh} B_k$    for every  $k \in \{2, \dots, d \}$. 
The following lemma shows that the \mss{d} instance and its scaled version as defined above are equivalent. 
\begin{lemma}
\label{lemma:scaling-ff}
Let $h = \poly(n, d!)$ 
and $K = \gamma^{h}$ be the scaling factor. There exists a subset $S \subseteq A$ such that for every $k \in [d]$
\[ \displaystyle\sum\limits_{s \in S} s^k = B_k .\]
if and only if there exists a subset $S_{h} \subseteq A_{h}$ such that for every $k \in [d]$
\[ \displaystyle\sum\limits_{s \in S_{h}} s^k = B_{k, h} .\]
\end{lemma}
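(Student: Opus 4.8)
The plan is to show that scaling by $K=\gamma^h$ sets up a bijection between subsets $S\subseteq A$ and subsets $S_h\subseteq A_h$ — namely $S_h = \{\gamma^h a \mid a\in S\}$ — and that this bijection preserves all $d$ of the moment equations simultaneously. Since $A_h$ is by definition the image of $A$ under multiplication by the fixed invertible element $\gamma^h\in\F_q$, and $\gamma^h\neq 0$, the map $a\mapsto \gamma^h a$ is a bijection from $A$ onto $A_h$; hence $|S| = |S_h|$ for corresponding subsets, so the size constraint $|S|=k$ is respected. The only thing to check is the effect of the scaling on the $k$-th power sums.

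First I would fix a subset $S\subseteq A$, let $S_h=\gamma^h S$, and compute, for each $k\in[d]$,
\[
\sum_{s\in S_h} s^k = \sum_{a\in S} (\gamma^h a)^k = \gamma^{kh}\sum_{a\in S} a^k.
\]
Thus $\sum_{s\in S} s^k = B_k$ holds if and only if $\gamma^{kh}\sum_{s\in S} s^k = \gamma^{kh} B_k$, which by the display above is exactly $\sum_{s\in S_h} s^k = B_{k,h}$, using the definition $B_{k,h}=\gamma^{kh}B_k$ for $k\ge 2$ and (implicitly) $B_{1,h}=\gamma^h B_1$. Multiplication by $\gamma^{kh}$ is invertible in $\F_q$, so the equivalence runs in both directions; this gives one direction of the ``if and only if,'' and applying the same argument to the inverse scaling factor $\gamma^{-h}$ (equivalently, starting from $S_h\subseteq A_h$ and setting $S=\gamma^{-h}S_h$) gives the converse. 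This completes the proof.

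There is essentially no obstacle here — the lemma is a bookkeeping step whose content is that the $\mss{d}$ constraints are homogeneous of distinct degrees $1,2,\dots,d$ in each variable, so a uniform monomial rescaling $a\mapsto\gamma^h a$ of the inputs induces the corresponding rescaling $B_k\mapsto\gamma^{kh}B_k$ of the targets. The one point worth stating explicitly is that $\gamma$ (a root of the chosen irreducible polynomial) is nonzero and hence $\gamma^h$ is a unit in $\F_q$, so the correspondence $S\leftrightarrow S_h$ is genuinely a bijection and no solutions are created or destroyed. The reason the lemma is needed at all is downstream: Property~\ref{fp3} (the bimodality / ``huge-or-tiny'' dichotomy over $\F_q$) is naturally phrased for the scaled variables $\gamma^h s$, so one wants to be able to pass freely between the unscaled instance $\langle A,k,B_1,\dots,B_d\rangle$ and its scaled counterpart $\langle A_h,k,B_{1,h},\dots,B_{d,h}\rangle$ when invoking soundness.
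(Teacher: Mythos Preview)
Your proof is correct and matches the paper's approach exactly: the paper simply remarks that the lemma is straightforward because the moment equations in \mss{d} are homogeneous, so scaling all variables and targets does not change the problem. You have just spelled out the details of that homogeneity argument (the bijection $S\leftrightarrow \gamma^h S$ and the computation $\sum_{s\in S_h} s^k = \gamma^{kh}\sum_{a\in S} a^k$), which is precisely what the paper leaves implicit.
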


The proof of Lemmas~\ref{lemma:scaling-ff} is straightforward and follows from the fact that the moment equations in \mss{d} are homogeneous and therefore scaling all the variables and the targets does not change the problem. 

We can then state the analogous statement of Lemma~\ref{lemma:reduction}, which implies the NP-hardness of \mss{d} over $\F_q$.
\begin{lemma}
\label{lemma:reduction-ff}
There exists a satisfying assignment to a 3-SAT instance $\phi(z_1, \dots, z_n)$ if and only if there exists a subset $S\subseteq A_{h}$ such that for every $k \in [d]$, 
\[ \displaystyle\sum\limits_{s \in S} s^k = B_{k, h} .\]
\end{lemma}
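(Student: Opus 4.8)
The plan is to mirror the proof of Lemma~\ref{lemma:reduction} essentially verbatim, with powers of $10$ replaced by powers of $\gamma$ and the integer magnitude replaced by the index-valued magnitude $\lvert\cdot\rvert$ of Section~\ref{subsec:general-fin-fields-reduction}, working throughout with the \emph{scaled} instance $\langle A_h, k, B_{1,h},\dots,B_{d,h}\rangle$ (this is where Property~\ref{fp3} is available, and is equivalent to the unscaled instance by Lemma~\ref{lemma:scaling-ff}). I would first record that the finite-field analogues of Properties~\ref{fp1}, \ref{fp2}, \ref{fp3} and of the distinctness statement hold for the (scaled) output of the adapted \generator; these are proved exactly as Lemmas~\ref{lemma:condition12} and~\ref{lemma:soundness-structural} and the helper Propositions~\ref{prop:condition34}--\ref{prop:all-distinct}, the only change being that the ultrametric inequality $\lvert u+v\rvert\le\max(\lvert u\rvert,\lvert v\rvert)$ replaces the triangle inequality (which, if anything, simplifies the magnitude bookkeeping since there are no ``carries'' in the polynomial basis), and one must check that $\ell=\poly(n,d!)$ is chosen large enough to host every $\gamma$-exponent occurring in the construction (the largest being of order $(d-1)!\,\nu_n+h\le\poly(n,d!)$), so that no unintended reduction modulo the defining polynomial occurs. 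In particular the ``large/small component'' picture of Figure~\ref{le:k_3_red_fig} and the disjointness of the coordinate regions are preserved, which gives $\lvert A_h\rvert=N=n(2^{d+1}-2)$.

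For completeness, given a satisfying assignment $z$ to $\phi$, I take $S=\bigcup_{t:\,z_t=1}\big(\{\gamma^h a_t\}\cup\gamma^h X_t\big)\cup\bigcup_{t:\,z_t=0}\big(\{\gamma^h b_t\}\cup\gamma^h Y_t\big)\subseteq A_h$, of size $n(2^d-1)=N/2$ by distinctness. For $k\in\{2,\dots,d\}$, scaling Property~\ref{fp2} by $\gamma^{kh}$ gives $\sum_{x'\in\gamma^h X_t}(x')^k-\sum_{y'\in\gamma^h Y_t}(y')^k=(\gamma^h b_t)^k-(\gamma^h a_t)^k$; summing over the $t$ with $z_t=0$ and combining with the $\gamma^h a_t,\gamma^h b_t$ terms exactly as in Equations~(\ref{eqn:completeness})--(\ref{eqn:props}) yields $\sum_{s\in S}s^k=\sum_{t=1}^n\big((\gamma^h a_t)^k+\sum_{x'\in\gamma^h X_t}(x')^k\big)=B_{k,h}$. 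For $k=1$, Property~\ref{fp1} kills every auxiliary term, leaving $\sum_{s\in S}s=\sum_{t:\,z_t=1}\gamma^h a_t+\sum_{t:\,z_t=0}\gamma^h b_t=\gamma^{h+\nu}\big(\sum_{t:\,z_t=1}a'_t+\sum_{t:\,z_t=0}b'_t\big)=\gamma^{h+\nu}B=B_{1,h}$, where the middle equality is completeness of the $1$-in-$3$-SAT $\to$ Subset-Sum reduction read over $\F_q$.

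For soundness, let $S\subseteq A_h$ satisfy $\sum_{s\in S}s^k=B_{k,h}$ for all $k\in[d]$, and let $D_h$ be the set of all scaled auxiliary variables, so $S\setminus D_h\subseteq\{\gamma^h a_t,\gamma^h b_t\}$. I would prove the analogue of Proposition~\ref{prop:soundness}: setting $w:=B_{1,h}-\sum_{s\in S\setminus D_h}s$, every element of $\{B_{1,h}\}\cup(S\setminus D_h)$ equals $\gamma^{h+\nu}$ times an element supported on indices $1,\dots,m+n$, hence has its lowest $h+\nu$ coordinates all zero, so either $w=0$ or $\lvert w\rvert>h+\nu$; on the other hand, by the ultrametric inequality and the bounds $\lvert\gamma^h a_t\rvert,\lvert\gamma^h b_t\rvert,\lvert B_{1,h}\rvert\le h+\nu+m+n$ we get $\lvert w\rvert\le h+\nu+m+n<h+n^4$ (using $\nu=n^2$, $m=O(n)$). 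Since $\sum_{s\in S\cap D_h}s=w$ (from the $k=1$ equation), Property~\ref{fp3} forces $\lvert w\rvert>h+n^4$ or $\lvert w\rvert<h+\nu$; as $w\neq0$ would place $\lvert w\rvert$ strictly between these, we conclude $w=0$. Then $\sum_{s\in S\setminus D_h}s=B_{1,h}=\gamma^{h+\nu}B$, and dividing by the unit $\gamma^{h+\nu}$ shows the corresponding $a'_t,b'_t$ sum to $B$ in $\F_q$; soundness of the $\F_q$ Subset-Sum reduction then produces a satisfying assignment of $\phi$.

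The main obstacle I anticipate lies not in this lemma but in carefully verifying the two ingredients it leans on over $\F_q$: first, that the $1$-in-$3$-SAT $\to$ Subset-Sum reduction is still complete and \emph{sound} over $\F_q$ — a subset whose coordinate-$j$ clause-sum (at most $3$) or coordinate-$t$ variable-sum (at most $2$) equals the target value $1$ must realize it exactly, which requires the prime $p=\Omega(d!)$ to exceed these small constants so that no wrap-around modulo $p$ is possible; and second, re-deriving the quantitative bounds of Propositions~\ref{prop:bound-on-alphas}--\ref{prop:all-distinct} in the index-valued magnitude and confirming $\ell=\poly(n,d!)$ is large enough to accommodate all $\gamma$-exponents. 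Everything else in the argument transfers mechanically from the rational-field case.
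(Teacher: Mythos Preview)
Your proposal is correct and takes essentially the same approach as the paper, which merely sketches that ``all the steps of the proof over the rationals can be carried over'' once $p$ and $\ell$ are large enough to prevent wrap-around; you have simply filled in those steps explicitly, correctly identifying the two places where care is needed (soundness of the base Subset-Sum reduction over $\F_p$ with $p>3$, and ensuring $\ell$ accommodates all $\gamma$-exponents). Your use of the ultrametric inequality for the index-magnitude and the sandwich $h+\nu<\lvert w\rvert\le h+\nu+m+n<h+n^4$ to invoke Property~\ref{fp3} is exactly the right translation of Proposition~\ref{prop:soundness}.
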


The proof of Lemma~\ref{lemma:reduction-ff} follows from the properties of the auxiliary variables stated above and all the steps of the proof over the rationals can be carried over here,  because we chose to scale the instance by a large enough power of $\gamma$, and we chose $p$ and $\ell$ large enough, in order to ensure that there is no wrapping around when we add terms with large magnitudes. 

 %contains input to existence-finite-fields-2
\section{Conclusion}

The main open question that comes up from this work is to explicitly and efficiently construct degree-$d$ PTE solutions of size subexponential in $d$ (\Cref{main-question}). It would also be very interesting to prove analogous NP-hadness results for Bounded Distance Decoding of Reed-Solomon codes in the case where preprocessing is allowed. Finally, our NP-hardness results for Reed-Solomon codes apply to the case where the field size is exponential in the block length $N$; it would be very interesting to prove analogous NP-hardness results for smaller fields.
\section*{Acknowledgements}\label{sec:ack}

We would like to thank Madhu Sudan for very helpful discussions that led to the proof of existence of inhomogeneous PTE solutions over finite fields. We would also like to thank Venkatesan Guruswami and Swastik Kopparty for helpful comments and conversations. Finally, we would like to thank Andrew Sutherland and Colin Ingalls for helpful correspondence.

\bibliographystyle{alpha}
\bibliography{RS-NPHard}

\end{document}